\renewcommand*\env@matrix[1][*\c@MaxMatrixCols c]{%
  \hskip -\arraycolsep
  \let\@ifnextchar\new@ifnextchar
  \array{#1}}
\tikzstyle{graphlabel}=[font={\scriptsize\boldmath}, inner sep=1mm, outer sep=-1.8mm, scale=0.8]
\tikzstyle{graphlines}=[dashed,color=gray]
\tikzstyle{wn}=[font={\scriptsize\boldmath}, inner sep=1mm, outer sep=-1.8mm, scale=\spiderscaling, tikzit shape=circle, draw=black, fill=black!01, tikzit fill=white, tikzit draw=black, shape=circle, tikzit category=GLA]
\tikzstyle{bn}=[font={\scriptsize\boldmath}, inner sep=1mm, outer sep=-1.8mm, scale=\spiderscaling, tikzit shape=circle, draw=black, fill={rgb,255: red,100; green,100; blue,100}, tikzit draw=black, shape=circle, tikzit category=GLA]
\tikzstyle{gn}=[style=bn, tikzit fill=gray]
\tikzstyle{rn}=[style=wn, tikzit fill=white]
\tikzstyle{grn}=[style=gwn, tikzit fill=white]
\tikzstyle{ggn}=[style=gbn, tikzit fill={zx_grey}]
\tikzstyle{gwn}=[shading=whiteballshading, line width=1pt, inner sep=1mm, outer sep=-1.8mm, scale=\spiderscaling, tikzit shape=circle, draw=black, fill=white, tikzit fill=white, tikzit draw=black, shape=circle, tikzit category=GLA]
\tikzstyle{gbn}=[shading=blackballshading, font={\scriptsize\boldmath}, line width=1pt, inner sep=1mm, outer sep=-1.8mm, scale=\spiderscaling, tikzit shape=circle, draw=black, fill={rgb,255: red,100; green,100; blue,100}, tikzit draw=black, shape=circle, tikzit category=GLA]
\tikzstyle{gvoidshape}=[shading=hadballshading, line width=1pt, inner sep=1mm, outer sep=-1.8mm, scale=\spiderscaling, tikzit shape=circle, draw=black, fill={zx_grey}, tikzit fill=white, tikzit draw=black, shape=circle, tikzit category=GLA]
\tikzstyle{voidshape}=[font={\scriptsize\boldmath},inner sep=1mm, outer sep=-1.8mm, scale=\spiderscaling, tikzit shape=circle, draw=black, fill={zx_grey}, tikzit draw=black, shape=circle, tikzit category=GLA]
\tikzstyle{void}=[style=voidtemp, tikzit shape=circle, tikzit fill=green, tikzit draw=black, tikzit category=GLA]
\tikzstyle{gvoid}=[style=gvoidtemp, tikzit shape=circle, tikzit fill=green, tikzit draw=black, tikzit category=GLA]
\tikzstyle{had}=[fill={zx_grey}, draw=black, shape=rectangle, tikzit category=ZX, tikzit draw=black, minimum size=5pt, inner sep=1.5pt, scale=\boxscaling, font={\scriptsize\boldmath}]
\tikzstyle{ghad}=[shading=hadballshading, fill={zx_grey}, draw=black, shape=rectangle, tikzit category=ZX, tikzit draw=black, thick, minimum size=5pt, inner sep=1.5pt,scale=\boxscaling, font={\scriptsize\boldmath}]
\tikzstyle{wphase}=[
\tikzstyle{bphase}=[node on layer= frontlayer, %always put on front
\tikzstyle{mphase}=[
\tikzstyle{gphase}=[style=bphase,tikzit category=ZX, tikzit fill=black, tikzit draw=white]
\tikzstyle{rphase}=[style=wphase,tikzit category=ZX, tikzit fill=white, tikzit draw=white]
\tikzstyle{scalar}=[mphase]
\tikzstyle{lmat}=[shape=signal, signal to=west, signal from=east, fill={zx_grey}, draw=black, minimum height=6pt, inner sep=1pt, font={\scriptsize\boldmath}, tikzit fill=gray, tikzit category=GLA, anchor=center, outer sep=-.1cm, signal pointer angle=\arrowangle,scale=\arrowscaling]
\tikzstyle{rmat}=[shape=signal, signal to=east, signal from=west, fill={zx_grey}, draw=black, minimum height=6pt, inner sep=1pt, font={\scriptsize\boldmath}, tikzit fill=gray, tikzit category=GLA, anchor=center, outer sep=-.1cm, signal pointer angle=\arrowangle,scale=\arrowscaling]
\tikzstyle{dmat}=[shape=signal, signal to=east, signal from=west, fill={zx_grey}, draw=black, minimum height=6pt, inner sep=1pt, font={\scriptsize\boldmath}, tikzit fill=gray, tikzit category=GLA, rotate=270, anchor=center, outer sep=-.1cm, signal pointer angle=\arrowangle,scale=\arrowscaling]
\tikzstyle{umat}=[shape=signal, signal to=east, signal from=west, fill={zx_grey}, draw=black, minimum height=6pt, inner sep=1pt, font={\scriptsize\boldmath}, tikzit fill=gray, tikzit category=GLA, rotate=90, anchor=center, outer sep=-.1cm, signal pointer angle=\arrowangle,scale=\arrowscaling]
\tikzstyle{lmatt}=[shading=matballshading,shape=signal, signal to=west, signal from=east, fill={zx_grey}, draw=black, minimum height=6pt, inner sep=1pt, font={\scriptsize\boldmath}, tikzit fill=gray, tikzit category=GLA, anchor=center, outer sep=-.1cm, thick, signal pointer angle=\arrowangle,scale=\arrowscaling]
\tikzstyle{rmatt}=[shading=matballshading,shape=signal, signal to=east, signal from=west, fill={zx_grey}, draw=black, minimum height=6pt, inner sep=1pt, font={\scriptsize\boldmath}, tikzit fill=gray, tikzit category=GLA, anchor=center, outer sep=-.1cm, thick, signal pointer angle=\arrowangle,scale=\arrowscaling]
\tikzstyle{dmatt}=[shading=matballshading,shape=signal, signal to=east, signal from=west, fill={zx_grey}, draw=black, minimum height=6pt, inner sep=1pt, font={\scriptsize\boldmath}, tikzit fill=gray, tikzit category=GLA, rotate=270, anchor=center, outer sep=-.1cm, thick, signal pointer angle=\arrowangle,scale=\arrowscaling]
\tikzstyle{umatt}=[shading=matballshading, outer color={zx_grey_thick}, inner color={zx_grey},shape=signal, signal to=east, signal from=west, fill={zx_grey}, draw=black, minimum height=6pt, inner sep=1pt, font={\scriptsize\boldmath}, tikzit fill=gray, tikzit category=GLA, rotate=90, anchor=center, outer sep=-.1cm, thick, signal pointer angle=\arrowangle,scale=\arrowscaling]
\tikzstyle{graph_vertex}=[fill=black, draw=black, shape=circle, tikzit category=mbqc, minimum size=2.4mm, inner sep=.8mm]
\tikzstyle{graph_weight}=[fill=white, draw=none, shape=rectangle, tikzit category=mbqc, inner sep=2pt, scale=.8]
\tikzstyle{graph_state}=[fill=yellow, draw=none, shape=rectangle, tikzit category=ZX, rounded corners=1.3mm, minimum height=1.7cm, minimum width=1.3cm, opacity=.7, text opacity=1]
\tikzstyle{box}=[fill=white, draw=black, shape=rectangle, inner sep=2.5pt]
\tikzstyle{wirelable}=[node on layer= labeltextlayer, font={\scriptsize},scale=.9,text=dark_grey, fill=none, inner sep=1pt]
\tikzstyle{tightwirelable}=[node on layer= labeltextlayer, text=dark_grey, fill=white, inner sep=0pt]
\tikzstyle{gather}=[scale=.8,shading=gatherballshading, outer color={zx_grey_thick}, inner color={zx_grey}, fill={zx_grey}, draw=black, tikzit category=scal, rounded corners=0.8mm, regular polygon, regular polygon sides=3, shape border rotate=-90, inner sep=1.6pt, anchor=center, outer sep=-.1cm, line width=0.75 pt]
\tikzstyle{divide}=[scale=.8,shading=divideballshading, outer color={zx_grey_thick}, inner color={zx_grey}, regular polygon, regular polygon sides=3, shape border rotate=90, draw=black, fill={zx_grey}, inner sep=1.6pt, tikzit category=scal, rounded corners=0.8mm, anchor=center, outer sep=-.1cm,line width=0.75 pt]
\tikzstyle{plus}=[inner sep=2.5pt, draw, circle, path picture={ \draw[black](path picture bounding box.east) -- (path picture bounding box.west) (path picture bounding box.south) -- (path picture bounding box.north);}, fill=white]
\tikzstyle{dot}=[thick, fill=black, circle, scale=1, inner sep=.05cm]
\tikzstyle{origin}=[thick, fill=black, circle, scale=.5, inner sep=.05cm]
\tikzstyle{dash_edge}=[-, dashed]
\tikzstyle{hadamard_edge}=[-, dashed, dash pattern=on 2pt off 1.5pt, thick, draw=blue]
\tikzstyle{brace edge}=[-, tikzit draw=blue, decorate, decoration={brace,amplitude=1mm,raise=-1mm}]
\tikzstyle{ultra thin}=[-, line width=0.03 pt]
\tikzstyle{thin}=[-, line width=0.5 pt]
\tikzstyle{thick}=[-, line width=.8pt, tikzit draw=red]
\tikzstyle{very thick}=[-, line width=1.05pt, tikzit draw=red]
\tikzstyle{multiplexer}=[-, fill={rgb,255: red,179; green,179; blue,179}]
\tikzstyle{gmultiplexer}=[-, line width=1pt, tikzit draw=red, fill={rgb,255: red,179; green,179; blue,179}]
\tikzstyle{background}=[-, fill={rgb,255: red,220; green,220; blue,220}, draw=none, tikzit draw={rgb,255: red,128; green,128; blue,128}, on layer= backlayer]
\tikzstyle{white}=[-, fill=white, draw=none, tikzit fill=white]
\tikzstyle{blue_line}=[-, draw=blue]
\tikzstyle{bbox}=[-, fill=white]
\tikzstyle{shadedpolygon}=[rounded corners=0.2mm, opacity=1, fill=gray, fill opacity=0.5]
\tikzstyle{arrow}=[->]
\tikzstyle{diamant}=[diamond, fill=couleurdefond, draw=black]
\tikzstyle{newe}=[rectangle, fill={gray!15}, draw=black, tikzit shape=rectangle, inner sep=0.2em]
\tikzstyle{cercle}=[circle, fill=couleurdefond, draw=black]
\tikzstyle{scercle}=[circle, fill=couleurdefond, draw=black, tikzit fill=white, inner sep=0.1em]
\tikzstyle{cartouche}=[rounded rectangle, fill=couleurdefond, draw=black]
\tikzstyle{neg}=[rounded rectangle, fill=couleurdefond, draw=black, execute at end node={$\neg$}]
\tikzstyle{sneg}=[rounded rectangle, fill=couleurdefond, draw=black, execute at end node={$\neg$}, scale=0.8]
\tikzstyle{negserie}=[rounded rectangle, fill=couleurdefond, draw=black, execute at end node={\footnotesize$\star\star$}]
\tikzstyle{diagrammevide}=[rectangle, fill=couleurdefond, draw=black, inner sep=1.25em, borddiagrammevide, tikzit shape=rectangle]
\tikzstyle{mdiagrammevide}=[rectangle, fill=couleurdefond, draw=black, inner sep=0.75em, sborddiagrammevide, tikzit shape=rectangle]
\tikzstyle{msdiagrammevide}=[rectangle, fill=couleurdefond, draw=black, inner sep=0.7em, msborddiagrammevide, tikzit shape=rectangle]
\tikzstyle{sdiagrammevide}=[rectangle, fill=couleurdefond, draw=black, inner sep=0.5em, sborddiagrammevide, tikzit shape=rectangle]
\tikzstyle{xsdiagrammevide}=[rectangle, fill=couleurdefond, draw=black, inner sep=0.4em, xsborddiagrammevide, tikzit shape=rectangle]
\tikzstyle{bs}=[shape=beam, fill=couleurdefond, draw, inner sep=0.25em, thick, tikzit fill=white]
\tikzstyle{sbs}=[shape=beam, fill=couleurdefond, draw, inner sep=0.2em, thick, tikzit fill=white]
\tikzstyle{npbs}=[shape=beam, horizontal fill={{npbsmoitiebasse}{npbsmoitiehaute}}, draw, inner sep=0.25em, thick, tikzit fill={rgb,255: red,128; green,128; blue,128}]
\tikzstyle{npbsalenvers}=[shape=beam, horizontal fill={{npbsmoitiehaute}{npbsmoitiebasse}}, draw, inner sep=0.25em, thick, tikzit fill={rgb,255: red,128; green,128; blue,128}]
\tikzstyle{snpbs}=[shape=beam, horizontal fill={{npbsmoitiebasse}{npbsmoitiehaute}}, draw, inner sep=0.2em, thick, tikzit fill={rgb,255: red,128; green,128; blue,128}]
\tikzstyle{snpbsalenvers}=[shape=beam, horizontal fill={{npbsmoitiehaute}{npbsmoitiebasse}}, draw, inner sep=0.2em, thick, tikzit fill={rgb,255: red,128; green,128; blue,128}]
\tikzstyle{cnot}=[shape=circle, draw, path picture={ 
\tikzstyle{thickcnot}=[shape=circle, draw, thick, path picture={ 
\tikzstyle{boite22}=[fill=white, draw=black, shape=rectangle, minimum height=1cm, minimum width=0.5cm]
\tikzstyle{boite15}=[fill=white, draw=black, shape=rectangle, minimum height=0.7cm, minimum width=0.5cm]
\tikzstyle{boite2}=[fill=white, draw=black, shape=rectangle, minimum height=0cm, minimum width=0cm]
\tikzstyle{snegpotentiel}=[fill=couleurdefond, draw=black, shape=rounded rectangle, inner sep=0.25em, tikzit fill={rgb,255: red,191; green,191; blue,191}, execute at end node={\footnotesize$\star$}]
\tikzstyle{negpotentiel}=[fill=couleurdefond, draw=black, shape=rounded rectangle, tikzit fill={rgb,255: red,191; green,191; blue,191}, execute at end node={$\star$}]
\tikzstyle{token}=[fill=black, draw=black, shape=circle, inner sep=0.1em]
\tikzstyle{whitetoken}=[fill=white, draw=black, shape=circle, inner sep=0.1em]
\tikzstyle{boitePBS}=[fill=white, draw=gray, thick, shape=rectangle, rounded corners=3pt, minimum height=0.6cm, inner sep=0.1em, minimum width=0.5cm]
\tikzstyle{boitePBS2}=[fill=white, draw=gray, thick, shape=rectangle, rounded corners=3pt, minimum height=0.55cm, inner sep=0.1em, minimum width=0.5cm]
\tikzstyle{sgene}=[fill={gray!30}, draw=black, shape=rounded rectangle, rounded rectangle east arc=0pt, minimum height=0.5cm, inner sep=0em, minimum width=0cm, scale=0.8]
\tikzstyle{sdetector}=[fill={gray!30}, draw=black, shape=rounded rectangle, rounded rectangle west arc=0pt, minimum height=0.5cm, inner sep=0em, minimum width=0cm, scale=0.8]
\tikzstyle{xsgene}=[fill={gray!30}, draw=black, shape=rounded rectangle, rounded rectangle east arc=0pt, minimum height=0.5cm, inner sep=0em, minimum width=0cm, scale=0.67]
\tikzstyle{xsdetector}=[fill={gray!30}, draw=black, shape=rounded rectangle, rounded rectangle west arc=0pt, minimum height=0.5cm, inner sep=0em, minimum width=0cm, scale=0.67]
\tikzstyle{PolRot}=[fill=white, draw=black, shape=rectangle, minimum height=0.5cm, inner sep=0.1em, minimum width=0.1cm]
\tikzstyle{PolRotrouge}=[fill={red!3}, draw=red, shape=rectangle, minimum height=0.5cm, inner sep=0.1em, minimum width=0.1cm, tikzit fill=red, execute at begin node={\textcolor{red}\bgroup}, execute at end node={\egroup}]
\tikzstyle{PhS}=[fill=white, draw=black, shape=rectangle, minimum height=0.5cm, inner sep=0.1em, minimum width=0.1cm]
\tikzstyle{gene}=[fill=white, draw=black, shape=rounded rectangle, rounded rectangle east arc=0pt, minimum height=0.5cm, inner sep=0em, minimum width=0cm]
\tikzstyle{detector}=[fill=white, draw=black, shape=rounded rectangle, rounded rectangle west arc=0pt, minimum height=0.5cm, inner sep=0em, minimum width=0cm]
\tikzstyle{cartoucherouge}=[rounded rectangle, fill={red!55!white}, draw=black, tikzit fill=red]
\tikzstyle{cartouchebleu}=[rounded rectangle, fill={blue!33!white}, draw=black, tikzit fill=blue]
\tikzstyle{diamantrouge}=[diamond, fill={rgb,255: red,255; green,115; blue,115}, draw=black]
\tikzstyle{diamantbleu}=[diamond, fill={rgb,255: red,171; green,171; blue,255}, draw=black]
\tikzstyle{new}=[-]
\tikzstyle{tirets}=[-, draw=black, dashed]
\tikzstyle{noire}=[-, draw=black]
\tikzstyle{ep}=[-, draw=black]
\tikzstyle{longdashed}=[-, dash pattern=on 5pt off 5pt]
\tikzstyle{pointilles}=[-, draw=black, dotted]
\tikzstyle{grise}=[-, draw={rgb,255: red,191; green,191; blue,191}]
\tikzstyle{rouge}=[-, draw=red]
\tikzstyle{bleue}=[-, draw=bleu, tikzit draw=blue]
\tikzstyle{verte}=[-, draw={rgb,255: red,0; green,230; blue,0}]
\tikzstyle{borddiagrammevide}=[-, dash pattern=on 0.5em off 0.5em on 0.5em off 0.5em on 0.5em off 0em]
\tikzstyle{msborddiagrammevide}=[-, dash pattern=on 0.28em off 0.28em on 0.28em off 0.28em on 0.28em off 0em]
\tikzstyle{sborddiagrammevide}=[-, dash pattern=on 0.2em off 0.2em on 0.2em off 0.2em on 0.2em off 0em]
\tikzstyle{xsborddiagrammevide}=[-, dash pattern=on 0.1em off 0.1em on 0.15em off 0.1em on 0.1em off 0em]
\tikzstyle{mediumdash}=[-, dash pattern=on 2pt off 2pt]
\tikzstyle{rougefonce}=[-, draw={red!50!black}, tikzit draw={rgb,255: red,136; green,0; blue,0}]
\tikzstyle{background}=[-, fill={rgb,255: red,229; green,229; blue,229}, draw=none, tikzit draw={rgb,255: red,128; green,128; blue,128}]
\newcommand{\Aff}{%
  \mathsf{Aff}
}
\newcommand{\Lin}{%
  \mathsf{Lin}
}
\newcommand{\Lag}{%
  \mathsf{Lag}
}
\newcommand{\Co}{%
  \mathsf{Co}
}
\newcommand{\Isot}{%
  \mathsf{Isot}
}
\newcommand{\zx}{%
  \mathsf{GSA}
}
\newcommand{\gla}{%
  \mathsf{GLA}
}
\newcommand{\gaa}{%
  \mathsf{GAA}
}
\newcommand{\stab}{%
  \mathsf{Stab}
}
\newcommand{\GGA}{%
  \mathsf{GGA}
}
\newcommand{\GQGA}{%
  \GSA[\C]^+
}
\newcommand{\Gauss}{\mathsf{Gauss}}
\newcommand{\QGauss}{\mathsf{QGauss}}
\newcommand{\GaussRel}{\mathsf{GaussRel}}
\newcommand{\QGaussRel}{\ALR[\C]^+}
\newcommand{\interp}[1]{%
  \left\llbracket #1 \right\rrbracket
}
\newcommand{\trans}{%
  \mathsf{T}
}
\DeclareRobustCommand{\disc}{%
  {\scalebox{.5}{\tikzfig{../figures/assets/discard_small}}}
}
\newcommand{\N}{%
  \mathbb{N}
}
\newcommand{\Z}{%
  \mathbb{Z}
}
\newcommand{\Zp}{%
  {\mathbb{F}_p}
}
\newcommand{\K}{%
  \mathbb{K}
}
\newcommand{\R}{%
  \mathbb{R}
}
\newcommand{\C}{%
  \mathbb{C}
}
\newcommand{\LOv}{%
  \mathsf{LOv}
}
\NewDocumentCommand{\Rel}{O{X}}{%
  \mathsf{Rel}_{#1}
}
\NewDocumentCommand{\Symp}{O{\K}}{%
  \mathsf{Symp}_{#1}
}
\NewDocumentCommand{\ASymp}{O{\K}}{%
  {\Aff}\Symp[#1]
}
\NewDocumentCommand{\AR}{O{\K}}{%
  {\Aff}\Rel[#1]
}
\NewDocumentCommand{\lR}{O{\K}}{%
  {\Lin}\Rel[#1]
}
\NewDocumentCommand{\LR}{O{\K}}{%
  {\Lag}\Rel[#1]
}
\NewDocumentCommand{\IR}{O{\K}}{%
  {\Isot}\Rel[#1]
}
\NewDocumentCommand{\CR}{O{\K}}{%
  {\Co}\IR[#1]
}
\NewDocumentCommand{\ALR}{O{\K}}{%
  {\Aff}\LR[#1]
}
\NewDocumentCommand{\AIR}{O{\K}}{%
  {\Aff}\IR[#1]
}
\NewDocumentCommand{\ACR}{O{\K}}{%
  {\Aff}\CR[#1]
}
\NewDocumentCommand{\ZX}{O{\K}}{%
  \zx_{#1}
}
\NewDocumentCommand{\GSA}{O{\K}}{%
  \zx_{#1}
}
\NewDocumentCommand{\ZXdisc}{O{\K}}{%
  \ZX[#1]^\disc
}
\NewDocumentCommand{\GLA}{O{\K}}{%
  \gla_{#1}  
}
\NewDocumentCommand{\GAA}{O{\K}}{%
  \gaa_{#1}
}
\NewDocumentCommand{\Stab}{O{p}}{%
  \stab_{#1}
}
\newcommand{\abbralagr}{%
  \mathsf{ALR}
}
\NewDocumentCommand{\Matrices}{ O{m} O{n} O{\K} }{%
%  \operatorname{\Mat}_{#3}(#2,#1)
  \operatorname{M}_{#1,#2}(#3)
}
\NewDocumentCommand{\Sym}{O{n} O{\K} }{%
  \operatorname{Sym}_{#1}(#2)
}
\NewDocumentCommand{\Unit}{O{n} O{\C} }{%
  \operatorname{U}_{#1}(#2)
}
\NewDocumentCommand{\Orth}{O{n} O{\R} }{%
  \operatorname{O}_{#1}(#2)
}
\NewDocumentCommand{\SpOrth}{O{n} O{\R} }{%
  \operatorname{SO}_{#1}(#2)
}
\NewDocumentCommand{\Sp}{O{n} O{\R} }{%
  \operatorname{Sp}_{#1}(#2)
}
\NewDocumentCommand{\ASp}{O{n} O{\R} }{%
  \operatorname{AffSp}_{#1}(#2)
}
\NewDocumentCommand{\Gl}{O{n} O{\R} }{%
  \operatorname{Gl}_{#1}(#2)
}
\newlength\oversetwidth
\newlength\underwidth
\let\oldoverline\overline
\renewcommand{\overline}[1]{\mkern 1.5mu\oldoverline{\mkern-1.5mu#1\mkern-1.5mu}\mkern 1.5mu}
\renewcommand{\phi}{%
  \varphi
}
\renewcommand{\epsilon}{%
  \varepsilon
}
\DeclareMathOperator{\diag}{diag}
\renewcommand{\leq}{%
  \leqslant
}
\renewcommand{\geq}{%
  \geqslant
}
\renewcommand{\overrightarrow}{%
  \vec
}
\newcommand{\vdotss}{ 	\tikz[baseline, every node/.style={inner sep=0}]{ 	\node at (0,0){.}; 	\node at (0,4pt){.}; 	\node at (0,8pt){.}; 	} 	}
\newcommand{\vdotsn}{{\ \tikz[baseline, every node/.style={inner sep=0}]{\node at (0,.23){$\vdots$}; \node at (.35,0){$\scriptstyle n$}}\!}}
\newcommand{\vdotsm}{{\ \tikz[baseline, every node/.style={inner sep=0}]{\node at (0,.23){$\vdots$}; \node at (.35,0){$\scriptstyle m$}}\!}}
\title{Complete equational theories for classical and quantum Gaussian relations}
\author{Robert I. Booth}{University of Edinburgh, United Kingdom \and University of Bristol, United Kingdom}{robert.booth@ed.ac.uk}{https://orcid.org/0000-0002-1146-3380}{}
\author{Titouan Carette}{LIX, CNRS, École polytechnique, Institut Polytechnique de Paris, 91120, Palaiseau, France}{carette(at)lix.polytechnique.fr}{https://orcid.org/0000-0002-1618-4081}{}
\author{Cole Comfort}{Universit\'e de Lorraine, CNRS, Inria, LORIA, F 54000 Nancy, France}{}{}{}
\authorrunning{R.\,I. Booth et al}
\keywords{Graphical algebra, symplectic geometry, string diagrams, category theory, Markov categories, probabilistic computation, categorical quantum mechanics, quantum information, quantum computing,  quantum mechanics, quantum optics, graph theory} %TODO mandatory; please add comma-separated list of keywords
\begin{document}

\maketitle

\begin{abstract}
  We give generators and relations for the hypergraph props of Gaussian
  relations \cite{gaussrel} and positive affine Lagrangian relations. The
  former extends Gaussian probabilistic processes by completely-uninformative
  priors, and the latter extends Gaussian quantum mechanics with
  infinitely-squeezed states. These presentations are given by adding a
  generator to the presentation of real affine relations and of real affine Lagrangian
  relations which freely codiscards effects,  as well as certain rotations.

  The presentation of positive affine Lagrangian relations provides a rigorous justification for many common yet
  informal calculations in the quantum physics literature involving
  infinite-squeezing. Our presentation naturally extends Menicucci
  et al.'s graph-theoretic representation of Gaussian quantum
  states \cite{menicucci_graphical_2011} with a representation for
  Gaussian transformations. Using this graphical calculus, we also give a graphical proof of Braunstein and Kimble's continuous-variable quantum teleportation protocol \cite{braunstein_teleportation_1998}.  We also interpret the \textsf{LOv}-calculus
  \cite{clement_lov-calculus_2022}, a diagrammatic calculus for reasoning
  about passive linear-optical quantum circuits in our graphical calculus. Moreover, we show how our presentation allows
  for additional optical operations such as active squeezing.
\end{abstract}

%

%\tableofcontents

\section{Introduction}
\label{sec:intro}
The research programme of categorical quantum mechanics seeks to axiomatize
quantum mechanics and quantum computing using the theory of monoidal
categories.  More specifically, the ZX-calculus community has sought out to
find diagrammatic presentations for different fragments of quantum theory
using string diagrams \cite{zx,zxcompletea,zxcompleteb,backens_zx-calculus_2014,qutrit,zw,zh,zxa}.
This graphical quantum-mechanical paradigm has found a 
wide-ranging applications in quantum computing \cite{Differentiation,East2022,lattice,comfort_algebra_2023,Huang2023,duncan_graph-theoretic_2020,Kissinger2020,Backens2021,Toumi2021}, as well as in
teaching quantum theory \cite{pqp, Heunen2019-ox}.

Up until relatively recently, however, the fragments of quantum mechanics of
study have been almost exclusively finite dimensional. This is partly due to
the inconvenient fact that the category of (infinite-dimensional) Hilbert spaces is
not compact closed, a highly desirable feature for rewriting. This is at odds
with much of quantum physics, where many systems have continuous
degrees of freedom which can only be modelled in infinite dimensions.
Famously, the canonical commutation relations only hold in infinite dimensions.

Different approaches have been taken to get around this obstacle and attempt to
diagrammatically axiomatize infinite-dimensional quantum systems. 

One general
approach to tackling this obstacle has been to change the semantics. For
example, Gogioso and Genovese suggest embedding Hilbert spaces into a larger
compact closed category using nonstandard analysis \cite{Gogioso2017}.  On the
other hand Cockett et al. have sought to completely  abandon the Hilbert space semantics,
shifting to non-degenerate models of multiplicative linear logic, and exploiting
more refined notion of duals in the setting of *-autonomous categories
\cite{Cockett2021}. However, neither of these lines of inquiry have provided
presentations in the same spirit of the ZX-calculus; instead, they have led
to categorical reconstructions, and exotic toy models of quantum mechanics.

Another approach has been to just put up with the lack of
duals \cite{abramsky_h*-algebras_2010}. This line of thinking has
produced graphical presentations for different fragments of quantum
mechanics, including for coherent control \cite{Coherentcontrol}
and passive linear quantum optics \cite{clement_lov-calculus_2022,
heurtel_complete_2024}. However, the lack of duals makes it harder to
incorporate existing quantum circuit optimization techniques which exploit
duals and hypergraph structure. For example the quantum circuit optimisation
package \texttt{PyZx} represents quantum circuits as undirected open graphs
\cite{Kissinger2020}, and the cutting-edge ZX-calculus quantum circuit
optimization techniques often make use of these graph-like representations
\cite{duncan_graph-theoretic_2020,Kissinger2020,Backens2021}.

However, these two approaches are not mutually exclusive. For example De Felice et al.  \cite{deFelice2023} mix both approaches: working in one semantics without duals, and taking finite truncations into another semantics in a way similar to Gogioso and Genovese, albeit without nonstandard analysis.

In this paper, we restrict our attention in two main ways. Firstly,
we seek an axiomatization only of the \emph{Gaussian} fragment of
infinite-dimensional quantum mechanics. Gaussian quantum mechanics
(GQM) is an important fragment for many candidate physical systems
for implementing quantum computation \cite{lloyd_quantum_1999,
braunstein_quantum_2005, ortiz-gutierrez_continuous_2017}, including
quantum optics \cite{fabre_modes_2020, yokoyama_ultra-large-scale_2013,
yoshikawa_generation_2016, asavanant_time-domain_2019,
bartolucci_fusion-based_2023, de_gliniasty_spin-optical_2024}. Formally, it
bears a strong resemblance to the stabiliser fragment of finite-dimensional
quantum theory, although many of the primitives of stabiliser theory do not
generalize well to infinite-dimensional Hilbert spaces.

Secondly, instead of looking for a Hilbert space semantics, we give
our semantics in the category \(\ALR[\C]\) of \emph{complex affine Lagrangian
relations}. This allows us to express all of Gaussian quantum mechanics,
as well the nonphysical operations known as ``infinite squeezing,'' which
includes hypergraph structure, compact structure as well as a notion of
Dirac deltas (eg. \cite[page~18]{Von_Neumann2018-nl}).  Our relational interpretation of Dirac deltas makes them behave algebraically how physicists (often informally) manipulate them; however we do not give a distributional semantics in the sense of Schwartz \cite{schwartz1957theorie}.
 To be more specific about our semantics, we construct a subcategory of \(\ALR[\C]\)
of relations verifying a positivity condition which exactly captures
these aforementioned primitives. This change of semantic category is partly motivated
by the fact that the stabiliser fragments of odd prime dimensional qudit quantum systems are
already known to be projectively equivalent to categories of affine Lagrangian relations
over finite fields \cite{alr}. Therefore the equational theory of affine Lagrangian relations of Booth et al. \cite{gsa} restricts to the stabiliser fragment of the ZX-calculus  \cite{booth_complete_2022,
poor_qupit_2023}. Furthermore, Lagrangian relations have been extensively
studied in the context of geometric quantization: \(\ALR[\C]\) is Weinstein's
affine symplectic category \cite{weinstein_symplectic_2009}, we discuss this connection further in subsection~\ref{subsec:gsaapplications}.

Notably, although they do not give semantics in terms of Lagrangian relations, Menicucci et al. use symplectic-geometric methods in order to represent Gaussian states using graphs \cite{menicucci_graphical_2011}.  Although they do not give generators and relations for a monoidal category, they do provide graph transformation rules.  Therefore, our work can be seen as a relational/compositional extension of theirs.

%
%In some sense, this novel semantics resembles the nonstandard approach as
%it allows us to perform graphical reasoning using dirac deltas and infinitely squeezed states; however,
%ours is much less expressive.  Nevertheless, this is not a disadvantage:
%the highly symmetrical nature of our semantics has allowed us to give a
%very concise presentation for this fragment of quantum mechanics. We give
%complete presentation for positive affine Lagrangian relations by freely
%adding a single generator to the presentation of affine Lagrangian relations
%over the real numbers. This extra generator corresponds to the familiar \emph{vacuum
%state}  of quantum optics, and which can be though of as a quantum
%Gaussian noise. Affine Lagrangian relations over finite fields of prime
%characteristic has previously been shown to be isomorphic to the odd prime
%dimensional stabiliser fragment of the ZX-calculus: therefore our presentation
%is to be regarded as the ZX-calculus for Gaussian quantum mechanics plus
%infinite squeezing.

%

Although our original motivation was find a relational semantics, as well as a graphical presentation for Gaussian quantum mechanics; surprisingly, by imposing a single additional constraint on our subspaces, we also obtain a complete presentation for the category of Gaussian relations, which was originally formulated by Stein and Samuelson \cite{gaussrel}.  Our semantic formulation of this category makes it into a concrete category of relations: where composition is given merely by relational composition.  This presentation extends that of affine relations by Bonchi et al. \cite{gaa} (at least over \(\R\)) 
 which itself extends the presentation of linear relations by Bonchi et al. \cite{gla}.  Similar, yet disjoint presentations exist for piecewise linear relations \cite{dpla}, and polyhedral relations \cite{dpa}.  Such presentations for categories of relations with algebraic structure have been used as a syntax for signal flow diagrams in control theory \cite{Bonchi2019,control,Bonchi2017,Bonchi2014, coya,erbele}, systems with bounded resources \cite{dpa}, and electrical circuits \cite{gaa,Boisseau2022,network,passive,dpla,coya}.
 Therefore, it is likely that our presentation for positive affine Lagrangian relations will be useful outside of quantum mechanics.
%Therefore, we hope that bridging this gap between the ZX-calculus and this ``graphical algebra'' communities will be fruitful.

{\bf Related work.}

Independently of this work, Stein et al. \cite{stein_graphical_2024} give
the same presentation of Gaussian relations. However, their semantics are
completely different: their semantics are non-linear varieties whereas we
regard Gaussian relations as a subcategory of affine relations. The precise
connection between these interpretations still needs to be elucidated. Their
characterisation does not extend to the quantum case.

{\bf Structure of paper}

\noindent
In section~\ref{sec:strings}, we review string diagrams for symmetric monoidal categories, and related notions. In section~\ref{sec:gsa}, we review graphical symplectic algebra: commencing with a brief review of symplectic linear algebra in subsection~\ref{ssec:symplectic}, and the graphical language for affine Lagrangian relations in subsection~\ref{ssec:gsa}. In subsection~\ref{subsec:gsaapplications}, we begin to discuss the quantum nature of affine Lagrangian relations, which motivates our more refined semantics.

Section~\ref{sec:gauss} is devoted to giving a complete presentation of Gaussian relations.  We first review Gaussian probability theory in subsection~\ref{ssec:gaussprob}.  Then in subsection~\ref{ssec:gaussrel}, we give a novel formulation of Gaussian relations in terms of a subcategory of complex affine Lagrangian relations, followed by a graphical presentation in subsection~\ref{ssec:gga}.

The final section~\ref{sec:sguass} is devoted to giving a complete presentation for positive affine Lagrangian relations.  We begin with subsection~\ref{ssec:gaussqm} with a review of Gaussian quantum mechanics. In subsection~\ref{ssec:positive}, we characterize positive Lagrangian relations more concretely.  In subsection~\ref{ssec:QGaussRel}, we give a presentation for this category.  In subsection~\ref{ssec:QGaussRel}, we connect our work to Menicucci et al.'s graphical representation of Gaussian states 
\cite{menicucci_graphical_2011}.
In subsection~\ref{ssec:teleportation}, we give a graphical proof of Braunstein and Kimble's continuous variable quantum teleportation protocol \cite{braunstein_teleportation_1998}.
  Similarly, in subsection~\ref{ssec:lov} we show how the graphical language for passive linear quantum optics
known as the \(\LOv\) calculus \cite{clement_lov-calculus_2022} can be interpreted in our language.

\section{Graphical languages}
\label{sec:strings}
In this section we review the necessary background on string diagrams and graphical languages needed to understand this article.  A more in depth introduction to this topic can be found in the article of Booth et al. \cite[section~2]{gsa}.

\subsection{String diagrams}
%
%A coloured {\bf prop} is a strict symmetric monoidal category whose objects are generated by a set {\sf Ob} of {\bf colours} under the tensor product \(\otimes\). Denote the tensor unit by \(I\).
%Intuitively, that means given colours \(X_1,\ldots, X_n, Y_1,\cdots, Y_m \in {\sf Ob}\) an arrow \(f : X_1 \otimes \cdots \otimes X_m \to Y_1 \otimes
%\dots \otimes Y_n\) is drawn as a box with \(m\) input wires on the left,
%and \(n\) output wires on the right, where each wire is labeled by its colour.
A {\bf prop} is a strict symmetric monoidal category such that the monoid of objects under the tensor product is the natural numbers under addition.  For this reason, by convention we denote the tensor product on objects by \(+\), and on maps by \(\oplus\).
We draw a map of type \(n\to m\) as a box with \(n\) incoming wires and \(m\) outgoing wires.
The composition of \(g\circ f\) of arrows \(f\) and \(g\) is drawn by ``plugging wires'' and the monoidal product \(f\oplus g\) as ``vertical juxtaposition''; we can also exchange wires with the symmetry map \(\sigma_{1,1}: 2 \to 2\) drawn as two wires crossing (satisfying the topologically evident equations):
\begin{equation}
  \tikzfig{figures/strings/composition}
  \quad
  \tikzfig{figures/strings/monoidal_product}
  \quad
  \tikzfig{figures/strings/symmetry_equations}
\end{equation}

Arrows of type \(0\to n\) are {\bf states}; \(n\to 0\) are {\bf effects}; \(0\to 0\) are {\bf scalars}.

%A functor a props acts as the identity on the tensor product.

%A \textbf{prop} is a coloured prop generated by a single object.  In this case, the objects are isomorphic to the natural numbers, therefore, we can drop the labels on the wire without introducing any ambiguity.

\subsection{Duality and flexsymmetry}
\begin{definition}
  A {\bf compact prop} is a prop in which a single wire can be bent into cups \(\eta_1:0\to 2\) and caps \(\epsilon_1:2\to 0\), satisfying the topologically evident equations:
  \begin{equation*}
    \eta_1\eqqcolon\tikzfig{figures/strings/cup}
    \qand
    \epsilon_1\eqqcolon\tikzfig{figures/strings/cap}
    \quad
    \textit{such that}
    \quad
    \tikzfig{figures/strings/string_pulling}
  \end{equation*}
\end{definition}
One advantage of working with compact props is that every map \(f:n\to m\), can be uniquely represented by a state called its \emph{\textbf{name}} given by bending the inputs into outputs \(\lfloor f \rfloor:0\to n+m\). Sometimes the order of wires in the name of a map doesn't matter:
\begin{definition}
 \label{def:flexsym}
  An arrow \(f : n \to m\) in a compact prop is \textbf{flexsymmetric} if for any permutation \(\varsigma\) of \(m+n\), \(\varsigma \circ \lfloor f \rfloor=\lfloor f \rfloor\).
%  \begin{equation*}
%    \tikzfig{figures/strings/flexsymmetry}
%  \end{equation*}
\end{definition}
In some cases, compact props admit presentations where all generators are flexsymmetric. This means that the string diagrams consisting of such generators can be treated as undirected graphs.  We use this extensively throughout this paper. %\cite{carettethesis}.

%
%\paragraph{Dagger structure} \  \newline
%In many settings, a process \(n\to m\) is canonically associated to a process of the ``reversed type'' \(m\to n \).  However, this map need not be the inverse:
%
%\begin{definition}
%  \label{def:dag}
%  A {\bf \dag-compact prop} is a compact prop \(\mathsf{C}\) equipped with a functor of props \( (-)^\dag : \mathsf{C}^\mathsf{op}\to  \mathsf{C} \) such that:  for all maps \(f:n\to m\),  \( (f^\dag)^\dag=f  \), and  \(\sigma_{1,1}^\dag= \sigma_{1,1}^{-1}\) and \(\eta_1^\dag=\epsilon_1\).
%
%  An arrow \(f:n\to m\) in a \dag-compact prop is an
%    {\bf isometry} when \(f^\dag  f = 1_n\);
%    {\bf coisometry} when \(f f^\dag = 1_m\);
%    {\bf unitary} when it is both an isometry and coisometry (so that \(f^\dag = f^{-1}\)).
%\end{definition}

\subsection{Scalable notation}

In order to reason about props \(\mathsf{C}\), it will often be useful to
work in their strictification \([\mathsf{C}]\).  When representing maps in
\([\mathsf{C}]\), each wire is labelled by a natural number \(n\), indicating
that the wire represents a bundle of \(n\) atomic wires. This prop is given by
the generators and equations of \(\mathsf{C}\) in addition to new generators
which divide and gather the wires:

  \begin{equation}
    \tikzfig{figures/strings/tensor_dividers}
  \end{equation}

In other words, in this setting, we are able to reason about composite systems in \(\mathsf{C}\). As a matter of convention, we will typeset generic wires as thicker, to indicate when there may be some \(n\in \N \) atomic wires bundled together.  If we know that there is only a single atomic wire within such a bundle, we will draw it thinly; moreover, if we know that the bundle is empty, we will omit the wire entirely.
This will allow us to make inductive arguments using only pictures; therefore, given any such \(\mathsf{C}\), getting a hold of \([\mathsf{C}]\) will be very useful.  We will call the ways of thinking about \(\mathsf{C}\) using \([\mathsf{C}]\) \textit{\textbf{scalable notations}}.

\section{Graphical symplectic algebra}
\label{sec:gsa}

In this section, we review graphical symplectic algebra of Booth et al. \cite[section~3]{gsa}:

\subsection{Symplectic linear algebra}
\label{ssec:symplectic}
First, we review symplectic linear algebra. \(\K\) will denote an arbitrary
field, although in the remainder of the article and unless otherwise stated,
we will take \(\K\) to be either the field of real, or complex numbers. The
group of units of \(\K\) is denoted by \(\K^*\).

\begin{definition}
  A \textbf{symplectic \(\K\)-vector space} is a \(\K\)-vector space \(V\) endowed
  with a bilinear map \(\omega_V: V \oplus V \to \K\) called the  \textbf{symplectic form} which is:
  \begin{itemize}
    \item \textbf{Alternating:} For all \(\vec v \in V\), \(\omega_V(\vec v,\vec v)=0\).
    \item \textbf{Nondegenerate:} Given some \(\vec v \in V\): if for all \(\vec u\in V\) \(\omega_V(\vec v,\vec u)=0\), then \(\vec v=\vec 0\).
  \end{itemize}
  
  A \textbf{symplectic \(\K\)-linear map}
  between symplectic spaces \((V,\omega_V) \to (W, \omega_W)\) is a \(\K\)-linear
  map \(S : V \to W\) that preserves the symplectic form, i.e. \(\omega_W(S\vec v,S\vec u)
  = \omega_V(\vec v,\vec u)\) for all \(\vec v, \vec u\in V\). A
  \textbf{symplectomorphism} is a symplectic isomorphism.
  Similarly, a \textbf{symplectic \(\K\)-affine map} \(S,\vec a\) is a \(\K\)-affine map whose linear component \(S\) is symplectic; and an \textbf{affine symplectomorphism} is an  symplectic \(\K\)-affine isomorphism.
\end{definition}

\begin{example}
The vector space \(\K^{2n}\) has a symplectic form
%\[\omega_n\!\left( \bigoplus_{j=0}^{n-1} \begin{bmatrix}z_j\\x_j\end{bmatrix}, \bigoplus_{j=0}^{n-1} \begin{bmatrix}z_j'\\x_j'\end{bmatrix} \right) \coloneqq  \sum_{j=0}^{n-1} z_jx_j'-x_jz_j' \]
\(\displaystyle\omega_n\!\left(  \begin{bmatrix}\vec z \\ \vec x\end{bmatrix}, \begin{bmatrix}\vec{z'} \\ \vec{x'}\end{bmatrix} \right)\coloneqq\sum_{j=0}^{n-1} z_jx_j'-x_jz_j' \)
\end{example}

As a matter of notation, denote the skew-symmetric \(2n\times 2n\) matrix which represents this form by \(\Omega_n\).
Note that  \(\omega_{1}\) measures the (oriented) area of parallelograms in \(\K^2\).
\begin{theorem}[Linear Darboux theorem]
  \label{thm:darboux}
  Every finite-dimensional symplectic vector space \((V,\omega_V)\) is symplectomorphic to the symplectic vector space \((\K^{2n} , \omega_n)\).
\end{theorem}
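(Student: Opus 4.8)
The plan is to argue by induction on $\dim V$, building a \emph{symplectic basis} (a Darboux basis) in which $\omega_V$ takes the standard block form $\Omega_n$; the desired symplectomorphism is then the unique linear isomorphism sending this basis to the standard basis of $\K^{2n}$. The base case $\dim V = 0$ is vacuous, with $n=0$, so the content is entirely in the inductive step.

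For the inductive step, suppose $V \neq 0$ and pick any nonzero $\vec e \in V$. Nondegeneracy supplies a vector $\vec u$ with $\omega_V(\vec e, \vec u) \neq 0$, and rescaling gives $\vec f \coloneqq \vec u / \omega_V(\vec e, \vec u)$ with $\omega_V(\vec e, \vec f) = 1$. The vectors $\vec e, \vec f$ are linearly independent: any relation $\vec f = \lambda \vec e$ would force $\omega_V(\vec e, \vec f) = \lambda\,\omega_V(\vec e, \vec e) = 0$ by the alternating property, a contradiction. Set $W \coloneqq \mathrm{span}(\vec e, \vec f)$; in the basis $(\vec e, \vec f)$ the restriction $\omega_V|_W$ is exactly the standard form $\omega_1$, hence nondegenerate.

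The crux of the argument — and the step I expect to be the main obstacle — is to show that $V$ splits as the direct sum $V = W \oplus W^{\perp}$ of $W$ with its symplectic complement $W^{\perp} \coloneqq \{\vec v \in V : \omega_V(\vec v, \vec w) = 0 \text{ for all } \vec w \in W\}$, and that $\omega_V|_{W^{\perp}}$ remains nondegenerate. The intersection $W \cap W^{\perp}$ is trivial because $\omega_V|_W$ is nondegenerate, while the dimension count $\dim W^{\perp} = \dim V - \dim W$ follows from nondegeneracy of $\omega_V$ on $V$: the map $\vec v \mapsto \omega_V(\vec v, -)|_W$ from $V$ to $W^{*}$ has kernel $W^{\perp}$ and is surjective, since $\vec v \mapsto \omega_V(\vec v,-)$ is an isomorphism $V \to V^{*}$ and restriction $V^{*} \to W^{*}$ is onto. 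Together these yield the decomposition. For nondegeneracy of the restriction: if $\vec v \in W^{\perp}$ satisfies $\omega_V(\vec v, \vec w) = 0$ for all $\vec w \in W^{\perp}$, then, being also orthogonal to all of $W$, it is $\omega_V$-orthogonal to the whole of $V = W \oplus W^{\perp}$, whence $\vec v = \vec 0$.

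Finally, $(W^{\perp}, \omega_V|_{W^{\perp}})$ is a symplectic space of dimension $\dim V - 2$, so by the inductive hypothesis it admits a symplectic basis. Adjoining $(\vec e, \vec f)$ produces a symplectic basis of $V$, i.e.\ a family of ``conjugate pairs'' $(\vec e_j, \vec f_j)$ with $\omega_V(\vec e_i, \vec f_j) = \delta_{ij}$ and $\omega_V(\vec e_i, \vec e_j) = \omega_V(\vec f_i, \vec f_j) = 0$. Reordering this basis to list all the $\vec e_j$ before all the $\vec f_j$ matches the block convention of $\omega_n$ from the preceding example, and the induced isomorphism $V \to \K^{2n}$ is the required symplectomorphism. (As a byproduct this also recovers the fact that $\dim V$ is necessarily even, since the construction terminates only upon exhausting $V$ in conjugate pairs.)
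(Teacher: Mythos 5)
Your proof is correct and complete: the induction on dimension via extraction of a conjugate pair $(\vec e,\vec f)$, the splitting $V = W \oplus W^{\perp}$ justified by the rank count on $\vec v \mapsto \omega_V(\vec v,-)|_W$, and the nondegeneracy of the restriction to $W^{\perp}$ are all handled properly, and the final reordering correctly matches the paper's block convention for $\omega_n$. The paper itself states this classical result without proof, so there is no in-paper argument to compare against; your argument is the standard textbook one and nothing is missing.
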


%Given a linear subspace \(S\) of a symplectic vector space \((V,\omega)\), its \textbf{symplectic
%complement} is the linear subspace 
%\(S^\omega \coloneqq \left\{\vec v \in V \,\middle|\,  \forall \vec s \in S: \omega(\vec v,\vec s)
%= 0\right\}\subseteq V\).
A linear subspace \(S\) of a symplectic vector space \((V,\omega)\)  
is \emph{\textbf{Lagrangian}}, if \(2\dim (S)=\dim(V)\) and for all \(\vec v,\vec w \in S\), \(\omega(\vec v,\vec w)=0\).
% \textbf{coisotropic} if \(S = S^\omega \subseteq S\) and 
%\textbf{Lagrangian} if \(S = S^\omega\).
In particular, every Lagrangian subspace \(S \subseteq (\K^{2n}, \omega_n)\) is the kernel of a \(n\times 2n\) matrix \(\ker [Z|X] = S\), called a \emph{\textbf{generator matrix}}, so that for all rows \(\vec v, \vec w\), \(\omega(\vec v,\vec w)=0\).
%
%\begin{proposition}\label{prop:generator_matrix}
%	A subspace $\mathcal{L}\subset \K^{2n}$ is coisotropic if $\mathcal{L}=\ker \begin{bmatrix}
%		Z & X
%	\end{bmatrix}$ with $ZX^\trans = XZ^\trans$. It is Lagrangian if furthermore $\rk\begin{bmatrix}
%		Z & X
%	\end{bmatrix}=n $
%\end{proposition}
%Leveraging those  characterizing Lagrangian subspaces.
%
%\begin{proposition}[AP-invariant]\label{prop:AP_invariant}
%	A Lagrangian subspace $L\subset \K^{2n}$, is uniquely characterized by a subspace $\mathcal{S}\subseteq \K^n $ of dimension $r$ and a symmetric matrix $\phi\in \mathcal{S}_{n-r}(\K)$. Furthermore, let $E\in \mathcal{M}_{(n-r)\times n}(\K)$ be the unique matrix in reduced row echelon form such that $\mathcal{S}=\ker(E)$ and let $P$ be a permutation matrix such that $EP^{-1} = \begin{bmatrix}
%		I & L
%	\end{bmatrix}$, then:
%	
%	\[\mathcal{L}=\ker\begin{bmatrix}
%		I & L & \phi & 0 \\ 0 & 0 & -L^t & I
%	\end{bmatrix}\begin{bmatrix}
%		P & 0 \\ 0 & P
%	\end{bmatrix}\]
%	
%	We call the pair $(E,\phi)$, the AP-invariant of $\mathcal{L}$.
%	
%\end{proposition}
%
Recall the following generalisation of a linear subspace:
\begin{definition}
Given \(n\in\N\), an \textbf{affine subspace} of \(\K^n\) is a subset
\(S+\vec a \coloneqq \{ \vec v \in \K^n \ | \ S\vec v= \vec a  \} \subseteq \K^n\)
for some vector \(\vec a \in \K^m \) and matrix \(S:n\to m\). 
An affine subspace \(S+\vec a\) of  a symplectic vector space \((V,\omega)\) is Lagrangian when either \(S\) is Lagrangian or empty.
An \textbf{affine Lagrangian relation} \(m\to n\)
is an affine Lagrangian subspace of
\(\K^{2m} \oplus \K^{2n}\) with respect to the  symplectic form given by:
\[
  \omega_{m,n}
  \left(
    \begin{bmatrix}
      \vec z_0 \\ \vec x_0 \\ \vec z_1 \\ \vec x_1
    \end{bmatrix}
    ,
    \begin{bmatrix}
      \vec z_2 \\ \vec x_2 \\ \vec z_3 \\ \vec x_3
    \end{bmatrix}
  \right)
  \coloneqq
    \omega_n
  \left(
    \begin{bmatrix}
      \vec z_1 \\ \vec x_1
    \end{bmatrix}
    ,
    \begin{bmatrix}
      \vec z_3 \\ \vec x_3
    \end{bmatrix}
  \right)
  -
  \omega_m
  \left(
    \begin{bmatrix}
      \vec z_0 \\ \vec x_0
    \end{bmatrix}
    ,
    \begin{bmatrix}
      \vec z_2 \\ \vec x_2
    \end{bmatrix}
  \right)
\]
\end{definition}

We must introduce this twisted symplectic form \(\omega_{m,n}\), as opposed to \(\omega_{m+n}\), so that affine Lagrangian relations can be composed relationally:
\begin{definition}
  The compact prop of affine Lagrangian relations \(\ALR\)  has:
  \begin{itemize}
    \item \textbf{Arrows} affine Lagrangian relations.
    \item \textbf{Composition} of \(R:n\to m\) and \(S:m\to k\) is given by:
    \[ S\circ R \coloneqq  \{ (\vec a,\vec c) \in \K^{2n}\oplus \K^{2k} \ | \ \exists \vec b \in \K^{2m}: (\vec a,\vec b) \in R, (\vec b,\vec c) \in S  \}:n\to k \]
    \item \textbf{Identity} on \(n\), given by the diagonal relation \(\{(\vec v,\vec v) \in  \K^{2n}\oplus \K^{2n} \}:n\to n \)
    \item \textbf{Monoidal structure}, given by the object \(0\) and the direct sum:
      \[ S\oplus T \coloneqq \left\{\left(\begin{bmatrix}\vec {z_S} \\ \vec{z_T} \\ \vec{x_S} \\ \vec{x_T} \end{bmatrix},\begin{bmatrix} \vec {z_S'} \\ \vec{z_T'} \\ \vec{x_S'} \\ \vec{x_T'} \end{bmatrix}\right)  \,\middle|\,   \left(\begin{bmatrix} \vec {z_S}\\ \vec {x_S} \end{bmatrix} , \begin{bmatrix} \vec {z_S'}\\ \vec {x_S'} \end{bmatrix} \right)\in S,  \left(\begin{bmatrix} \vec {z_T}\\ \vec {x_T} \end{bmatrix} , \begin{bmatrix} \vec {z_T'}\\ \vec {x_T'} \end{bmatrix} \right) \in T \right\} \]
    \item \textbf{Symmetric structure} is given by the relation: 
      \[\sigma_{1,1}\coloneqq \left\{\left(\begin{bmatrix}z_0 \\ z_1 \\ x_0 \\ x_1 \end{bmatrix},\begin{bmatrix} z_1 \\ z_0 \\ x_1 \\ x_0 \end{bmatrix}\right)  \,\middle|\,  z_0,z_1,x_0,x_1\in\K \right\}:2\to 2 \]
    \item \textbf{The compact structure} is given by:\\
    \hspace*{-.5cm} \mbox{\(
    \eta_1 \coloneqq \left\{\left(\bullet, \begin{bmatrix} z \\ z \\ x \\ -x \end{bmatrix}\right) \,\middle|\,  z,x\in \K\right\}: 0\to 2 \quad\quad
    \epsilon_1 \coloneqq \left\{\left(\begin{bmatrix} z \\ z \\ x \\ -x \end{bmatrix}, \bullet\right) \,\middle|\,  z,x\in \K\right\}: 2\to 0
  \)}
%    \item \textbf{The dagger} of an arrow \(R:n\to m\) is given by:
%    \begin{equation*}
%       R^\dag \coloneqq 
%      \left\{
%        \left( \begin{bmatrix} \vec z \\ -\vec x \end{bmatrix},  \begin{bmatrix} \vec {z'} \\ -\vec {x'} \end{bmatrix} \right)
%       \,\middle|\, 
%        \left(\begin{bmatrix} \vec z \\ \vec x \end{bmatrix},  \begin{bmatrix} \vec {z'} \\ \vec {x'} \end{bmatrix} \right)\in R
%      \right\} :m\to n
%    \end{equation*}
  \end{itemize}
\end{definition}

%Note that the choice of compact structure is different for that of \(\Rel\). 

% Our definition differs slightly from the way it is often stated in the literature, for example in the work of ??? \cite{weinstein_symplectic_2009} and ??? \cite{network}.  This is because we have decided to work in the skeletal case, where all isomorphic objects are equal. However, both definitions are equivalent, so this difference is inconsequential.

%and there is just as well a \dag-compact closed functor \( \interp{-}_{\abbrar}^{\abbralagr}:\ALR\to\AR\) which negates the \(Z\)-grading appropriately.  Conversely, there is also a \dag-compact closed functor  \( \interp{-}_{\abbralagr}^{\abbrar}:\AR\to \ALR\) which doubles things appropriately.

%We must be careful as the relationship between the props \(\lR\) / \(\AR\) and their symplectic counterparts is subtle. First, for example, the symmetric monoidal functor \(\interp{-}_{\abbralagr}^{\abbrar}:\ALR \to \AR \) which forgets the symplectic structure is not a prop morphism, as it sends \(n\mapsto 2n\). Furthermore the compact structure of \(\ALR\) is different from the compact structure we defined on \(\AR\) because the latter is not Lagrangian.

Note that every affine symplectic map \(f\) 
%\((A,\vec s):(\K^{2n},\omega_n)\to(\K^{2m},\omega_m)\)
induces an affine Lagrangian relation via its graph \(\Gamma(f)\).
%\[
%  \Gamma_{A,\vec s}
%  \coloneqq
%  \left\{
%    \left(
%        \begin{bmatrix}\vec z\\ -\vec x \end{bmatrix},
%      A\begin{bmatrix}\vec z\\  \vec x \end{bmatrix} +\vec s
%    \right)
%  \ \middle| \ 
%    \vec z, \vec x \in \K^{n} 
%  \right\}:n\to m
%\]
This is a faithful functor so \(\ALR\) strictly generalises (affine) symplectic maps.
\subsection{Graphical presentation}
\label{ssec:gsa}
In this section, we review the presentation of affine Lagrangian relations of Booth et al \cite{gsa}:
\begin{theorem}
  \label{thm:GSA}
  Given a field \(\K\), the compact prop \(\ALR\)  has a flexsymmetric presentation \(\ZX\) (read \textbf{graphical symplectic algebra}), generated by, for all \(a,b\in \K\) and \(n,m\in \N\):
  \begin{align*}
  &\interp{\tikzfig{figures/AffLagRel/b-spider}}_\abbralagr^\zx
    \coloneqq \left\{ \left(
    \begin{bmatrix}
      \vec z \\ x \vspace*{-.2cm}\\ \vdotsm \\ x
    \end{bmatrix} ,
    \begin{bmatrix}
      \vec {z'} \\ x \vspace*{-.2cm} \\ \vdotsn \\ x
    \end{bmatrix}
    \right) \,\middle|\, \begin{aligned} &\vec z\in\K^m, \vec{z'} ; \in \K^n, x \in \K \qq{such that} \\ &\sum_{j=0}^{m-1} z_j - \sum_{k=0}^{n-1} z_k' + bx = a \end{aligned} \right\} \\
  &\interp{\tikzfig{figures/AffLagRel/w-spider}}_\abbralagr^\zx
    \coloneqq
    \left\{ \left(
      \begin{bmatrix}
      z \vspace*{-.2cm} \\ \vdotsm \\ z \\  \vec x
      \end{bmatrix} ,
      \begin{bmatrix}
      -z \vspace*{-.2cm} \\ \vdotsn \\ -z \\  \vec {x'}
      \end{bmatrix}
    \right) \,\middle|\, \begin{aligned} & \vec x \in \K^m, \vec{x'}\in \K^n, z\in \K \qq{such that} \\ &\sum_{j=0}^{m-1} x_j + \sum_{k=0}^{n-1} x_k' - bz = a \end{aligned} \right\}
    \end{align*}
  Modulo flexsymmetry of the generators and for all \(a,b,c,d\in\K\) and \(z\in\K^*\), the equations:
  
  \noindent
  \[
    \tikzfig{figures/AffLagRel/axioms_zx}
  \]
%  \noindent
%  \begin{align*}
%    \begin{tabular}{l}
%      \text{With derived}\\
%      \text{generators,}
%      \text{for all \(a \in\K \):}
%    \end{tabular}
%    &
%    \tikzfig{figures/AffLagRel/box-def}
%  \end{align*}
%  \noindent
%  With compact structure:\quad\quad
%  \(
%  \tikzfig{figures/AffLagRel/compact_structure}
%  \)
  
%  \noindent
%  And \dag-structure:\quad\quad
%  \(
%    \tikzfig{figures/AffLagRel/b-spider} \mapsto \tikzfig{figures/AffLagRel/b-spider-daggered}  \quad
%    \tikzfig{figures/AffLagRel/w-spider} \mapsto \tikzfig{figures/AffLagRel/w-spider-daggered}  %\quad
%%    \tikzfig{figures/AffLagRel/box_undaggered} \mapsto \tikzfig{figures/AffLagRel/box_daggered} 
%    \)
\end{theorem}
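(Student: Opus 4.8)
The plan is to exhibit the assignment $\interp{-}_{\abbralagr}^{\zx}$ as a functor $\ZX\to\ALR$ and to prove it is an isomorphism of props. This splits into three parts that I would treat in order: \emph{soundness} (the functor is well defined and validates the listed equations), \emph{universality} (every affine Lagrangian relation is in its image), and \emph{completeness} (diagrams with equal interpretation are provably equal modulo the equations).

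First I would prove soundness. For each of the two spider generators one checks directly, using the twisted form $\omega_{m,n}$, that the specified subset of $\K^{2m}\oplus\K^{2n}$ is an affine Lagrangian relation: its linear part is isotropic of half dimension (or the whole thing is empty), so that the two defining conditions of Definition~\ref{def:flexsym}'s ambient category hold. Invariance of each generator's name $\lfloor\cdot\rfloor$ under arbitrary leg permutations gives flexsymmetry, which must be verified so that the diagrams can legitimately be read as undirected graphs. Each equation (\textsc{Fusion}, \textsc{Colour}, \textsc{Copy}, \textsc{Bigebra}, \textsc{Zero}, \textsc{One}, \textsc{Box}, \textsc{Times}, \textsc{Plus}, \textsc{Id}) then reduces to a finite identity of subspaces, checked by computing both sides. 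This is routine linear algebra but requires care with the sign in $\omega_{m,n}$ and with the two phase parameters $a,b$, which jointly encode the affine shift and a shear/squeeze.

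Next I would prove universality using the generator-matrix description recalled above: an affine Lagrangian relation $m\to n$ is $\ker[Z\,|\,X]+\vec a$ for a generator matrix $[Z\,|\,X]$ satisfying the isotropy condition (equivalently, $ZX^{\trans}=XZ^{\trans}$). I would realize an arbitrary such relation by a single layer of flexsymmetric spiders read as a weighted bipartite-style graph: black spiders assemble the $Z$-block of position constraints, white spiders the $X$-block of momentum constraints, scalar coefficients are produced through the $b$-phase together with the \textsc{Times}/\textsc{Plus} structure, and the affine shift through the $a$-phase. Finally, for completeness, I would fix a canonical normal form --- a reduced-row-echelon generator matrix together with its shift vector --- and show the equations suffice to reach it. Concretely, \textsc{Fusion}/\textsc{Colour} merge and recolour adjacent spiders, the \textsc{Bigebra} law commutes the two colours past each other (this is where the symplectic/Fourier duality is implemented diagrammatically), and \textsc{Copy}/\textsc{Zero}/\textsc{One} delete redundant constraints; iterating these performs Gaussian elimination at the level of pictures. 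Since the canonical form is uniquely determined by the underlying subspace and shift, two diagrams with equal interpretation share it and are therefore provably equal. An alternative route, which I would keep in reserve, is to transport an already-complete presentation of affine relations \cite{gaa} across the generator-matrix embedding of $\ALR$ into a category of affine relations.

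The hard part will be the last step: verifying that the supplied equations are \emph{rich enough} to carry out the full normalization. In particular, one must show that the \textsc{Bigebra} interaction between the two spider colours, combined with the phase-manipulation rules, realizes every symplectic row operation while preserving the Lagrangian/isotropy constraint; and that the affine (non-linear) shift and the empty and rank-degenerate cases are handled uniformly. Ensuring that flexsymmetry is maintained at each rewriting step, so that the normal-form argument never secretly depends on a chosen wire ordering, is the main technical pitfall I would watch for.
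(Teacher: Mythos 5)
Your plan is essentially the route the paper takes: this theorem is imported from Booth et al.~\cite{gsa}, and the only proof content the present paper offers is the remark that completeness follows from the unique reduced-AP normal form of proposition~\ref{rem:scalable_reduced_AP-form}, which is exactly your canonical-generator-matrix normalization (soundness and universality being routine checks, as you say). Your ``reserve'' route via the presentation of affine relations~\cite{gaa} is also consistent with how the paper positions \(\GAA\) as the affine sub-fragment, so nothing in your proposal diverges from the paper's (sketched) argument.
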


Call the white and grey nodes \textit{\textbf{spiders}}. The Abelian group \(\K^2\) from which they take their labels is called their \textit{\textbf{phase-group}}. 
  Call the first component of \(\K^2\cong \K\oplus\K\) the \textit{\textbf{affine phase}}, and the second component the
  \textit{\textbf{symplectic phase}}.  Spiders which are unlabelled are called \textit{\textbf{phase-free}}.
  The distinguished boxes with labels \(1\) and \(-1\) are respectively the \textit{\textbf{symplectic Fourier transform}} and \textit{\textbf{inverse symplectic Fourier transform}}. The derived arrow-shaped generators are called \textit{\textbf{squeezing maps}}.

A prop equipped with the structure satisfying the axioms encapsulated by the
phase-free grey spiders here is called a \textit{\textbf{hypergraph-prop}}
\cite{kissinger_finite_2015}.  Conceptually, the grey spider should be regarded
as way to formally wire things together.

It is useful to have some geometric intuition for the generators of \(\GSA\). The affine phases of \(1\to 1\) grey and white spiders shift the affine plane in orthogonal directions; whereas, the symplectic phases shear the affine plane.  The symplectic Fourier transform rotates the affine plane by \(\pi/2\).  A squeezing map by \(a\) multiplies one axis in the plane by \(a\) and divides the other axis by \(a\). Spiders with more legs are more complicated to visualise..
\begin{remark}\label{rem:affrel}
  We can restrict the semantics to the prop \(\AR\) of affine relations whose morphisms are merely affine subspaces, by restricting the syntax to the following generators, for all \(a\in\K\):\quad\quad
  \(
    \tikzfig{figures/AffLagRel/gaa_generators}
  \)
  
There is a presentation, which we call \(\GAA\), whose axioms we have included in appendix~\ref{app:gaa}.  Here, the squeezing map corresponds to ordinary scalar multiplication.
\end{remark}

Let \(\Sym\) denote the set of symmetric \(n \times n\) matrices over \(\K\) and \(\Matrices\) the set  of \(m \times n\) matrices over \(\K\).  We define scalable versions of the generators of \(\ZX\), where now multiple wires are bundled together:

\begin{definition}
  \label{def:scalable_spider}
  Phase free scalable spiders are defined by induction on the number of wires.  The base case is trivial, and the inductive step is given by:
  \begin{equation}
    \label{eq:GAAthickspiders}
    \tikzfig{figures/AffLagRel/scaledppspidef}
  \end{equation}
  This allow us to define block matrices by induction as well, with the inductive step:
  \begin{equation}
  \label{eq:arrow_decomposition}
  \tikzfig{figures/AffLagRel/blockmatex}
  \end{equation}
  When restricted to \(\GAA \cong \AR\) these block matrices correspond to honest matrices, regarded as affine relations.
  Back in \(\GSA\), we also have scalable versions of the Fourier transform, as well and the relational converse of matrices:
  \begin{equation}
    \label{eq:scalable_box}
    \tikzfig{figures/AffLagRel/scalable/box-def}
    \quad\quad
    \tikzfig{figures/AffLagRel/scalable/matrix_converse}
  \end{equation}
  Now we define \(\K^{k}\times \Sym[k]\)-phased scalable spiders.
  For the inductive step, take \(n,m\in \N\), \(a,b\in \K\), \(\vec v, \vec w \in \K^k\) and \(A\in\Sym[k]\), we define  \([k+1]\)-coloured phased spiders as:
  \begin{equation}
    \label{eq:thick_spider_def}
    \tikzfig{figures/AffLagRel/scalable/b-spider-def}
    \quad\quad
    \tikzfig{figures/AffLagRel/scalable/w-spider-def}
  \end{equation}
\end{definition}

A \([k]\)-coloured \(n\to m\) spider parametrizes an undirected open graph with edges coloured by \(\K\), vertices coloured by \(\K^2\), with \(n\) / \(m\) distinguished inputs / outputs:

\begin{example}
  \label{ex:scalable_graph_example}
  On the left hand side we have a ``graph state'' with \(n = 0\), \(m=1\) and \(k=3\).  On the right hand side, we have a spider with  \(m = 3\),  \(n = 2\) and \(k =
  3\):
  \begin{equation}
    \tikzfig{figures/AffLagRel/scalable/b-spider-explicit-simple}
    \quad\quad\quad
    \tikzfig{figures/AffLagRel/scalable/b-spider-explicit}
  \end{equation}
\end{example}

Theorem~\ref{thm:GSA} follows from the fact that nonempty states \(0\to n\) in \(\ZX\) can be uniquely represented by a more general notion of graph where not all vertices are outputs:
\begin{proposition}
  \label{rem:scalable_reduced_AP-form}
  Every nonempty state \(0\to n\) in \(\ZX\) is reducible to a unique  \textbf{reduced-AP form}, given by a 5-tuple \((L,\Sigma, \vec x, \vec \mu, \varsigma)\),  where \(m \leqslant n \in \N\), \(\vec{x} \in \K^m\), \(\vec{\mu} \in \K^{n-m}\), \(L \in \Matrices[m][n-m]\) and \(\Sigma \in \Sym[n-m]\), where \(\varsigma\in \Matrices[n][n]\) is a permutation:
  \begin{equation*}
    \label{eq:scalable-reduced-AP-form}
    \tikzfig{figures/AffLagRel/scalable/reduced-AP-form}
  \end{equation*}
\end{proposition}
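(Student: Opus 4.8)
The plan is to establish the two halves of the claim separately: \emph{existence} of a reduced-AP form by a syntactic rewriting argument inside $\ZX$, and \emph{uniqueness} by showing that the semantic interpretation already separates distinct $5$-tuples. Since Theorem~\ref{thm:GSA} is to be deduced from this proposition, I may not assume completeness of $\ZX$; I rely only on the generating equations of $\ZX$ (for the rewriting) and on soundness of $\interp{-}_{\abbralagr}^{\zx}$ into $\ALR$ (for the semantic half).

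For existence, I would first use flexsymmetry together with the fusion and colour rules to bring an arbitrary state $0\to n$ into a \emph{graph-like} shape: a layer of phase-free grey spiders encoding the wiring, decorated by white/grey phases and symplectic Fourier boxes, which in scalable notation (Definition~\ref{def:scalable_spider}) collapses to a single phased scalable spider fed through a block matrix. The core of the argument is then a graphical Gaussian elimination on the associated generator matrix $[Z\,|\,X]$ of the underlying affine Lagrangian subspace: the bigebra and copy rules implement elimination of connections, spider fusion accumulates the resulting weights into the symmetric block $\Sigma$ and the displacement data, and the symplectic Fourier transform performs the pivoting step that exchanges the position and momentum roles of a mode whenever its position entry is not available as a pivot. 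Running this to completion partitions the $n$ modes into $n-m$ ``graph'' modes (those contributing pivots, whose momenta become the affine functions of positions recorded by $\Sigma$ and $L$) and $m$ ``pinned'' modes (the position eigenstates recording $\vec x$), with a final permutation $\varsigma$ reordering the legs into the standard layout. The split point $m$ is forced to be the codimension of the projection of the underlying subspace onto its position coordinates, hence an invariant of the state.

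For uniqueness, I would compute the interpretation of a diagram in reduced-AP form and read off that every entry of the $5$-tuple is recoverable from the affine Lagrangian subspace $V+\vec a \subseteq \K^{2n}$ it denotes. Concretely: $m = n - \dim p(V)$, where $p$ is the projection onto the position coordinates, so $m$ and $n-m$ are invariants; fixing $\varsigma$ to be the order-preserving permutation determined by the canonical echelon pivot partition of $p(V)$ (which is itself unique, exactly as reduced row echelon pivots are) makes $\varsigma$ unique; and once the pivots are fixed, $\Sigma$ (symmetric by the Lagrangian condition), $L$, and the displacements $\vec x,\vec\mu$ are the unique solutions of the linear system defining a generator matrix of $V$ together with the affine shift $\vec a$. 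Thus the assignment from $5$-tuples to affine Lagrangian subspaces is injective. Combined with soundness---the state and its reduced-AP form have equal interpretations---injectivity forces the $5$-tuple to be determined by the state, which is the desired uniqueness.

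The main obstacle is the existence half: making the graphical Gaussian elimination simultaneously \emph{terminating} and \emph{canonical}. I must show the rewriting always reaches the prescribed shape while handling the inhomogeneous affine part and the symmetry constraint on $\Sigma$ uniformly throughout (not merely in the linear homogeneous case), and I must ensure that the pivot-selection strategy driving the elimination coincides with the canonical echelon choice used in the uniqueness argument, so that the form produced by rewriting is exactly the one the semantics singles out. The symplectic bookkeeping---checking that each elimination step preserves the Lagrangian property and that all couplings between graph and pinned modes are consistently gathered into the single matrix $L$---is precisely where the symplectic Fourier transform must be deployed with care.
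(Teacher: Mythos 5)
The paper never actually proves Proposition~\ref{rem:scalable_reduced_AP-form}: Section~\ref{sec:gsa} is explicitly a review of Booth et al.~\cite{gsa} and the result is imported from there, so there is no in-paper argument to compare against step by step. That said, your two-part architecture --- a syntactic rewriting argument producing \emph{some} reduced-AP form, followed by a purely semantic proof that the interpretation is injective on reduced-AP forms, with soundness then forcing uniqueness --- is exactly the pattern this paper uses for its own completeness theorems (lemma~\ref{lm:gauss_to_NF} together with lemma~\ref{lm:gauss_uniq} for theorem~\ref{thm:gauss_completeness}, and analogously for theorem~\ref{thm:sympgauss_completeness}), and it is the standard route in this literature. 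Your uniqueness half is logically sound and essentially complete: you correctly avoid assuming completeness, you correctly identify that $\varsigma$ must be pinned by a canonical pivot convention for the $5$-tuple (rather than merely the diagram up to flexsymmetry) to be unique, and reading off $m$, $L$, $\Sigma$, $\vec x$, $\vec\mu$ from the affine Lagrangian subspace via the echelon pivots of the position projection is the right invariant-extraction argument.

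The genuine gap is in the existence half, and you have named it yourself without closing it. ``Graphical Gaussian elimination driven by bigebra, copy, fusion, and Fourier pivoting'' is a proof \emph{plan}, not a proof: what must actually be exhibited is a terminating rewriting strategy using only the axioms of $\ZX$, with the inhomogeneous affine phases and the symplectic phases carried through every elimination step, and a verification that the terminal shape really is the prescribed one. Your additional worry --- that the pivot selection of the rewriting must coincide with the canonical echelon choice --- is, by contrast, not a real obstacle: if you phrase uniqueness as ``any two reduced-AP forms of the same state have equal interpretations, hence equal $5$-tuples by injectivity,'' then the rewriting only needs to reach \emph{a} reduced-AP form, not a canonically chosen one, and I would restructure the argument that way rather than trying to make the elimination itself canonical. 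As written, then, the proposal is a credible sketch of the correct approach whose hardest step --- termination and well-definedness of the graphical elimination in the presence of Fourier boxes and affine data --- is acknowledged but not supplied.
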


%Remark that \(m\) is the number of vertices which are not connected to outputs.

%\begin{remark}
%
%When \(\vec x=0\) and \(\Sigma = 0\), this is the normal form of the affine subspace \( \varsigma^\trans[0_m^\trans\ {\vec \mu}^\trans]^\trans+\ker([1 \ L]\varsigma) \).
%\end{remark}

It is also useful to think of this normal form concretely in terms of generator matrices:
\begin{remark}
  The normal form of proposition~\ref{rem:scalable_reduced_AP-form} corresponds to reducing a nonempty affine Lagrangian relation to a unique generator matrix:
  \[
    \ker\left(
      \begin{bmatrix}[cc|cc]
        I_m & L  & \Sigma     &  0\\
        0     & 0  & -L^\trans 	& I_{n-m}
      \end{bmatrix}
      \begin{bmatrix}
        \varsigma & 0\\
        0               & \varsigma
      \end{bmatrix}
    \right)
    +
    \begin{bmatrix}
      \vec x \\ \vec \mu
    \end{bmatrix}
    \begin{bmatrix}
        \varsigma & 0\\
        0               & \varsigma
     \end{bmatrix}
  \]
  
  The affine part of the subspace is irrelevant for  much of the calculations; therefore, we will often just refer to the pair  \(([I,L]\varsigma,\Sigma)\) as the \emph{\textbf{AP-invariant}} of the Lagrangian subspace.
\end{remark}

%\subsection{Coisotropic relations}
%
%It takes little additional effort to present \(\ACR\).  To this end, recall the following construction:
%
%\begin{definition}
%Given a \dag-compact prop, \(\mathsf C\), the {\bf discard construction} of \((\mathsf{C},\dag)^\disc\) is presented by freely adding a generator \(\tikzfig{figures/AffLagRel/discard}: 1\to 0\) to \(\mathsf C\) modulo the equations such that for all isometries  \(H:n\to m\) in \(\mathsf C\):
%
%\[\tikzfig{figures/AffLagRel/discardconstruction}\]
%\end{definition}
%
%Applying this to \(\ACR\), we have:
%
%\begin{theorem}
% \((\mathsf{C},\dag)^\disc \cong \ACR\).
%\end{theorem}
%
%Luckily, we already know what the isometries are, thus:
%
%\begin{corollary}
% \((\mathsf{C},\dag)^\disc \cong \ACR\) is presented by adding the generator \(\tikzfig{figures/AffLagRel/discard}: 1\to 0\)  to \(\ZX\) modulo the following equations, for all \(a,b\in\K\):
%\[\tikzfig{figures/AffLagRel/eqdisc}\]
%\end{corollary}
\subsection{Weil representation into Hilbert spaces}
\label{subsec:gsaapplications}

In this subsection, we begin to discuss how \(\ALR\) gives a semantics for fragments of quantum mechanics.  First, we recall the state space for finite dimensional quantum systems:
\begin{definition}
  Given a finite set \(X\), there is an associated \(|X|\)-dimensional Hilbert space of \textbf{square summable functions} on \(X\):
  \begin{equation*}
    \ell^2(X)\coloneqq \left\{ f:X\to \C \ \middle| \ \sum_{x \in X} | f(x) |^2 <\infty \right\}
    \quad\text{with inner product}\quad
    \langle f, g \rangle \coloneqq \sum_{x\in X} \overline{f}(x)g(x) 
  \end{equation*}
  The Dirac deltas for elements of \(X\) determine a distinguished orthonormal basis \(\{|x\rangle \}_{x\in X}\) on on \(\ell^2(X)\).
  Given some \(1 \leq d\in \N\), we identify the Hilbert space \(\ell^2(\Z/d\Z)\) with the state space of a \textbf{qudit}, and \(\ell^2((\Z/d\Z)^n)\cong (\ell^2(\Z/d\Z))^{\otimes n}\) with the state space of \(n\)-qudits.  
\end{definition}

Informally, qudits are d-dimensional quantum analogues of classical bits: the basis elements (and their normalized linear combinations) are the possible values of the qudit.

Building on the work of Gross \cite{Gross2006}, Comfort and Kissinger showed that for odd prime \(p\), there is a projective equivalence between \(\ALR[\Zp]\) and ``{\em odd-prime qudit stabiliser quantum circuits}'' (where there is an isomorphism of crings \(\Zp\cong \Z/p\Z \)) \cite{alr}. 
Symplectic vector spaces \((\mathbb{F}_p^{2n},\omega_n)\)  are identified with Hilbert spaces \(\ell^2(\mathbb{F}_p^n)\);  and nonempty affine Lagrangian subspaces \((S+\vec a)\subseteq (\mathbb{F}_p^{2n}, \omega_{n})\) with \textit{\textbf{stabiliser states}} \(\ell^2(\mathbb{F}_p^0)\to\ell^2(\mathbb{F}_p^n)\) given by the projector:

\begin{equation}\label{eq:discreteweil}
  \dfrac{1}{p^n}\sum_{{[\vec z}^\trans\ {\vec x}^\trans]^\trans \in S} \bigotimes_{j=0}^{p^n-1} \exp(2\pi i a_i/p) \mathcal{Z}^{z_j}\mathcal{X}^{x_j}
  \quad
  \text{where}
  \quad
  \mathcal{Z}|j\rangle \coloneqq \exp(2\pi i j/p ) |j\rangle
  \quad
  \mathcal{X}|j\rangle \coloneqq |j+1\rangle
\end{equation}

Because finite dimensional Hilbert spaces, and in particular, qudit stabiliser circuits are compact, this determines the representation completely. There is a natural generalization of qudits to infinite dimensions, used for example, in quantum optics:
\begin{definition}
  The Hilbert space of \textbf{square  integrable functions} on \(\R^n\) is:
  \begin{equation*}
    L^2(\R^n)\coloneqq \left\{ f:\R^n\to \C  \ \middle| \ \int_{\R^n} | f(\vec v) |^2\, d\vec v <\infty \right\}
    \quad\text{with}\quad
    \langle f, g \rangle \coloneqq \int_{\R^n} \overline{f}(\vec v)g(\vec v) \, d\vec v
  \end{equation*}

  We identify the Hilbert space \(L^2(\R)\) with the state space of a \textbf{qumode}, and \(L^2(\R^n)\cong (L^2(\R))^{\otimes n}\) with the state space of \(n\)-qumodes. 
\end{definition}
%
%There is however a problem, if one tries to generalize the representation of stabilizer circuits, taken in equation~\ref{eq:discreteweil} to qumodes. The analagous projector is not bounded, and therefore not a map in the category of Hilbert spaces.
%Moreover, infinite dimensional Hilbert spaces is not  even compact!
%

One might hope to obtain a similar projective representation of \(\ALR[\R]\) into infinite dimensional Hilbert spaces.
Indeed, affine symplectomorphisms on \((\R^{2n},\omega_n)\)
are known to give a projective representation of the group of ``Gaussian unitaries'' on the  Hilbert
space \(L^2(\R^n)\) (see subsection~\ref{ssec:gaussqm}, definition~\ref{def:gaussunitary}).
However, attempting to lift the representation of states taken in the discrete setting to qumodes reveals that the analogous projector to that in equation~\eqref{eq:discreteweil} is not bounded, and therefore not a state in the category of Hilbert spaces.
Even worse, the category infinite dimensional Hilbert spaces is not even compact, so it does not suffice to only concern ourselves with states.

%there is a problem when trying to define this projective representation
%on states \(0 \to n\) in \(\ALR[\R]\). These informally correspond to
%uniform (quantum) probability distributions and their images under affine
%symplectomorphisms. Since there is no such state in \(L(\R^n,\C)\), no such
%quantisation scheme exists \cite{???, all the quantisation stuff}.

Nevertheless, this technical difficulty has not stopped physicists from widely
employing such ``unphysical'' states in their calculations, where they are
known as \emph{infinitely-squeezed} states \cite{menicucci_universal_2006,
gu_quantum_2009}. Because \(\ALR[\R]\) has a simple presentation in terms
of the hypergraph prop \(\GSA[\R]\), this provides  both a rigorous syntax
and semantics for infinitely squeezed states and Gaussian operators: so that
\(\GSA[\R]\) can be regarded as the ``ZX-calculus for infinitely squeezed
quantum circuits with Gaussian operators.''  For example, we can think of the
diagram \(\tikzfig{figures/AffLagRel/dirac-delta}\) as a formal Dirac delta
centered at \(r\in \R\).

The hypergraph structure of \(\ALR[\R]\) has already been exploited to perform calculations on affinely constrained, {\em classical} mechanical systems.  For example, in the context of electrical circuits, the hypergraph structure is interpreted as idealised-resistance-free wiring \cite{network,passive}.
Quantising this classical mechanical interpretation of \(\ALR[\R]\) is a notoriously difficult problem in geometric quantisation.
Much research has been done on the geometric quantization of classical mechanics although, oftentimes they work in the nonlinear setting of Lagrangian submanifolds between symplectic manifolds
\cite{guillemin_problems_1979, weinstein_symplectic_2009, benenti_linear_1983, lawruk_special_1975, tulczyjew_category_1984, dold_symplectic_1982, benenti_category_1982, urbanski_structure_1985, kijowski_symplectic_1979}.

There \emph{is} however a natural continuous-variable analogue of the
finite-dimensional stabiliser circuits, called \emph{\textbf{Gaussian quantum
mechanics}} (GQM) \cite{weedbrook_gaussian_2012}, where the states of GQM are genuine quantum states
(ie. normalized, bounded linear maps out of \(\C\)). In this setting, the measurement statistics are described by
Gaussian probability distributions, and the reversible transformations are given by Gaussian unitaries. However, as we have just discussed, GQM is not adequately axiomatised
by \(\ALR[\R]\).  The main goal of this article to obtain a presentation of a  hypergraph
prop that preserves the convenient features of \(\ALR[\R]\) while being
expressive enough to capture GQM. In order to transform infinitely squeezed states to Gaussian states, we need to add some notion of Gaussian blur to them, which will have the effect of making them ``less sharp,'' and thus continuous.

%
%This gives a semantics for classical mechanics
%
%Over finite fields, it gives a semantics for finite dimensional stabiliser circuits
%
%
%Over infinite dimensions, introduce the problem with squeezing.
%
%Square integrable functions, continuity, infinite squeezing and so on can be introduced here.
%
%The ``problem'' which we introduce here can naturally be solved in the next section.

\section{Classical Gaussians}
\label{sec:gauss}
In this section we first review how Gaussian transformations can be composed. Next, we give a novel concrete description of the prop of Gaussian relations of Stein and Samuelson \cite{gaussrel} as a sub-prop of \(\ALR[\C]\).  Finally, we give a presentation for Gaussian relations by extending that of \(\GAA[\R]\).
\subsection{Gaussian distributions and transformations}
\label{ssec:gaussprob}
First, we review multivariate Gaussian distributions.  Given any \(n\in\N\) and \(\Sigma \in \Sym[n][\K]\), denote that \(\Sigma\) is respectively positive semi-definite and positive definite by \(0 \preceq \Sigma\) and \(0\prec \Sigma\).
\begin{definition}
  Given any \(n\in\N\), an \(n\)-variable \textbf{Gaussian distribution}  \(\mathcal{N}(\Sigma,\vec \mu)\)  is a probability distribution on \(\R^n\) determined by some \(0\preceq \Sigma\in\Sym[n][n]\), called the \textbf{covariance matrix} and a vector \(\vec \mu \in \R^n\), called the \textbf{mean}.  The characteristic function of  \(\mathcal{N}(\Sigma,\vec \mu)\) is given by
  \(
    \vec u \mapsto \exp(i{\vec u}^\trans\vec \mu-\frac{1}{2}{\vec u}^\trans \Sigma \vec u)
  \).  
   Moreover, when \(0\prec \Sigma\), then  \(\mathcal{N}(\Sigma,\vec \mu)\) has a density function given by 
  \(
    \vec u \mapsto {\exp(\frac{-1}{2}(\vec u-\vec \mu)^\trans \Sigma^{-1} (\vec u-\vec \mu))}/{\sqrt{(2\pi)^n \det (\Sigma)}}
  \).
%  Just as we did for symplectic forms, if \(X\) has finite  dimension \(n\) with a distinguished ordered basis, then the covariance form can be represented by a 
\end{definition}
%Conceptually, given a Gaussian distribution \(\mathcal{N}(\Sigma,\mu)\) on \(\R^n\), with positive covariance, the mean \(\mu\) describes the most dense point in the distribution. The covariance matrix \(\Sigma\) describes how the density will uniformly 
%
%We write \(0_n \in \R^n\) for the zero vector of size \(n\), and \(\mathbf{0}_m\) for the the \(m \times m\) zero matrix.

Gaussian distributions can be pushed forward along affine transformations to form a prop, following Fritz \cite[\S 6]{Fritz2020}:
\begin{definition}
  The prop, \(\Gauss\), of Gaussian transformations has:
  \begin{itemize}
%    \item \textbf{objects}: natural numbers;
    \item \textbf{Morphisms}: given by triples \( (A\in \Matrices[m][n],0\preceq\Sigma\in\Sym[m][\R],\vec \mu\in \R^m):n \to m\);
    \item \textbf{Composition}: \((B,\Delta,\vec \nu) \circ (A,\Sigma,\vec \mu) \coloneqq (BA, \Delta + A^\trans  \Sigma A, \vec\nu + A \vec\mu)\) and \(1_n\coloneqq (1_n,0,0)\);
%     \item \textbf{identity}: on \(n\) given by 
    \item The rest of the prop structure is given pointwise by the direct sum.
  \end{itemize}
\end{definition}

Affine transformations embed in Gaussian transformations via \((T,\vec a)\mapsto (T, 0, \vec a)\). Thus, is natural to ask if Gaussian distributions can also be composed relationally. To find the right category we need to get a hold of the subobjects that will arise by relational composition.  The following notion was first introduced by Willems \cite{Willems}, and later coined by Stein and Samuelson \cite{gaussrel} to this end:
\begin{definition}
An \textbf{extended Gaussian distribution} on \(\R^n\) consists of a linear subspace \(D\subseteq \R^n\) in addition to a Gaussian distribution \(\mathcal{N}(\Sigma,\vec \mu)\) on the quotient \(\R^n/D \cong \R^{n-\dim(D)}\).
\end{definition}
\subsection{Gaussian relations}
\label{ssec:gaussrel}
In this subsection, we refine the morphisms of complex affine Lagrangian relations in order to us to compose extended Gaussian distributions, relationally:
\begin{definition}
  A  complex Lagrangian relation \(S:n\to m\) is:
  \begin{itemize}
    \item \textbf{positive} when for all  \(\vec v\in S\), then \(i\omega_{n,m}\!\left(\overline{\vec v}, \vec v\right) \geq 0\);
    \item \textbf{quasi-real} when it is positive, and for all \([\vec v_I^\trans \ \vec v_O^\trans]^\trans, [\vec w_I^\trans\ \vec w_O^\trans]^\trans \in S\): \[\chi_m(\vec v_O, \vec w_O)-\chi_n(\vec v_I, \vec w_I)=0\quad\text{where}\quad \chi_n ([\vec z_0^\trans\ \vec x_0^\trans]^\trans,[\vec z_1^\trans\ \vec x_1^\trans]^\trans)\coloneqq \overline{\vec{z}_{0}}{}^\trans\vec{x}_1+{\overline{\vec{x}_0}}{}^\trans\vec {z}_1\]
    % when for all \([{\vec{z}_0}^\trans | {\vec{x}_0}^\trans]^\trans,  [{\vec{z}_1}^\trans | {\vec{x}_1}^\trans]^\trans\! \in S\), then \({\overline{\vec{z}_{0}}}{}^\trans\vec{x}_1+{\overline{\vec{x}_0}}{}^\trans\vec {z}_1=0\).
  \end{itemize}
   A nonempty affine Lagrangian relation  \((S+\vec a)\) is positive (resp. quasi-real) when both  \(\vec a\) is real and \(S\) is positive (resp. quasi-real), where empty affine Lagrangian relations are quasi-real.
  %A \textbf{positive} (resp. \textbf{quasi-real}) \textbf{affine Lagrangian relation} is a complex affine Lagrangian relation whose name is a positive (resp. quasi-real) affine Lagrangian subspace. 
\end{definition}

Both of these notions induce more refined compact props:
\begin{lemma}\label{lem:closure}
  Both positive and quasi-real complex affine Lagrangian relations are closed under composition and tensor product, and the prop structure is positive and quasi-real.
\end{lemma}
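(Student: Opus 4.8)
The plan is to recast both conditions as statements about Hermitian forms that decompose additively along the glued wire, so that closure becomes transitivity of an inequality and of an equality. To this end I would introduce the single-system Hermitian form $g_n(\vec u,\vec v)\coloneqq i\,\omega_n(\overline{\vec u},\vec v)$: skew-symmetry of $\omega_n$ together with complex conjugation makes $g_n$ Hermitian, and for a relation $S\colon n\to m$ the twisting in $\omega_{n,m}$ gives, for any $\vec v=[\vec v_I^\trans\ \vec v_O^\trans]^\trans\in S$, the identity $i\,\omega_{n,m}(\overline{\vec v},\vec v)=g_m(\vec v_O,\vec v_O)-g_n(\vec v_I,\vec v_I)$. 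Thus positivity of $S$ says exactly that the output value dominates the input value, $g_m(\vec v_O,\vec v_O)\geq g_n(\vec v_I,\vec v_I)$, for every $\vec v\in S$. In the same way I would check directly that $\chi_n$ is a Hermitian sesquilinear form, so that the quasi-real condition says precisely that $\chi$ is transported unchanged from inputs to outputs, i.e. $\chi_m(\vec v_O,\vec w_O)=\chi_n(\vec v_I,\vec w_I)$ for all $\vec v,\vec w\in S$.

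With this reformulation, closure of the \emph{linear} parts under composition is a chaining argument. Given $S\colon n\to m$ and $T\colon m\to k$ and an element of $T\circ S$ with witness $\vec b$, positivity of $S$ and of $T$ give
\[ g_k(\vec c,\vec c)\ \geq\ g_m(\vec b,\vec b)\ \geq\ g_n(\vec a,\vec a), \]
and cancelling the middle term yields positivity of $T\circ S$; the identical computation with the equalities $\chi_k(\vec c,\vec c')=\chi_m(\vec b,\vec b')=\chi_n(\vec a,\vec a')$ gives quasi-realness (the witnesses $\vec b,\vec b'$ may be chosen arbitrarily, since both steps hold for any choice). Closure under tensor product is additivity: since $\omega$ and $\chi$ split as direct sums over the atomic wires, the form $g_m-g_n$ and the $\chi$-difference of $S\oplus T$ are the sums of those of the factors, and a sum of nonnegative (resp. vanishing) quantities is nonnegative (resp. vanishing). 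That the structural morphisms are positive and quasi-real I would verify by direct substitution: for the identity, symmetry, cup and cap the forms $g_m-g_n$ and $\chi_m-\chi_n$ all evaluate to $0$ — for instance on $\eta_1$ the cross terms $\overline{z}x-\overline{x}z$ cancel — which makes them simultaneously positive with value $0$ and quasi-real.

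It remains to handle the affine shift, and this is where I expect the real work to be. For the empty relation the statement holds by convention, and the linear part is already settled by the above; so the content is to show that a nonempty composite of relations with \emph{real} affine shift again has a real shift, i.e. contains a real point. I would reduce this to the existence of a real witness $\vec b$ matching at the glued wires, using the principle that a consistent system of \emph{real} linear equations that admits a complex solution admits a real one. The difficulty is that the defining data of a positive relation need not be real, because its linear part is genuinely complex. When the relation is conjugation-invariant the composite is too, and averaging any point with its conjugate produces a real point; I would try to establish this in the quasi-real case, or else exhibit a real witness directly.

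The positive (quantum) case is the genuine obstacle, since there the averaging trick fails. Here I would instead track the shift vector through the reduced-AP form of Proposition~\ref{rem:scalable_reduced_AP-form}: writing the composite as a single generator matrix plus a shift, one uses positivity to separate the real affine directions from the genuinely complex (squeezed) ones and to show the shift can be chosen real, possibly by reducing to a quasi-real relation after discarding the squeezed coordinates. Finally, I would note that tensor products and the structural morphisms preserve reality trivially, their shifts being real by inspection, which together with the linear-part arguments completes the proof that both sub-props are well defined.
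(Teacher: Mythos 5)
Your first two paragraphs reproduce the paper's proof: the paper establishes closure under composition by inserting the witness at the glued wire and telescoping $i\omega_{n,k}(\overline{\vec v},\vec v)$ (respectively the $\chi$-difference) into the sum of the corresponding quantities for the two factors, verifies the cup by exactly the cancellation of cross terms you describe, and dismisses the cap, identity, symmetry and tensor product as analogous. So on everything the paper actually proves, your approach coincides with its own.

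Your last two paragraphs concern the reality of the affine shift, which the paper's proof does not treat at all: it runs the telescoping computation on elements of the affine composite and never returns to the clause of the definition requiring $\vec a$ to be real. You are right to flag this as the place where residual work would lie, but note that your treatment of it remains a sketch (you propose an averaging trick that you yourself observe fails in the positive case, and then defer to an argument via reduced-AP forms without carrying it out). If you wanted a self-contained argument you would need to complete that step, for instance using Propositions~\ref{prop:AP_positive} and~\ref{prop:exgaussapform}; measured against the paper's own proof, however, this is additional care on your part rather than a gap.
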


There are different equivalent definitions of positive Lagrangian subspaces:
\begin{proposition}\label{prop:positivechar}

	Let $\mathcal{L}$ be a Lagrangian subspace of $ (\mathbb{C}^{2n},\omega_n) $ with AP-invariant $(E,\phi)$, where $r\coloneqq\rank(\Im(\phi))$ then the following conditions are equivalent:
	\begin{enumerate}
		\item $E$ has real coefficients and $\Im(\phi)\succeq 0 $;
		\item   $\mathcal{L}$ is real symplectomorphic to \(\ker \begin{bmatrix}[c|c] I_n & i\diag(I_r, \vec 0)\end{bmatrix}\).
		
		That is, there exist a real symplectomorphism \(S\) such that 
		\(\interp{\tikzfig{SympGauss/iNformnnew}}_\abbralagr^\zx=\mathcal L\);
		% $\interp{\tikzfig{SympGauss/Nform}}_\abbralagr^\zx=\ker\begin{bmatrix}[c|c]
		%	I & i J_r
		%\end{bmatrix}=\mathcal{L}$;
		\item  $\mathcal L$ is positive.
	\end{enumerate}
\end{proposition}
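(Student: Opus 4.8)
The plan is to establish the cycle $(1)\Rightarrow(2)\Rightarrow(3)\Rightarrow(1)$, presenting the implications in whichever order is most convenient. The key preliminary observation, which drives $(2)\Rightarrow(3)$, is that positivity is invariant under real symplectomorphisms: a real symplectic map $S$ commutes with complex conjugation, so $\overline{S\vec v}=S\overline{\vec v}$, and it preserves $\omega_n$; hence $i\omega_n(\overline{S\vec v},S\vec v)=i\omega_n(\overline{\vec v},\vec v)$ for all $\vec v$. It therefore suffices to check positivity on the canonical representative of $(2)$. A vector of $\ker[I_n\mid i\diag(I_r,\vec 0)]$ has $z_j=-ix_j$ for $j<r$ and $z_j=0$ otherwise, so $i\omega_n(\overline{\vec v},\vec v)$ collapses to $\sum_{j<r}\lvert x_j\rvert^2$ up to the fixed positive factor determined by the convention for $\omega_n$, which is manifestly non-negative. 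This also exhibits $r$ as the number of ``finitely-squeezed'' modes.

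For $(3)\Rightarrow(1)$, I would observe that $\vec v\mapsto i\omega_n(\overline{\vec v},\vec v)$ is a (real-valued) Hermitian form on $\mathcal L$, and compute it in the coordinates supplied by the AP-invariant of Proposition~\ref{rem:scalable_reduced_AP-form}. When the position block of the generator matrix is full rank, $\mathcal L$ is the graph of $-\phi$ and the form equals $\overline{\vec x}^{\trans}\Im(\phi)\vec x$ up to a fixed factor, so its positivity is exactly $\Im(\phi)\succeq 0$. In general one must also read off the realness of $E$: restricting the inequality to the vectors whose remaining (momentum) coordinates vanish yields a Hermitian form whose cross terms are governed by the imaginary part of $E$, and non-negativity forces those to vanish, i.e.\ $E$ real. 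Thus positivity is equivalent to ``$E$ real and $\Im(\phi)\succeq 0$'', with $r=\rk(\Im(\phi))$.

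For $(1)\Rightarrow(2)$ I would construct the normalising real symplectomorphism in stages, using that the real symplectic group acts on a graph-type Lagrangian $\ker[E\mid\phi]$ through the real-parameter generators of $\ZX$ from Theorem~\ref{thm:GSA}. Since $E$ is real, a real change of position coordinates in $\Gl[n][\R]$ together with symplectic Fourier transforms on the rank-deficient modes brings $E$ to $[I\mid 0]$ shape, separating the full-rank ``position'' directions from the infinitely-squeezed ones. A real shear $\left(\begin{smallmatrix}I & B\\0 & I\end{smallmatrix}\right)$ with $B=\Re(\phi)$ symmetric then acts by $\phi\mapsto\phi-B$, erasing the real part and leaving $\phi=i\Im(\phi)$ with $\Im(\phi)\succeq 0$. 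Finally, choosing real $G$ with $G\,\Im(\phi)\,G^{\trans}=\diag(I_r,\vec 0)$ (possible since a real positive-semidefinite symmetric matrix is congruent to $\diag(I_r,\vec 0)$), the real symplectic congruence $\left(\begin{smallmatrix}G & 0\\0 & (G^{-1})^{\trans}\end{smallmatrix}\right)$ acts by $\phi\mapsto G\phi G^{\trans}$ and sends $i\Im(\phi)$ to $i\diag(I_r,\vec 0)$, yielding the canonical form of $(2)$ with the same $r=\rk(\Im(\phi))$.

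The main obstacle is the $(1)\Rightarrow(2)$ normal-form construction, specifically coordinating the reduction of $E$ with that of $\phi$: the symplectic Fourier transforms used to straighten the rank-deficient (infinitely-squeezed) directions of $E$ also permute the roles of position and momentum, hence reshuffle $\phi$, so one must track how $\Im(\phi)$ and its rank transform to be sure the procedure terminates at $\diag(I_r,\vec 0)$ with $r$ unchanged. The realness of $E$ is precisely what guarantees every step can be performed by a \emph{real} symplectomorphism, which is what keeps positivity --- and the invariant $r$ --- under control throughout.
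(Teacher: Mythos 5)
Your proof is essentially correct, but it reorganises the logic in a way that genuinely differs from the paper's proof, and the difference is worth noting. The paper proves $(1)\Rightarrow(2)$, then $(2)\Rightarrow(1)$, then $(1)\Leftrightarrow(3)$; its $(2)\Rightarrow(1)$ is a long explicit computation characterising when a real symplectomorphism can carry $\ker[I\mid i\Phi]$ onto $\ker[I\mid i\diag(I_r,\vec 0)]$, solving the resulting system $B=-J_rC\Phi$, $J_rD=A\Phi$, etc., to conclude $\Phi=D^\trans J_r D+\Phi C^\trans J_r C\Phi\succeq 0$. You avoid that computation entirely by closing the cycle through $(3)$: since a real symplectic $S$ commutes with conjugation and preserves $\omega_n$, the sesquilinear form $\vec v\mapsto i\omega_n(\overline{\vec v},\vec v)$ is a real-symplectic invariant, so positivity of the canonical representative gives $(2)\Rightarrow(3)$ in two lines, and $(2)\Rightarrow(1)$ then follows for free from $(3)\Rightarrow(1)$. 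This is a cleaner route. Your $(3)\Rightarrow(1)$ and $(1)\Rightarrow(2)$ coincide in substance with the paper's $(1)\Leftrightarrow(3)$ (restrict the Hermitian form to $\mathcal L$ in AP coordinates, obtaining $\bigl[\begin{smallmatrix}\Im(\phi)&-\Im(L)\\-\Im(L)^\trans&0\end{smallmatrix}\bigr]$) and its $(1)\Rightarrow(2)$ (real shear killing $L$ and $\Re(\phi)$, then orthogonal diagonalisation and rescaling of $\Im(\phi)$ via $\bigl(\begin{smallmatrix}G&0\\0&G^{-\trans}\end{smallmatrix}\bigr)$).

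Two small imprecisions to fix in a written-up version. First, in $(3)\Rightarrow(1)$ the sentence about ``restricting to vectors whose remaining (momentum) coordinates vanish'' cannot detect $\Im(E)$: setting those parameters to zero kills exactly the cross terms you need. The correct argument is that the restricted form is a block matrix whose lower-right diagonal block is $0$, and a semidefinite Hermitian matrix with a zero diagonal block must have zero off-diagonal blocks there, forcing $\Im(L)=0$; this is what the paper's explicit computation delivers. Second, your evaluation of $i\omega_n(\overline{\vec v},\vec v)$ on $\ker[I_n\mid i\diag(I_r,\vec 0)]$ gives $-2\sum_{j<r}\lvert x_j\rvert^2$ with the main text's sign conventions; the paper itself wavers between $i\omega$ and $H=\tfrac{-i}{2}\Omega$, so your hedge ``up to the fixed positive factor'' is acceptable, but you should pin the sign down once and use it consistently.
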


Similarly, there are equivalent ways to define quasi-real Lagrangian subspaces:
\begin{proposition}\label{prop:quasirealchar}

	Let $\mathcal{L}$ be a Lagrangian subspace of $ (\mathbb{C}^{2n},\omega_n) $ with AP-invariant $(E,\phi)$, where $r\coloneqq\rank(\Im(\phi))$, then the following conditions are equivalent:
	\begin{enumerate}
		\item $\Re(\phi)= 0 $;
		\item 
		\(\mathcal L\) is real diagonal symplectomorphic to
		\(\ker{\begin{bmatrix}[cc|cc]
			I_r& 0  & i \diag(I_r, \vec 0) & 0 \\ 0 & 0 & 0 & I_{n-r-m}
		\end{bmatrix}}\) for \(m\geq n-r\).
		
		That is, there exists an invertible matrix \(S\) such that 
		\(\interp{\tikzfig{Gauss/iNform}}_\abbralagr^\zx=\mathcal L\);
		% $\interp{\tikzfig{Gauss/iNform}}_\abbralagr^\zx=\ker\begin{bmatrix}[cc|cc]
		%	I & 0  & i \diag(I_r, 0) & 0 \\ 0 & 0 & 0 & I
		%\end{bmatrix}=\mathcal{L}$;
		\item  $\mathcal L$ is quasi-real.
	\end{enumerate}
	%A positive Lagrangian subspace satisfying those conditions is said \textbf{Gaussian}.
\end{proposition}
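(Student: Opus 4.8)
The plan is to establish the cycle $3\Rightarrow1\Rightarrow2\Rightarrow3$, treating the symmetric phase $\phi$ of the AP-invariant as the central object and using proposition~\ref{prop:positivechar} to supply the positivity data at each step. The key computation is that of the sesquilinear form $\chi_n(\vec v,\vec w)=\overline{\vec v}^{\trans}J\vec w$, where $J=\left(\begin{smallmatrix}0&I\\I&0\end{smallmatrix}\right)$, evaluated on the generator matrix of $\mathcal L$. In the generic case where the support $E$ is invertible, normalising $E=I_n$ writes $\mathcal L=\{(-\phi\vec t,\vec t)\mid \vec t\in\C^n\}$, and substituting $\vec v=(-\phi\vec s,\vec s)$ and $\vec w=(-\phi\vec t,\vec t)$ gives $\chi_n(\vec v,\vec w)=-2\,\overline{\vec s}^{\trans}\Re(\phi)\,\vec t$, where I use that $\phi$ is symmetric so that $\overline{\phi}^{\trans}+\phi=2\Re(\phi)$. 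Hence the quasi-real requirement that $\chi_n$ vanish on $\mathcal L$ is, in this case, exactly the condition $\Re(\phi)=0$.

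For the implication $3\Rightarrow1$, quasi-reality includes positivity, so proposition~\ref{prop:positivechar} already gives that $E$ is real and $\Im(\phi)\succeq0$; feeding the generator matrix into the computation above and imposing the $\chi_n$-condition then forces $\Re(\phi)=0$. For $2\Rightarrow3$, I would check directly that the displayed canonical subspace is quasi-real: its phase $i\diag(I_r,\vec0)$ has real support and imaginary part $\diag(I_r,\vec0)\succeq0$, so it is positive by proposition~\ref{prop:positivechar}, and substituting its generator matrix into $\chi_n$ shows the form vanishes identically. It then suffices to observe that a real symplectomorphism $S$ preserves both properties, since $S$ commutes with conjugation and preserves $\omega_n$, so $i\omega_n(\overline{S\vec v},S\vec v)=i\omega_n(\overline{\vec v},\vec v)$, and that the diagonal symplectomorphisms, acting on the $Z$- and $X$-blocks by a real matrix and its inverse-transpose, likewise preserve $\chi_n$.

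The implication $1\Rightarrow2$ is where the positivity from proposition~\ref{prop:positivechar} must be combined with $\Re(\phi)=0$: writing $\phi=i\,\Im(\phi)$ with $\Im(\phi)$ real symmetric and $\succeq0$, a real diagonal symplectomorphism acts by congruence and can simultaneously clear the support to the block-identity shape and diagonalise $\Im(\phi)$ to $\diag(I_r,\vec0)$, producing exactly the stated normal form; the free (``Dirac-delta'') directions, on which no phase survives, assemble into the $I_{n-r-m}$ block. The main obstacle I anticipate is the non-generic case, where the $Z$-block is not invertible and one must work with the reduced-AP form of proposition~\ref{rem:scalable_reduced_AP-form}: here the computation of $\chi_n$ has to be organised so that the genuinely free directions are separated from the phase-carrying directions, verifying that vanishing of $\chi_n$ constrains only $\Re(\Sigma)$ on the latter block while leaving the former unconstrained, and that the permutation $\varsigma$ and the block $L$ do not reintroduce a real part. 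Getting this bookkeeping to line up with the parameter $m\geq n-r$ in the canonical form is the delicate part.
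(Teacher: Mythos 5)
Your proposal is correct and its mathematical core coincides with the paper's: both arguments hinge on computing the restriction of the Hermitian form $\chi_n$ to $\mathcal L$ via the AP-invariant, and on reducing the phase $\phi = i\Im(\phi)$ to $i\diag(I_r,\vec 0)$ by a real diagonal symplectomorphism (clear the $L$-block by congruence, orthogonally diagonalise $\Im(\phi)$, rescale by $\sqrt{\Lambda}^{-1}$). Your ``delicate'' non-generic case is handled in the paper by computing the full restricted form, which comes out as $\bigl(\begin{smallmatrix}2\Re(\phi) & 2i\Im(L)\\ -2i\Im(L^\trans) & 0\end{smallmatrix}\bigr)$; since positivity already forces $\Im(L)=0$, the $L$-block and the free directions contribute nothing, exactly as you anticipated, so there is no gap there. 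The one place where you genuinely diverge is the implication out of condition (2): the paper proves $(2)\Rightarrow(1)$ directly, by writing out the block equations for a real diagonal symplectomorphism intertwining the two generator matrices and grinding out $\alpha = 0$, whereas you close the cycle via $(2)\Rightarrow(3)$ by checking that the canonical subspace is quasi-real and that real diagonal symplectomorphisms preserve both positivity and $\chi_n$ (since $\overline{A\vec z}^{\,\trans}A^{-\trans}\vec x = \overline{\vec z}^{\,\trans}\vec x$ for real $A$). Your route is shorter and more conceptual — it replaces a page of block-matrix elimination with an invariance observation — at the cost of routing the conclusion $\Re(\phi)=0$ back through the $\chi_n$ computation rather than reading it off directly. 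Both are sound; if you write yours up, do carry out the general-case $\chi_n$ computation explicitly rather than leaving it as a remark, since that single calculation is doing the work in both of your implications $3\Rightarrow 1$ and $2\Rightarrow 3$.
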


With propositions~\ref{prop:quasirealchar} and~\ref{prop:positivechar} combined, we have enough information to completely characterize the reduced AP-form  of quasi-real affine Lagrangian subspaces:

\begin{proposition}\label{prop:exgaussapform}
  Nonempty, quasi-real affine Lagrangian subspaces of \((\C^{2n},\omega_n)\) have reduced AP-form given by 5-tuples \((L,i\Sigma, 0, \vec \mu, \varsigma)\), where \(L\in \Matrices[n][n-m][\R], \vec \mu \in \R^{n-m}\) and \(0\preceq\Sigma\in \Sym[n-m][\R]\).  This induces an extended Gaussian distribution \(\mathcal{N}(\Sigma,\vec \mu)\) on \(\R^n/(\ker([I,L]\varsigma))\), so that if \(m=0\), this is a Gaussian distribution on \(\R^n\).
%  \robert{A Gaussian subspace is a AL subspace s.t. AP-form 34 where
%  - Y has real part 0
%  - Y has positive imaginary part
%  - E is real
%  - y = 0}

\end{proposition}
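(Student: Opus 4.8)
The plan is to extract the refined normal form from the reduced-AP form of Proposition~\ref{rem:scalable_reduced_AP-form} together with the two characterisation results, handling the linear and affine data in turn. First I would write the given subspace in its unique reduced-AP form $(L,\phi,\vec x,\vec\mu,\varsigma)$, with AP-invariant $([I,L]\varsigma,\phi)$. Since a quasi-real relation is in particular positive, Proposition~\ref{prop:positivechar} (condition~3 implies condition~1) shows that $[I,L]\varsigma$ has real coefficients and $\Im(\phi)\succeq 0$; as $\varsigma$ is a real permutation, this forces $L$ to be real. Applying Proposition~\ref{prop:quasirealchar} (condition~3 implies condition~1) to the same subspace gives $\Re(\phi)=0$, so $\phi=i\Sigma$ with $\Sigma\coloneqq\Im(\phi)$ real symmetric and $0\preceq\Sigma$. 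This already pins down the linear data as $(L,i\Sigma,\varsigma)$ with $L$ real and $\Sigma$ positive semi-definite.

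It then remains to normalise the affine shift. By definition a nonempty quasi-real affine Lagrangian subspace contains a real point, so I may take the affine representative $\vec a$ to be real. Feeding $\vec a$ into the reduced generator matrix and comparing against the canonical shift $\begin{bmatrix}\vec x\\ \vec\mu\end{bmatrix}$, the block of equations carrying only the real data $-L^{\trans}$ and the identity is purely real, forcing $\vec\mu$ to be real; the block carrying the purely imaginary $i\Sigma$ is where the position shift lives, and there the real representative contributes nothing that survives the uniqueness normalisation of Proposition~\ref{rem:scalable_reduced_AP-form}, forcing $\vec x=0$. Equivalently, one can transport the explicit canonical kernel of Proposition~\ref{prop:quasirealchar}(2) back along the real symplectomorphism relating it to $\mathcal L$, which preserves reality and lets one read off $\vec x=0$ and real $\vec\mu$ directly. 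This yields the asserted five-tuple $(L,i\Sigma,0,\vec\mu,\varsigma)$.

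Finally I would read off the distribution: the real subspace $\ker([I,L]\varsigma)$ is the direction of total uncertainty $D$, and the surviving data $(\Sigma,\vec\mu)$ on the quotient $\R^n/D\cong\R^{n-m}$ assembles into the extended Gaussian $\mathcal N(\Sigma,\vec\mu)$; specialising to $m=0$ makes $D=0$, so this is a genuine Gaussian distribution on all of $\R^n$. I expect the affine step to be the main obstacle: establishing $\vec x=0$ is not a formal consequence of the linear characterisations but requires carefully tracking how the real affine representative interacts with the purely imaginary $i\Sigma$-block under the uniqueness normalisation, whereas the reality of $\vec\mu$ and the entire linear part follow essentially immediately once Propositions~\ref{prop:positivechar} and~\ref{prop:quasirealchar} are in hand.
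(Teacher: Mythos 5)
Your handling of the linear data is correct and is exactly the route the paper intends (it states the proposition as an immediate consequence of Propositions~\ref{prop:positivechar} and~\ref{prop:quasirealchar} and offers no further proof): positivity gives $E=[I,L]\varsigma$ real and $\Im(\phi)\succeq 0$, quasi-reality gives $\Re(\phi)=0$, hence $\phi=i\Sigma$ with $\Sigma$ real and positive semi-definite, and the reality of $\vec\mu$ follows because the second block row of the generator matrix involves only $-L^{\trans}$ and the identity applied to a real representative.

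The gap is exactly where you feared: your argument for $\vec x=0$ does not close. Writing a real point of the subspace as $\vec a=(\vec a_z,\vec a_w)$, the first block row evaluates to $\vec x=\vec a_{z,1}+L\vec a_{z,2}+i\Sigma\,\vec a_{w,1}$, and the real contribution $\vec a_{z,1}+L\vec a_{z,2}$ is in general nonzero; it is \emph{not} removed by uniqueness of the reduced AP-form, since the AP-form is already unique and a nonzero shift in that block is genuine data. Concretely, for $n=1$ the subspace $\{(z,w)\in\C^2 : z+isw=c\}$ with $s$ of the appropriate sign and $0\neq c\in\R$ has quasi-real, positive linear part, contains the real point $(c,0)$, and yet has third AP-component equal to $c$. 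So reality of the representative together with the two linear characterisations cannot yield $\vec x=0$; the missing ingredient is the $\chi$-pairing applied to the affine points themselves. For $\vec v_0+\vec a$ and $\vec w_0+\vec a$ in the affine subspace, sesquilinearity gives
\begin{equation*}
\chi(\vec v_0+\vec a,\vec w_0+\vec a)=\chi(\vec v_0,\vec w_0)+\chi(\vec v_0,\vec a)+\chi(\vec a,\vec w_0)+\chi(\vec a,\vec a),
\end{equation*}
and demanding this vanish for all $\vec v_0,\vec w_0$ in the (already quasi-real) linear part forces $\chi(\vec a,\vec w_0)=0$ for every $\vec w_0$; in the example above this reads $c\,w=0$ for all $w$, i.e.\ $c=0$, and in general it is precisely this orthogonality that annihilates the momentum component of the shift and yields $\vec x=0$. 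You need to add this step explicitly; as written, neither your ``feeding $\vec a$ into the generator matrix'' argument nor the alternative ``transport the canonical kernel along the real symplectomorphism'' argument (which only controls the linear part) establishes the claim.
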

%
%
%
%This allows extended Gaussian distributions to be composed relationally:
%

%Therefore we can define the following structure:
Annoyingly, due to an incompatibility of conventions because we chose our normal form to have black spiders as vertices, this restricts to affine relations with respect to the {\em colour-swapped}, interpretation of remark~\ref{rem:affrel}, where an antipode must be also be inserted.
\begin{definition}\label{def:gaussrel}
Define the compact prop of  \textbf{Gaussian relations}, \(\GaussRel\),  to be the sub-prop of \(\ALR[\C]\) whose morphisms are quasi-real.
\end{definition}

\(\Gauss\) embeds in \(\GaussRel\), just as affine symplectomorphisms embed in \(\ALR\).
As previously mentioned, the prop of Gaussian relations was first formulated by Stein and Samuelson \cite{gaussrel}, although it is constructed using structured cospans, where composition is given by conditionals instead of directly by relational composition.
Surprisingly, a detour through symplectic geometry makes this into a concrete category of relations! 
%
% We can also characterize Gaussian relations in terms of the scalable notation for \(\GSA[\C]\):
%
%
%\begin{proposition}
%  \label{prop:gaussrelscalable}
%  Gaussian relations are generated by complex affine Lagrangian relations of the following form, for \(\vec \mu \in \R^n\),  \(0\preceq\Sigma\in \Sym[n][\R]\) and \(L\in\Matrices[m][n][\R]\):
%  \[
%    \interp{
%      \tikzfig{Gauss/scalable_white}
%    }_{\abbralagr}^{\zx}
%    \ ,\qquad
%    \interp{
%      \tikzfig{Gauss/scalable_black}
%    }_{\abbralagr}^{\zx}
%    \qquad\text{and}\qquad
%    \interp{
%      \tikzfig{Gauss/scalable_matrix}
%    }_{\abbralagr}^{\zx}
%  \]wi
%\end{proposition}
%%We can go even further,  and obtain a presentation for  \(\GaussRel\):
%%
%%\begin{theorem}
%%\(\GaussRel\) is presented by the prop \(\GGA\) given by adding a single state to \(\GSA[\R]\) modulo the equations
%% 
%%\end{theorem}
%%\begin{proposition}
%%  Given Gaussians subspaces \(\phi \subseteq X \oplus Y\) and \(\psi \subseteq
%%  Y \oplus Z\) then there is a Gaussian subspace of \(X \oplus Z\) given by
%%\end{proposition}
%%\robert{Remark: this is the conditioning of \cite{dario???}}
%%It's clear that any real affine relation \(X \to Y\) gives a Gaussian subspace of
%%\(X \oplus Y\) so we can define a category as follows that inherits much of
%%it's structure from \(\AR[\R]\):
%%\begin{definition}
%%  Let \(\GaussRel\) be the subprop of \(\ALR[\C]\) whose morphisms are Gaussian relations.
%%\end{definition}
%%
\subsection{Graphical presentation}
\label{ssec:gga}

Just as in the previous subsection, we restricted \(\ALR[\C]\) to obtain \(\GaussRel\); here we take the opposite approach and extend the presentation  \(\GAA[\R] \cong \AR[\R]\hookrightarrow \ALR[\C]\) with shearing  by \(i\), \(\tikzfig{Gauss/shearing}\), interpreted as the Gaussian distribution \(\mathcal{N}(1,0)\) on \(\R\), to present \(\GaussRel\):
\begin{definition}
  Let \(\GGA\) be the compact prop given by adding adding a single generator  \(\tikzfig{Gauss/void}:0\to 1\) to \(\GAA[\R]\), called the \textbf{vacuum} such that the scalable vacuum codiscards rotations and effects; ie. so that for all \(n\in \N\), \(\vec{x} \in \R^n\) and \(\theta \in [0,2\pi)\):
   \[\tikzfig{Gauss/coerase}\]
\end{definition}

\begin{theorem}\label{thm:gauss_completeness}
	There is an isomorphism of compact props \(\GGA\cong\GaussRel\).
\end{theorem}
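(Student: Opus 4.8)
The plan is to exhibit the interpretation functor $\interp{-}_\abbralagr^\zx \colon \GGA \to \GaussRel$ as an isomorphism of compact props. Since both props are identity-on-objects over $\N$, it suffices to prove that this functor is well-defined (soundness), full, and faithful. Moreover, because both categories are compact, every morphism is determined by its name, a state $0 \to n+m$, so at each stage I may restrict attention to states $0 \to n$, where the reduced AP-form of Proposition~\ref{rem:scalable_reduced_AP-form} is available.

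First I would check soundness. The embedding $\GAA[\R] \cong \AR[\R] \hookrightarrow \ALR[\C]$ already lands in the quasi-real sub-prop by the colour-swapped reading of Remark~\ref{rem:affrel} together with Lemma~\ref{lem:closure}, while the vacuum is sent to the subspace $\ker[I_1 \mid i]$, which is quasi-real with AP-invariant $(1,i)$ and realizes $\mathcal{N}(1,0)$ via Propositions~\ref{prop:positivechar} and~\ref{prop:quasirealchar}. It then remains to verify that the two codiscarding equations hold in $\GaussRel$; this is a direct computation reflecting the rotation-invariance of the standard Gaussian and the behaviour of marginalization.

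For fullness, I would use Proposition~\ref{prop:exgaussapform}: the name of an arbitrary nonempty quasi-real relation has reduced AP-form $(L, i\Sigma, 0, \vec\mu, \varsigma)$ with $L$, $\vec\mu$, $\varsigma$ and $0 \preceq \Sigma$ all real. The real data $L$, $\vec\mu$, $\varsigma$ already lie in the image of $\GAA[\R]$, so the only new ingredient is the symplectic phase $i\Sigma$. Writing a Cholesky-type factorization $\Sigma = A^\trans A$ with $A$ real of rank $\rk(\Sigma)$, the phase $i\Sigma$ is produced by feeding $\rk(\Sigma)$ copies of the vacuum through the real matrix $A$, which is a $\GAA[\R]$ morphism. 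Hence every quasi-real relation lies in the image, and the empty relation is handled separately through the zero generator of $\GAA[\R]$.

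Faithfulness is the heart of the argument and the step I expect to be the main obstacle. The idea is to promote the reduced AP-form to a diagrammatic normal form in $\GGA$ and to show that every diagram can be rewritten into it using only the $\GAA[\R]$ axioms together with the two codiscarding equations. The $\GAA[\R]$ fragment already normalizes the real affine structure; what is new is reducing an arbitrary arrangement of vacuums to exactly $\rk(\Sigma)$ of them in canonical position. Here the codiscarding of effects removes vacuums whose wires are marginalized away, and the codiscarding of rotations identifies the different factorizations $\Sigma = A^\trans A = B^\trans B$, which differ by $A = OB$ for an orthogonal $O$ expressible as a product of the rotations appearing in the axiom. Showing that these two moves suffice, so that no further equation between vacuum-containing diagrams is needed, is the crux; once this is established, injectivity of the interpretation on normal forms follows from the uniqueness clauses of Propositions~\ref{rem:scalable_reduced_AP-form} and~\ref{prop:exgaussapform}, completing the isomorphism.
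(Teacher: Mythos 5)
Your plan follows the same route as the paper's proof: an interpretation functor, surjectivity via a normal form in which all of the Gaussian blur is produced by vacuums fed through a real matrix, and injectivity by characterizing the residual freedom between normal forms and absorbing it with the two codiscarding axioms. The gap is precisely the step you yourself identify as the crux, and your sketch of it is too optimistic. The ambiguity between two diagrams in normal form denoting the same quasi-real relation is \emph{not} only the orthogonal freedom $A=OB$ in a factorization $\Sigma=A^\trans A=B^\trans B$ together with the removal of marginalized vacuums. The paper's Lemma~\ref{lm:gauss_uniq} computes the invertible matrix $S$ relating two pseudo normal forms and finds a block-triangular matrix with an orthogonal block $S_{11}$ acting on the vacuum wires, invertible blocks $S_{14}$ and $S_{4}$, and additionally nonzero off-diagonal blocks $S_{12}$, $S_{31}$, $S_{32}$ coupling the Gaussian directions to the deterministic and extended (kernel) directions of the relation. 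Deriving this block structure requires solving a matrix equation of the form $\ker[\,I\mid iJ_r\,]\,S=\ker[\,I\mid iJ_k\,]$ over $\C$, from which one also extracts that $r=k$, i.e.\ that the number of vacuums is an invariant of the relation --- a fact your argument would otherwise have to assume. One must then check that each factor of the resulting decomposition of $S$ is separately derivable in the calculus: the orthogonal block from the rotation-codiscarding axiom, and the remaining invertible and shear blocks from the $\GAA[\R]$ axioms acting on the non-vacuum wires. Without this lemma, or an equivalent complete characterization of the gauge freedom between vacuum-containing diagrams, faithfulness is not established; the soundness and fullness parts of your proposal are correct and match the paper's argument.
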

Note that our \emph{proof} of completeness follows from the analogous result for positive Lagrangian relations (theorem~\ref{thm:sympgauss_completeness}) in the following section.  However, this is purely pedagogical, as we need to introduce Gaussian probability theory before Gaussian quantum mechanics.

\section{Symplectic Gaussians}
\label{sec:sguass}
%We now extend the characterisation of Gaussian distributions to the setting
%of symplectic vector spaces. This allows us to give a complete diagrammatic
%presentation of Gaussian quantum mechanics \cite{???}.

Recall from subsection~\ref{subsec:gsaapplications} that affine Lagrangian subspaces of \((\R^{2n}, \omega_n) \) cannot be canonically represented by bounded linear maps \(L^2(\R^0)\to L^2(\R^n)\) because they would correspond to infinitely-squeezed states.  It turns out that by dropping the requirement from the previous section that affine Lagrangian relations are quasi-real, we obtain a semantics which can accommodate for both such infinitely-squeezed state and ``Gaussian'' quantum states.

\subsection{Gaussian quantum mechanics}
\label{ssec:gaussqm}
In this section, we review Gaussian quantum mechanics, and their Wigner representation.  See the review article of Weedbrook et al. for a detailed reference \cite{weedbrook_gaussian_2012}.  Brask provides an abridged, yet helpful reference \cite{Brask}. 
Recall from subsection~\ref{subsec:gsaapplications}, that the continuous variable analogue of a qudit, the qumode has state space given by the Hilbert space of square integrable complex functions on \(\R\), \(L^2(\R)\); where the state space of  \(n\)-qumodes is identified with the Hilbert space of square integrable complex functions on \(\R^n\), \(L^2(\R^n)\): 

\begin{definition}
	An \(n\)-qumode \textbf{Gaussian state} 
	\(\varphi\in L^2 (\R^n)\) has the form:
	\[
    \varphi\left(\vec x\right)=\exp(i\alpha)
	  \exp(i\vec{s}^\trans \vec x) \sqrt[4]{{\det(\Im(\Phi))}/{\pi^n }}\exp({i}(\vec x -
	  \vec t\,)^\trans \Phi(\vec x-\vec t\,)/2)
	\]
  where\ \(\alpha \in [0,2\pi)\), \(\vec s,\vec t \in \R^n\), and \(\Phi\in \Sym[n][\C]\) with \(\Im(\Phi)\succ 0\). We call such a matrix \(\Phi\)
	a \textbf{phase matrix}, and the vector \(\begin{bmatrix} {\vec s}\ {}^\trans &
	{\vec t}\ {}^\trans \end{bmatrix}{}^\trans \in \R^{2n}\) a \textbf{displacement}. Together, they
	characterise the Gaussian state up to the ``global phase'' \(\exp(i\alpha)\).
There is an important Gaussian state  on \(L^2(\R)\) called the \textbf{vacuum} with trivial displacement and phase matrix \(i\).
\end{definition}

Importantly, unlike Dirac deltas, Gaussian states are continuous and normalized, so that they are quantum states.  In other words \(\phi\) can be regarded as a wave-function.

The connection between Gaussian quantum states and the Gaussian probability
distributions which we discussed in section~\ref{sec:gauss} is revealed by considering
the symplectic vector space \((\R^{2n},\omega_n)\) as the \emph{\textbf{phase space}} for the
quantum system on \(L^2(\R^n)\), on which Gaussian states correspond to particular Gaussian distributions.

\begin{definition}
  The \textbf{Wigner transform} of \(\varphi \in L^2 (\R^n)\) is a \emph{real-valued} \(W\in L^2 (\R^{2n})\):
  \[
    W\left(\begin{bmatrix}\vec q \\ \vec p \end{bmatrix}\right)\coloneqq \frac{1}{\pi^n }\int_{\R^n } \overline{\varphi}\left(\vec q+\vec \xi\right)\varphi\left(\vec q -\vec \xi\right)\exp\left(2i\vec{\xi}^\trans \vec p\right)\, d\vec \xi
    \quad\quad
    \text{for all } \vec p, \vec q\in\R^n 
  \]
\end{definition}
%
%\begin{proposition}\label{prop:gauss_wigner}
%	The Wigner representation of Gaussian states is of the form \(W(x)=
%	\pi^{-n} \exp(-(x - \alpha)^\trans  \Sigma (x - \alpha))\), where
%	\(\alpha \in \R^{2n}\) is also called the \textbf{mean}
%	and \(\Sigma\) is a covariance matrix such that \(\det(\Sigma)=1\) and
%	\(\Sigma + i \Omega_n \succeq 0\).
%  Furthermore there is a bijective translation between the phase matrix \(\Phi\) and the covariance matrix
%  \(\Sigma=\begin{bmatrix}
%		A & B \\ B^\trans & C
%	\end{bmatrix}\) given by:
%	\[\Sigma= \begin{bmatrix}
%		\Im (\Phi) + \Re(\Phi)\Im(\Phi)^{-1} \Re(\Phi)  & -\Re(\Phi)\Im(\Phi)^{-1} \\ -\Im(\Phi)^{-1} \Re(\Phi) & \Im(\Phi)^{-1}
%	\end{bmatrix} \qquad \Phi= -BC^{-1} + i C^{-1}\]
%	The correspondence between the displacements is given in the obvious
%	way by \(\alpha = \begin{bmatrix} s & t \end{bmatrix}^\trans\)
%\end{proposition}

\begin{proposition}\label{prop:gauss_wigner}
	The Wigner transform of an \(n\)-qumode Gaussian state with phase
	matrix \(\Phi\) and displacement \(\vec \mu\) is the density function
	of the Gaussian distribution \({\mathcal N}(\Sigma,\vec \mu)\) on
	\(\R^{2n}\) where:
	\[
	  \Sigma\coloneqq \begin{bmatrix}
	  \Im (\Phi) + \Re(\Phi)\Im(\Phi)^{-1} \Re(\Phi)  & -\Re(\Phi)\Im(\Phi)^{-1} \\ -\Im(\Phi)^{-1} \Re(\Phi) & \Im(\Phi)^{-1}
	  \end{bmatrix}
	\]
	Moreover, this is a bijection up to global phase, as conversely,
	Gaussian distributions \({\mathcal N}\left(\Delta,\vec \mu\right)\)
	on \(\R^{2n}\), with
  \[
	  \Delta\coloneqq
	  \begin{bmatrix}
	  	A & B \\ B^\trans & C 
	  \end{bmatrix}
  \]
  for \(A,C \in \Sym[n][\R]\), \(B \in \Matrices[n][n][\R]\) such that \(\det(\Delta)=1\) and \(\Delta + i \Omega_n \succeq 0\) induce Gaussian states
  with phase matrix \(\Phi\coloneqq -BC^{-1} + i C^{-1}\).
\end{proposition}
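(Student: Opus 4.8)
The plan is to treat both directions as explicit Gaussian-integral computations: the forward direction reduces to a single complex Gaussian integral over the auxiliary variable $\vec\xi$, and the converse to inverting the resulting assignment on covariance matrices.

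For the forward direction, first reduce to trivial displacement. The global phase $\exp(i\alpha)$ cancels in the product $\overline{\varphi}(\vec q+\vec\xi)\,\varphi(\vec q-\vec\xi)$, while the linear phase $\exp(i\vec s^\trans\vec x)$ and the shift by $\vec t$ only translate the Wigner function by $\vec\mu=[\vec s^\trans\ \vec t^\trans]^\trans$; since translating $\mathcal N(\Sigma,\vec 0)$ by $\vec\mu$ gives $\mathcal N(\Sigma,\vec\mu)$, it suffices to compute the covariance with $\vec s=\vec t=\alpha=0$. Writing $\Phi=R+iM$ with $R=\Re(\Phi)$ and $M=\Im(\Phi)\succ 0$ both real symmetric, I would substitute $\varphi(\vec x)=\sqrt[4]{\det(M)/\pi^n}\exp(\tfrac i2\vec x^\trans\Phi\vec x)$ into the Wigner integral and expand the exponent. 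The part quadratic in $\vec\xi$ has real part $-\vec\xi^\trans M\vec\xi$, so $M\succ 0$ guarantees convergence, and completing the square reduces the $\vec\xi$-integral to a standard Gaussian integral contributing a factor $\pi^{n/2}/\sqrt{\det M}$ that cancels the state normalisation against the $\pi^{-n}$ prefactor. Reading off the remaining quadratic form in $(\vec q,\vec p)$ yields the covariance matrix $\Sigma$ of the statement.

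The cleanest way both to verify the claimed block shape and to get $\det\Sigma=1$ is the shear factorisation
\[
  \Sigma=\begin{bmatrix} I & -R \\ 0 & I\end{bmatrix}\begin{bmatrix} M & 0 \\ 0 & M^{-1}\end{bmatrix}\begin{bmatrix} I & 0 \\ -R & I\end{bmatrix},
\]
whose three factors are each symplectic with unit determinant. This makes $\det\Sigma=\det(M)\det(M^{-1})=1$ immediate and exhibits $\Sigma$ as a symplectic congruence of $\diag(M,M^{-1})$, matching the geometric picture of shearing and squeezing from subsection~\ref{ssec:gsa}. For the converse, given $\Delta=\left[\begin{smallmatrix} A & B\\ B^\trans & C\end{smallmatrix}\right]$ with the stated properties, set $\Phi\coloneqq -BC^{-1}+iC^{-1}$ and check it inverts the forward assignment: comparing $\Delta=\Sigma$ block-by-block gives $C=M^{-1}$ and $B=-RM^{-1}=-RC$, hence $M=C^{-1}$ and $R=-BC^{-1}$, which is exactly the proposed $\Phi$. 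It then remains to verify that $\Phi$ is an admissible phase matrix, i.e. symmetric with $\Im(\Phi)=C^{-1}\succ 0$, and that every $\Delta$ meeting the hypotheses arises in this way. Positivity of $C^{-1}$ follows from positive-definiteness of $\Delta$, and symmetry of $\Phi$ amounts to symmetry of $CB$; here the two hypotheses combine, since $\Delta+i\Omega_n\succeq 0$ forces every symplectic eigenvalue of $\Delta$ to be at least $1$, while $\det\Delta=1$ forces their product to be $1$, so all symplectic eigenvalues equal $1$ and $\Delta$ is a \emph{pure} Gaussian covariance. Purity supplies the algebraic relations among $A,B,C$ that make $CB$ symmetric and recover $A=M+RM^{-1}R$.

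The main obstacle is twofold. First, the Gaussian-integral bookkeeping must track the complex quadratic form and the fourth-root normalisation carefully, so that the prefactors cancel to produce a genuine probability density rather than merely a Gaussian up to scale. Second, and more substantively, the converse requires showing that the physicality condition $\Delta+i\Omega_n\succeq 0$ together with $\det\Delta=1$ is \emph{exactly} the image of the admissible phase matrices under the forward map; the key step is the symplectic-eigenvalue argument identifying these two conditions with purity, after which the block identities and the symmetry of $\Phi$ are routine.
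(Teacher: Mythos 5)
Your proposal is correct and follows essentially the same route as the paper's proof: a direct Gaussian-integral computation of the Wigner function (your shear congruence $\Sigma = \left[\begin{smallmatrix} I & -R\\ 0 & I\end{smallmatrix}\right]\diag(M,M^{-1})\left[\begin{smallmatrix} I & 0\\ -R & I\end{smallmatrix}\right]$ is just a repackaging of the paper's factorisation $\Sigma = SS^\trans$), followed by the same symplectic-eigenvalue argument that $\Delta+i\Omega_n\succeq 0$ and $\det\Delta=1$ force all symplectic eigenvalues to equal $1$. The only place you are thinner than the paper is the final step of the converse, where "purity supplies the block identities" is made concrete in the paper via Williamson's theorem ($\Delta=SS^\trans$ with $S$ symplectic) plus the pre-Iwasawa decomposition $S=\left[\begin{smallmatrix} I & B\\ 0 & I\end{smallmatrix}\right]\diag(A^{1/2},A^{-1/2})O$; you should cite or reproduce that decomposition to close the argument, but the plan is sound.
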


\begin{definition}
\label{def:gaussunitary}
The \(n\)-qumode \textbf{Gaussian unitaries} are the unitaries on \(L^2(\R^n)\) whose action on Gaussian states by postcomposition is closed. 
\end{definition}

To characterize Gaussian unitaries up to global phase, it suffices to know how their action on Gaussian states effects their Wigner representation:
\begin{lemma}
\label{lem:weil}
Any Gaussian unitary \(U\) on \(L^2(\R^n)\)
can be represented, up to global phase, by an affine symplectomorphism \((\vec x, S)\) on \((\R^{2n},\omega_n)\) 
such that the action of \(U\) on the Wigner representation of  Gaussian
states is described by 
\(\Sigma \mapsto S \Sigma S^\trans\) and \(\vec \mu  \mapsto \vec \mu + \vec x\).
\end{lemma}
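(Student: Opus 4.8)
The plan is to transport the problem to phase space via the Wigner correspondence and then read off the affine symplectic data from the way $U$ acts on the Weyl displacement operators. By Proposition~\ref{prop:gauss_wigner}, the Wigner transform is a bijection, up to global phase, between $n$-qumode Gaussian states and the pure Gaussian distributions $\mathcal{N}(\Sigma,\vec\mu)$ on $\R^{2n}$, namely those with $\det(\Sigma)=1$ and $\Sigma + i\Omega_n \succeq 0$. Since $U$ is unitary and, by Definition~\ref{def:gaussunitary}, postcomposition with $U$ preserves Gaussianity (and, as the Gaussian unitaries form a group, the same holds for $U^{-1}=U^\dagger$), $U$ induces a bijection $T$ on this space of pure Gaussian distributions. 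It therefore suffices to exhibit an affine symplectomorphism $(\vec x, S)$ whose pushforward realises $T$.

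First I would analyse the displacement (Weyl) operators $D(\vec v)$ for $\vec v \in \R^{2n}$. A direct computation with characteristic functions shows that each $D(\vec v)$ is itself a Gaussian unitary whose induced action on Wigner functions is the pure translation $\Sigma \mapsto \Sigma$, $\vec\mu \mapsto \vec\mu + \vec v$, and that they obey the Weyl relations $D(\vec u)D(\vec v) = \exp(\tfrac{i}{2}\omega_n(\vec u,\vec v))\,D(\vec u + \vec v)$. Applying $U$ to the vacuum then produces a Gaussian state whose mean is, by this translation formula, exactly the displacement vector $\vec x$ we seek.

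The heart of the argument is recovering the linear part $S$ and showing it is symplectic. The claim is that $U D(\vec v) U^\dagger$ is, up to phase, again a displacement operator $D(S\vec v)$ for a unique linear $S:\R^{2n}\to\R^{2n}$. Conjugation by $U$ sends the translation action on Gaussian states to another Gaussianity-preserving action generated by one-parameter unitary groups; by the Stone--von Neumann theorem these must again form a Weyl system, which both forces $S$ to be linear (from the additivity $D(\vec u + \vec v)\propto D(\vec u)D(\vec v)$) and, since conjugation preserves the group commutator $D(\vec u)D(\vec v)D(\vec u)^{-1}D(\vec v)^{-1} = \exp(i\omega_n(\vec u,\vec v))$, forces $\omega_n(S\vec u, S\vec v) = \omega_n(\vec u,\vec v)$, i.e.\ $S \in \Sp[2n]$. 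Thus $(\vec x, S)$ is an affine symplectomorphism. Finally, reading $U^\dagger \hat R U = S\hat R + \vec x$ off the infinitesimal form of this conjugation, where $\hat R$ denotes the vector of quadrature operators, the covariance (the matrix of symmetrised second moments of $\hat R$) transforms by the congruence $\Sigma \mapsto S\Sigma S^\trans$ and the mean by the affine action of $(\vec x, S)$, recovering the transformations in the statement.

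The step I expect to be the main obstacle is the symplecticity claim. The definition of a Gaussian unitary used here refers only to preservation of the manifold of Gaussian states, with no built-in reference to the Heisenberg action on the quadrature operators. Bridging that gap --- deducing from ``$U$ preserves Gaussian states'' that conjugation by $U$ must send displacement operators to displacement operators --- is the genuinely nontrivial point, and is precisely where the irreducibility supplied by the Stone--von Neumann theorem is indispensable. By contrast, the displacement computation and the congruence formula for covariances are routine once this intertwining is in hand.
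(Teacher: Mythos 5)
The paper itself contains no proof of Lemma~\ref{lem:weil}: it is stated as a known fact and delegated to the literature on the (projective) metaplectic representation \cite{weedbrook_gaussian_2012}, so there is no in-paper argument to compare against. Your proposal is a sketch of the standard route to that literature result, and the overall architecture (Wigner correspondence, displacement operators, Weyl covariance, symplecticity from the commutator) is the right one.

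However, as you yourself half-concede, the pivotal step is not actually closed, and the tool you invoke does not close it. You claim that $U D(\vec v) U^\dagger$ is, up to phase, a displacement operator $D(S\vec v)$, and you attribute this to the Stone--von Neumann theorem. Stone--von Neumann says that any two irreducible Weyl systems over $(\R^{2n},\omega_n)$ are unitarily equivalent; the conjugated family $\tilde D(\vec v) \coloneqq U D(\vec v) U^\dagger$ automatically satisfies the same Weyl relations, and its equivalence with the original system is witnessed trivially by $U$ itself. None of this forces $\tilde D(\vec v)$ to lie in the set $\{e^{i\theta} D(\vec w)\}$, which is what you need to even define $S$. The information that must be used at this point is precisely the hypothesis that $U$ preserves Gaussian states, e.g.\ by showing that the induced map $T$ on pairs $(\Sigma,\vec\mu)$ intertwines translations of the mean with translations of the mean (so that the map $\vec v \mapsto \vec x$-part is affine), and separately that the covariance part of $T$ is independent of $\vec\mu$ and given by a congruence; the symplectic constraint then comes from the purity condition $\det(\Sigma)=1$ together with $\Sigma + i\Omega_n \succeq 0$ being preserved, or from the commutator identity once the displacement-covariance is established. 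A second, smaller gap: you assert that Gaussian unitaries form a group so that $U^{-1}$ also preserves Gaussian states, but Definition~\ref{def:gaussunitary} only demands that postcomposition with $U$ maps Gaussian states into Gaussian states; closure under inverses (equivalently, surjectivity of $U$ on the Gaussian manifold) needs an argument. Neither gap is fatal --- both are fixable by arguments at the level of covariance matrices --- but as written the proof of the one genuinely nontrivial claim in the lemma is missing.
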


In other words, since the pair \((\Sigma,\vec \mu)\) does not account for
global phases, Gaussian unitaries correspond to a projective representation of
the real affine symplectic group \cite{weedbrook_gaussian_2012}. Although
this representation \emph{can} be deprojectivised to a genuine
unitary representation, called the \emph{metaplectic} representation
\cite{de_gosson_symplectic_2011}, our graphical calculi do not account for
global scalars in the first place. As a result, we content ourselves with
the projective representation described by lemma~\eqref{lem:weil}.  Finally,
there is also a good notion of Gaussian quantum effect:
\begin{definition}
Take a Gaussian state \(\varphi\)   represented by the Gaussian distribution \({\mathcal N}(\Sigma, \vec \mu)\) on  \(\R^{2n} \oplus \R^{2m}\), and another \(\psi\) represented by \({\mathcal N}(\Delta, \vec \nu)\) on \(\R^{2n}\) where:
\begin{equation*}
  \Sigma\coloneqq
  \begin{bmatrix}
      \Sigma_{n,n}        & \Sigma_{n,m} \\
      \Sigma_{n,m}^\trans & \Sigma_{m,m}
    \end{bmatrix}
    \quad\quad
    \vec \mu
    \coloneqq
     \begin{bmatrix}
      \vec \mu_n \\
      \vec \mu_m
    \end{bmatrix}
\end{equation*}
The \textbf{projection} of \(\varphi\) onto \(\psi\) is the Gaussian state represented by the Gaussian distribution induced by generalised Schur complement:
\begin{equation}
  \label{eq:schur_complement}
  {\mathcal N}\left(
    \Sigma_{m,m} - \Sigma_{n,m} (\Sigma_{n,n} + \Delta)^{+} \Sigma_{n,m}^\trans,
  \vec \mu_m -  \Sigma_{n,m} (\Sigma_{n,n} + \Delta)^{+} (\vec \mu_n - \vec \nu)
  \right)
\end{equation}
Where  \((\Sigma_{n,n} + \Delta)^+\) denotes the Moore-Penrose pseudo-inverse of \(\Sigma_{n,n} + \Delta\) (this coincides with the inverse when the matrix is invertible).
\end{definition}
These projections correspond precisely to the corresponding projections
of quantum states in Hilbert space, lifted to the covariance matrix
representation.
\begin{definition}
  The prop of quantum Gaussian transformations, \(\QGauss\), has:
  \begin{itemize}
    \item \textbf{morphisms} generated by \(n\)-qumode Gaussians states with type \(0 \to n\),  \(n\)-qumode Gaussian unitaries with type \(n \to n\), and projections onto  \(n\)-qumode Gaussian states of type \(n \to 0\) for all \(n\in\N\), all up to global phase;
    \item \textbf{composition} given by the action in lemma~\ref{lem:weil} and
    equation~\eqref{eq:schur_complement}; 
    \item the rest of the prop structure is given pointwise by the direct sum.
  \end{itemize}
\end{definition}

It is easy to see how there is an embedding  \(\Gauss \to \QGauss\),  by doubling.

\subsection{Positive Lagrangian relations}
\label{ssec:positive}
Recall the characterization of Gaussian relations of definition~\ref{def:gaussrel} in terms of quasi-real complex affine Lagrangian relations.  In this section, we relax the requirement that these relations need to be quasi-real, yet retain that they must be positive:

\begin{definition}
  Let  \(\QGaussRel\) denote the sub-prop of \(\ALR[\C]\) of positive affine Lagrangian relations.
\end{definition}

From proposition~\ref{prop:positivechar}, it follows that:
\begin{proposition}\label{prop:AP_positive}
  Nonempty positive affine Lagrangian subspaces of \((\C^{2n},\omega_n)\) have reduced AP-form given by 5-tuples \((L,\Delta+i\Sigma, \vec \nu, \vec \mu, \varsigma)\), where \(L\in \Matrices[n][n-m][\R], \vec \mu \in \R^{n-m}\), \(\vec \nu \in \R^m\) and \(\Delta,\Sigma\in \Sym[n-m][\R]\) with \(0\preceq\Sigma\). 
\end{proposition}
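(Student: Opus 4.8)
The plan is to derive the statement directly from the uniqueness of the reduced-AP form (Proposition~\ref{rem:scalable_reduced_AP-form}) together with the characterisation of positive linear Lagrangian subspaces (Proposition~\ref{prop:positivechar}); the argument runs almost exactly parallel to the quasi-real computation of Proposition~\ref{prop:exgaussapform}, merely relaxing the constraints \(\Re(\phi)=0\) and \(\vec x = 0\). First I would take a nonempty positive affine Lagrangian subspace \(\mathcal L = S + \vec a \subseteq (\C^{2n},\omega_n)\). By the definition of positivity for affine subspaces, \(\vec a\) is real and the linear part \(S\) is a positive Lagrangian subspace. Applying Proposition~\ref{rem:scalable_reduced_AP-form} to \(\mathcal L\) yields its unique reduced-AP form, whose underlying linear subspace \(S\) has AP-invariant \((E,\phi)\) with \(E = [I\,|\,L]\varsigma\) and \(\phi\) complex symmetric.

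Next I would feed \(S\) into the equivalence (1)\(\Leftrightarrow\)(3) of Proposition~\ref{prop:positivechar}: since \(S\) is positive, \(E\) has real coefficients and \(\Im(\phi)\succeq 0\). Because \(\varsigma\) is a real permutation and the leading block of \(E\) is the identity, realness of \(E\) forces \(L\) to be real. Decomposing \(\phi = \Delta + i\Sigma\) into its real and imaginary parts, both of which are real symmetric matrices, the condition \(\Im(\phi)\succeq 0\) is exactly \(0 \preceq \Sigma\), while \(\Delta = \Re(\phi)\) is an unconstrained real symmetric matrix. This already pins down the linear data of the \(5\)-tuple as claimed.

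It then remains to check that the affine data \(\vec\nu\) and \(\vec\mu\) are real. Since \(\vec a\) is real while the reduced-AP form represents \(\mathcal L\) as \(\ker(\,\cdot\,) + \vec t\) for a canonical shift \(\vec t\) differing from \(\vec a\) by an element of \(S\), and since the affine coordinates of the normal form are read off against the real generator matrix \(E\) (the imaginary block \(\phi\) contributing only to the symplectic phase, not to the affine offset in the relevant coordinates), these coordinates inherit the realness of \(\vec a\). Hence \(\vec\nu \in \R^m\) and \(\vec\mu \in \R^{n-m}\), which assembles into the \(5\)-tuple \((L, \Delta + i\Sigma, \vec\nu, \vec\mu, \varsigma)\).

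The main obstacle I expect is precisely this last step: carefully justifying that the canonical affine offset of the reduced-AP form stays real, and reconciling the indexing and convention differences between the general reduced-AP form of Proposition~\ref{rem:scalable_reduced_AP-form} and the real/imaginary split \(\phi = \Delta + i\Sigma\) used here, so that the dimensions of \(L\), \(\Delta\) and \(\Sigma\) match those asserted in the statement. The linear part of the claim, by contrast, is an essentially immediate consequence of Proposition~\ref{prop:positivechar}.
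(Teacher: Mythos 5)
Your proposal is correct and follows essentially the same route as the paper, which states this proposition without a separate written proof, treating it as an immediate consequence of Proposition~\ref{prop:positivechar} applied to the linear part together with the realness of the affine shift built into the definition of a positive affine Lagrangian subspace. The step you flag as the main obstacle --- checking that the canonical offsets \(\vec\nu, \vec\mu\) of the reduced-AP form inherit realness --- is likewise left implicit in the paper, so your more detailed sketch of it is, if anything, an elaboration rather than a divergence.
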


By regarding the Wigner representation of \(n\)-qumode Gaussian states as positive, affine Lagrangian subspaces of \((\C^{2n},\omega_n)\), effects as subspaces of \((\C^{2n},-\omega_n)\), and taking the complex graphs of the affine symplectomorphisms given by Gaussian unitaries, it is clear that:
\begin{proposition}
	There is an embedding of props
	\(\QGauss \to \QGaussRel\).
\end{proposition}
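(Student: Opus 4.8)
The plan is to build the functor $F\colon\QGauss\to\QGaussRel$ on generators and then verify it is a faithful prop morphism. On a Gaussian state $\varphi\colon 0\to n$ with phase matrix $\Phi$ (where $\Im(\Phi)\succ 0$) and displacement $\vec\mu$, equivalently with Wigner representation $\mathcal N(\Sigma,\vec\mu)$ (proposition~\ref{prop:gauss_wigner}), I would set $F(\varphi)$ to be the positive affine Lagrangian subspace of $(\C^{2n},\omega_n)$ with AP-invariant $(I_n,\Phi)$ and affine part determined by $\vec\mu$; on a Gaussian effect $n\to 0$ coming from projection onto a state with data $(\Delta,\vec\nu)$ I would take the analogous subspace regarded inside $(\C^{2n},-\omega_n)$, as required by the twisted form $\omega_{n,0}=-\omega_n$; and on a Gaussian unitary $n\to n$ I would take the complex graph $\Gamma(S,\vec x)$ of the affine symplectomorphism $(S,\vec x)$ supplied by lemma~\ref{lem:weil}.

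First I would check that these images really lie in $\QGaussRel$. For unitaries this is immediate, since the graph of a real affine symplectomorphism is a real, hence positive, affine Lagrangian relation (on real vectors $i\omega_{n,m}(\overline{\vec v},\vec v)=0$). For states, the physicality condition $\Im(\Phi)\succ 0\succeq 0$ together with the real generator matrix $E=I_n$ gives positivity by the equivalence $(1)\Leftrightarrow(3)$ of proposition~\ref{prop:positivechar}, and proposition~\ref{prop:AP_positive} confirms that the resulting subspace has genuine Gaussian AP-form; the sign change of the form for effects merely flips $\Im(\Phi)$ and is absorbed by the orientation convention, so effects are positive as well.

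Next I would establish functoriality case by case. Composites of unitaries reduce to $\Gamma(g)\circ\Gamma(f)=\Gamma(g\circ f)$, matching the group law of lemma~\ref{lem:weil}; the action of a unitary on a state, $\Sigma\mapsto S\Sigma S^\trans$ and $\vec\mu\mapsto\vec\mu+\vec x$, is exactly relational composition of $\Gamma(S,\vec x)$ with $F(\varphi)$, again by lemma~\ref{lem:weil}. The substantive case is projection of one Gaussian state onto another: I would show the Schur-complement formula~\eqref{eq:schur_complement} is reproduced by the relational composite $(e\oplus\mathrm{id}_m)\circ F(\varphi)$ in $\ALR[\C]$, where $e\colon n\to 0$ is the positive Lagrangian effect assigned to $\psi$. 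Relational composition existentially quantifies the shared $n$-qumode block, and eliminating those variables from the direct sum of the two positive subspaces should yield precisely $\Sigma_{m,m}-\Sigma_{n,m}(\Sigma_{n,n}+\Delta)^{+}\Sigma_{n,m}^\trans$ together with the displaced mean; lemma~\ref{lem:closure} keeps the result positive and proposition~\ref{prop:AP_positive} certifies its Gaussian AP-form. Preservation of the monoidal product is then immediate, since $\oplus$ is computed pointwise on both sides, and faithfulness follows because proposition~\ref{prop:gauss_wigner} is a bijection up to global phase, which both props quotient out.

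The hard part will be the projection case, specifically matching the Moore--Penrose pseudo-inverse in~\eqref{eq:schur_complement} with relational composition when $\Sigma_{n,n}+\Delta$ is singular. These singular directions are exactly the infinitely-squeezed, non-quasi-real directions that $\QGaussRel$ was introduced to accommodate, so the degenerate sub-case is where positivity-without-quasi-reality does the real work; the nondegenerate case is a routine Schur-complement computation, and I expect the degenerate case to follow by a limiting or closure argument using lemma~\ref{lem:closure} together with the AP-form of proposition~\ref{prop:AP_positive}.
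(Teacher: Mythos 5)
Your construction is exactly the paper's: the proposition is justified there in a single sentence by regarding the Wigner representations of Gaussian states as positive affine Lagrangian subspaces of \((\C^{2n},\omega_n)\), effects as subspaces of \((\C^{2n},-\omega_n)\), and Gaussian unitaries as the complex graphs of their affine symplectomorphisms, which is precisely your generator assignment. The verifications you lay out --- positivity via proposition~\ref{prop:positivechar}, functoriality including the Schur-complement case, and faithfulness via proposition~\ref{prop:gauss_wigner} --- are all left implicit in the paper, so your proposal is, if anything, more detailed than the source (note only that the degenerate pseudo-inverse case you worry about never actually arises, since every state of \(\QGauss\) has \(\Im(\Phi)\succ 0\) and hence invertible \(\Sigma_{n,n}+\Delta\)).
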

%\begin{proposition}
%  \label{prop:scalablepostiveafflagrel}
%  \(\QGaussRel\) is generated by the complex affine Lagrangian relations of the following form, for all \(n\in\N\), \(\vec \mu \in \R^n\), \(\Delta,\Sigma\in \Sym[n][\R]\) with \(0 \preceq\Sigma\) and \(L\in\Matrices[m][n][\R]\):
%  \(\QGaussRel\) is generated by complex affine Lagrangian relations of the following form, for \(\vec \mu, \vec \nu \in \R^n\), \(\Delta,\Sigma\in \Sym[n][\R]\) with \(\Sigma\) positive semi-definite and \(L\in\Matrices[m][n][\R]\):
%  \[
%    \interp{
%      \tikzfig{SympGauss/scalable_white}
%    }_{\abbralagr}^{\zx}
%    \ ,
%    \qquad
%    \interp{
%      \tikzfig{SympGauss/scalable_black}
%    }_{\abbralagr}^{\zx}
%    \quad
%    \text{and}
%    \quad
%    \interp{
%      \tikzfig{SympGauss/scalable_matrix}
%    }_{\abbralagr}^{\zx}.
%  \]
%\end{proposition}
%\robert{I don't know how to prove this without the CPM construction.}
%
\subsection{Graphical presentation}
\label{ssec:QGaussRel}
Just as in the previous subsection, we restricted \(\ALR[\C]\) to obtain \(\QGaussRel\); here we take the opposite approach and extend  the presentation \(\GSA[\R]\cong\ALR[\R]\hookrightarrow\ALR[\C]\) with shearing  by \(i\), \(\tikzfig{Gauss/shearing}\), interpreted as the vacuum, to give a presentation for \(\QGaussRel\): 
\begin{definition}
  The compact prop \(\GQGA\) is presented by adding adding a single generator
  \(\tikzfig{Gauss/void}:0\to 1\) to \(\GSA[\R]\) modulo the equations
  codiscarding symplectic rotations and effects. Equationally, we impose that
  for all \(n\in\N\), \(a,b \in \R\),  \(\theta \in (-\pi,\pi)\), and \(\vartheta \in [0,2\pi)
  \):
  \[\tikzfig{SympGauss/coerase}\]
\end{definition}
\begin{theorem}\label{thm:sympgauss_completeness}
  There is an isomorphism of compact props \(\GQGA\cong \QGaussRel\).
\end{theorem}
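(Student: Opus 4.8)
The plan is to establish the isomorphism by showing that the interpretation functor $\interp{\,\cdot\,}\colon \GQGA \to \ALR[\C]$ is sound (it lands exactly in the sub-prop $\QGaussRel$ of positive affine Lagrangian relations), full, and faithful. Each of these will lean on the completeness of $\GSA$ from Theorem~\ref{thm:GSA} and on the concrete descriptions of positive subspaces in Propositions~\ref{prop:positivechar} and~\ref{prop:AP_positive}, together with the uniqueness of reduced-AP forms in Proposition~\ref{rem:scalable_reduced_AP-form}.

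For soundness I would first check that the image lies in $\QGaussRel$. The generators inherited from $\GSA[\R]$ interpret as \emph{real} affine Lagrangian relations, and any real $\vec v$ satisfies $i\omega_{n,m}(\overline{\vec v},\vec v) = i\omega_{n,m}(\vec v,\vec v) = 0 \geq 0$, so these are positive; the vacuum interprets as shearing by $i$, whose associated subspace is positive and realises the $r=1$ standard form $\ker[\,I_1\mid i\,]$ of Proposition~\ref{prop:positivechar}. Since $\QGaussRel$ is a sub-prop by Lemma~\ref{lem:closure}, the whole image is positive. It then remains to verify that the two families of codiscarding equations hold in $\ALR[\C]$, which is a direct semantic check: the vacuum interprets as the rotationally-symmetric Gaussian $\mathcal N(1,0)$, so post-composing it with a symplectic rotation by $\theta\notin\{\pi/2,3\pi/2\}$ leaves its relation invariant, and the effect-codiscarding equation holds because the vacuum is positive.

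Fullness follows from Proposition~\ref{prop:positivechar}: every nonempty positive Lagrangian subspace is real-symplectomorphic to the standard form $\ker[\,I_n \mid i\diag(I_r,\vec 0)\,]$, which is the direct sum of $r$ copies of the $i$-shearing (the vacuum) and $n-r$ copies of a real phase-free generator. The connecting real symplectomorphism, together with the real displacement $\vec\mu$ and permutation $\varsigma$ of Proposition~\ref{prop:AP_positive}, is realisable inside $\GSA[\R]\subseteq\GQGA$; empty relations are produced from the $\GSA$ generator for the empty relation. Hence every positive affine Lagrangian relation is in the image. The substance of the proof is faithfulness, which I would obtain by a normal-form argument. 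The target normal form is the reduced-AP form of Proposition~\ref{rem:scalable_reduced_AP-form} specialised to the positive data of Proposition~\ref{prop:AP_positive}, namely a 5-tuple $(L,\Delta+i\Sigma,\vec\nu,\vec\mu,\varsigma)$ with $L,\Delta,\vec\nu,\vec\mu$ real and $0\preceq\Sigma$ real; writing $\Sigma = MM^\trans$, the imaginary part $i\Sigma$ is assembled by feeding $\rk(\Sigma)$ vacua through the real matrix $M$, while every other datum is built in $\GSA[\R]$. I would then show that every $\GQGA$ diagram reduces to such a form using only the $\GSA[\R]$ equations and the two codiscarding equations. Since Proposition~\ref{rem:scalable_reduced_AP-form} holds over $\C$, distinct reduced-AP forms have distinct interpretations in $\ALR[\C]$, so two $\GQGA$ diagrams with equal semantics share a normal form and are therefore equal in $\GQGA$.

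The main obstacle is precisely this last reduction carried out \emph{inside} $\GQGA$ rather than inside the full complex calculus $\GSA[\C]$. In $\GSA[\C]$ one has complex phases and complex squeezing maps that would make rearranging the imaginary symplectic phases routine, but $\GQGA$ admits only the real fragment together with the single vacuum generator, so each such rearrangement must be reconstructed from real symplectic operations and the codiscarding relations. Establishing that these suffice — that the codiscarding equations completely govern how vacua interact with the real $\GSA$ structure, letting one migrate every vacuum to standard position and delete or merge those absorbed by rotations and effects, so that the canonical count $\rk(\Sigma)$ is forced — is where the technical lemmas and the bulk of the calculation will lie.
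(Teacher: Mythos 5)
Your overall strategy --- interpret into \(\ALR[\C]\), verify the codiscard axioms semantically, obtain fullness from the standard form of Proposition~\ref{prop:positivechar}, and obtain faithfulness from a normal-form argument whose hard part is carrying out the reduction inside the real-plus-vacuum fragment --- is the same as the paper's proof, which builds the interpretation the same way, proves a pseudo-normal-form lemma (Lemma~\ref{lm:sympgauss_to_NF}), and concludes surjectivity from Proposition~\ref{prop:positivechar}. The soundness and fullness parts of your sketch match the paper essentially step for step.

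The one place your sketch is too quick is the final inference in the faithfulness step: ``two \(\GQGA\) diagrams with equal semantics share a normal form and are therefore equal.'' The reduced-AP \emph{datum} \((L,\Delta+i\Sigma,\vec\nu,\vec\mu,\varsigma)\) is canonical, but the \emph{diagram} you build from it is not: it depends on a choice of factorisation \(\Sigma=MM^\trans\), and \(M\) is determined only up to right multiplication by an orthogonal matrix; more generally the real symplectomorphism carrying the standard form \(\ker\begin{bmatrix}[c|c] I & iJ_r\end{bmatrix}\) onto the given relation is determined only up to its stabiliser. So two diagrams with equal semantics reduce to two pseudo normal forms related by a nontrivial residual symplectomorphism, and one must show that this residue is always absorbable by the axioms. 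This is exactly what the paper's Lemma~\ref{lm:sympgauss_uniq} does: it proves that if two pseudo normal forms have the same interpretation then the vacuum counts agree (\(r=k\)) and the connecting real symplectomorphism factors as a block-invertible map, a shear, and a symplectic rotation, each acting only on the legs fed by vacua or discarded, and hence each killed by one of the codiscard equations. Your plan is workable, but this stabiliser characterisation is a genuine linear-algebra computation that must be carried out explicitly; the sentence ``the codiscarding equations completely govern how vacua interact with the real \(\GSA\) structure'' is the statement to be proved, not an observation from which the result follows.
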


This elucidates the connection of positive Lagrangian relations to Gaussian relations:
\begin{corollary}
There is a symmetric monoidal embedding \([-]_\GGA^{\GQGA}:\GQGA\to\GGA\) acting on objects by \([n]_\GGA^{\GQGA}\coloneqq 2n\) and on generators by:
  \[
    \interp{\tikzfig{SympGauss/cqgadouble_wspider_lhs}}_\GGA^{\GQGA}\hspace*{-.25cm}
    \coloneqq
    \tikzfig{SympGauss/cqgadouble_wspider_rhs}
    \quad
    \interp{\tikzfig{SympGauss/cqgadouble_bspider_lhs}}_\GGA^{\GQGA}\hspace*{-.25cm}
    \coloneqq
    \tikzfig{SympGauss/cqgadouble_bspider_rhs}
    \quad
    \interp{\tikzfig{SympGauss/cqgadouble_void_lhs}}_\GGA^{\GQGA}\hspace*{-.25cm}
    \coloneqq
    \tikzfig{SympGauss/cqgadouble_void_lhs}
  \]
\end{corollary}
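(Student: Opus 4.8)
The plan is to avoid checking the axioms of $\GQGA$ by brute force, and instead realise $[-]_\GGA^{\GQGA}$ as a concrete \emph{Wigner doubling} functor transported across the completeness isomorphisms $\GQGA\cong\QGaussRel$ (Theorem~\ref{thm:sympgauss_completeness}) and $\GGA\cong\GaussRel$ (Theorem~\ref{thm:gauss_completeness}). First I would define, purely semantically, a map $D\colon\QGaussRel\to\GaussRel$ sending an object $n$ (living in $(\C^{2n},\omega_n)$) to $2n$ (living in $(\C^{4n},\omega_{2n})$), and a positive affine Lagrangian relation $R\colon m\to n$ to $D(R)\coloneqq\vartheta_n\,(R\oplus\overline R)\,\vartheta_m^{-1}$, where $\overline R$ is the complex-conjugate relation and $\vartheta_k$ is a fixed real symplectomorphism implementing the phase-space coordinate change underlying the Wigner transform. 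This mirrors the bilinear dependence on $\overline\varphi$ and $\varphi$ in the definition of the Wigner transform, so that $D$ computes the Wigner representation of a quantum Gaussian process.

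The next step is to verify that $D$ is a strict symmetric monoidal functor. Preservation of composition is the crux: linear relational composition distributes over $\oplus$ and commutes with complex conjugation, so $(S\oplus\overline S)\circ(R\oplus\overline R)=(S\circ R)\oplus\overline{(S\circ R)}$, and because $\vartheta$ is applied objectwise the conjugating isomorphisms on the shared boundary cancel, yielding $D(S)\circ D(R)=D(S\circ R)$. Preservation of tensor, symmetry, identities and the compact structure follows from the same two facts. I must also confirm that $D$ really lands in $\GaussRel$: using the positivity characterisation of Proposition~\ref{prop:positivechar} ($\Im(\phi)\succeq 0$) together with the quasi-real characterisation of Proposition~\ref{prop:quasirealchar} ($\Re(\phi)=0$), I would argue that doubling symmetrises away the real part of the phase matrix, so that $D(R)$ carries purely imaginary phase data. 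Concretely this is the passage from a phase matrix $\Phi$ to the real Wigner covariance $\Sigma$ of Proposition~\ref{prop:gauss_wigner}, whose associated AP-invariant is of the form $(\,\cdot\,,i\Sigma)$ by Proposition~\ref{prop:exgaussapform}, hence quasi-real.

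Having shown that $D$ is a well-defined symmetric monoidal functor, I would transport it across the two completeness isomorphisms to obtain $[-]_\GGA^{\GQGA}\colon\GQGA\to\GGA$, which is then automatically symmetric monoidal and acts on objects by $n\mapsto 2n$. It remains to check that $D$ sends the interpretations of the three generators of $\GQGA$, namely the white spider, the black spider and the vacuum, to the interpretations of the right-hand sides displayed in the statement; this is a finite computation on generators using their explicit reduced AP-forms. Faithfulness, and hence the embedding property, then follows because $D$ is injective: the change of basis $\vartheta$ is invertible, so $D(R)$ determines $R\oplus\overline R$, from which $R$ is recovered as the first summand. Equivalently, the Wigner transform is a bijection up to global phase by Proposition~\ref{prop:gauss_wigner}, and $\GQGA$ carries no global-phase data in the first place, so no information is lost; combined with injectivity on objects this makes $[-]_\GGA^{\GQGA}$ an embedding.

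I expect the main obstacle to be pinning down the coordinate change $\vartheta$ and proving that it carries positive relations to quasi-real ones while \emph{simultaneously} intertwining the twisted symplectic forms $\omega_{m,n}$ on both sides. A naive $R\mapsto R\oplus\overline R$ is Lagrangian for the doubled form but need not be quasi-real, and it is exactly the Wigner substitution encoded in $\vartheta$ that corrects this to genuinely real classical data; verifying this compatibility is where Propositions~\ref{prop:positivechar}--\ref{prop:exgaussapform} do the real work.
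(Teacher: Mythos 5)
Your overall strategy---define a semantic ``Wigner doubling'' $D\colon\QGaussRel\to\GaussRel$ and transport it across the two completeness theorems---is the right one, and is essentially all the paper has in mind in labelling this a corollary (it gives no separate proof). The problem is that the concrete functor you write down is not a functor into $\GaussRel$. Take $R$ to be the vacuum state, whose AP-invariant has phase $\phi=i$. Then $\overline R$ has phase $-i$, so $\Im(-i)=-1\not\succeq 0$ and, by Proposition~\ref{prop:positivechar}, $\overline R$ is not positive; restricting the doubled form to the second summand shows $R\oplus\overline R$ is not positive either. You hope the coordinate change $\vartheta$ repairs this, but a \emph{real symplectomorphism} provably cannot: the graphs of $\vartheta^{\pm1}$ are positive Lagrangian relations (the Hermitian form $i\omega(\overline v,v)$ vanishes on them), so by closure of positive relations under composition (Lemma~\ref{lem:closure}), if $\vartheta_n(R\oplus\overline R)\vartheta_m^{-1}$ were positive then so would be $R\oplus\overline R=\vartheta_n^{-1}(\cdots)\vartheta_m$. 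Hence no choice of real symplectic $\vartheta$ lands you in the positive relations, let alone the quasi-real ones. The missing ingredient is that the second copy must be the conjugate taken with respect to the \emph{opposite} symplectic form: one must twist $\overline R$ by the antisymplectic involution $(\vec z,\vec x)\mapsto(\vec z,-\vec x)$ (applied as a direct image on subspaces---its graph is not itself a Lagrangian relation, so this is not relational composition), which sends the phase $\phi$ to $-\overline\phi$ and so restores $\Im\succeq 0$ while flipping $\Re(\phi)$; only after this twist can a further real change of coordinates hope to produce relations with $\Re(\phi)=0$ as Proposition~\ref{prop:quasirealchar} demands.

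Two further points. First, even granting positivity, quasi-reality is strictly stronger, and graphs of real symplectomorphisms are generally \emph{not} quasi-real (their AP-invariant has nonzero real phase, e.g.\ shears), so ``conjugating by $\vartheta$'' does not automatically stay inside $\GaussRel$ and must be checked against the $\chi$-condition directly. Second, the generator computation you defer is where the corollary's specific right-hand sides are actually certified; without it, and without a pinned-down $\vartheta$, nothing connects your $D$ to the displayed assignment. None of this is fatal to the approach---with the antisymplectic twist in place, the generator images computed, and injectivity argued as you do, the transport argument goes through---but as written the central object $D$ does not exist as a functor $\QGaussRel\to\GaussRel$.
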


So that we see that positive affine Lagrangian relations can be regarded as extended Gaussian distributions on the phase space.
%Note that our calculus contains more non-quantum Gaussian operations than merely infinitely squeezed states.  For example, for any \(r\in\R\) we can perfom Gaussian convolution on the Dirac delta for \(r\) to obtain a Gaussian state centered at \(r\); despite the convolution not being a Gaussian unitary itself:

Putting things together we have the following commutative diagram of props, which wraps around on the left and right:
\[
\xymatrixrowsep{3mm}\xymatrixcolsep{6mm}
\xymatrix{
 \cdots  &  \GAA[\R] \ar[d]_{\cong} \ar@{>->}[r]   \ar@{>-}[l]                         &  \GGA \ar@{>->}[r] \ar[d]_{\cong}       & \GQGA \ar@{>->}[r] \ar[d]_{\cong}  & \GSA[\C] \ar[d]_{\cong}     &                             \GSA[\R]  \ar@{>->}[l] \ar[d]_{\cong} \ar@{>->}@/_1pc/[ll]   & \ar@{->}[l] \cdots\\
 \cdots  &   \AR[\R]  \ar@{>->}[r]   \ar@{>-}[l]                                                  &  \GaussRel \ar@{>->}[r]                       & \QGaussRel \ar@{>->}[r]                   & \ALR[\C]                              &                            \ALR[\R]  \ar@{>->}[l]                         \ar@{>->}@/^1pc/[ll]   & \ar@{->}[l] \cdots\\
            &   {\sf AffMat}_{\R} \ar@{>->}[r]  \ar@{>->}[u]                                   &   \Gauss \ar@{>->}[r] \ar@{>->}[u]    & \QGauss \ar@{>->}[u]                      &                                           &                                                                                             &
}\]

%\[\tikzfig{SympGauss/convolution}\]

\subsection{Picturing Gaussian quantum states}
There is a close link between the states of \(\GQGA\) and the
graph-theoretic representation of Gaussian states of Menicucci et al.
\cite{menicucci_graphical_2011}. This mirrors the relationship between quantum
graph states \cite{van_den_nest_graphical_2004, zhou_quantum_2003} and states
in the stabiliser ZX-calculus for both even \cite{backens_zx-calculus_2014}
and odd-prime-dimensional quantum systems \cite{booth_complete_2022}.

Recall from \cite{menicucci_graphical_2011} that any Gaussian state can be
uniquely associated to a matrix \(U + iV\) where \(U,V \in \Sym[n][\R]\)
and \(V\) is positive-definite. This matrix corresponds exactly to the phase
matrix of the associated Gaussian state. As a result, it is represented
in \(\GSA[\C]\) by \tikzfig{figures/SympGauss/Menicucci}. Therefore,
there is a very natural embedding of the graph representation of
\cite{menicucci_graphical_2011} into \(\GQGA\hookrightarrow\GSA[\C]\) given
by the following algorithm:
\begin{enumerate}
  \item To each vertex of the graph, associate a grey spider. If that vertex
  has a self-edge with weight \(w\), give the corresponding grey spider the
  label \((0,w)\).
  \item To each edge between two different vertices with weight \(w\),
  associate an edge between the corresponding spiders mediated by an
  \(a\)-labelled box.
  \item Connect each spider in the graph to a (unique) output.
  \item  Using \(\TextFusionB\), for each vertex unfuse the complex symplectic phase \(ai\)  of vertices into one-legged spiders, and then use \(\TextColour\), to turn this spider into a vacuum state connected to the original vertex with edge-weight \(1/\sqrt{a}\).
\end{enumerate}
This amounts to identifying the phase matrices of the graph representation
and the \(\GQGA\) diagram.  For example:
\begin{equation}
  \label{eq:Menicucci_example}
  \tikzfig{figures/SympGauss/Menicucci_graph}
  %\hspace*{-.5cm}
  \longmapsto \tikzfig{figures/SympGauss/Menicucci_diagram}
  \longmapsto \tikzfig{figures/SympGauss/Menicucci_diagram_voids}
  %\begin{matrix}\text{with}\\ \quad b\coloneqq e^r \end{matrix}
\end{equation}

In other words, we have extended the diagrammatic representation of Menicucci et al.
with a representation of Gaussian unitaries and effects as diagrams rather than
through their action on Gaussian states, along with an equational theory for
manipulating these diagrams. By completeness of \(\GQGA\), the action
of Gaussian unitaries on Gaussian states is derivable. We also have a natural
representation of translations, although the reduced-AP form of translated
Gaussian states is not as simple as equation~\eqref{eq:Menicucci_example}
since it can include internal vertices which carry part of the translation.

\subsection{Picturing quantum protocols}
\label{ssec:teleportation}
Infinitely-squeezed eigenstates of the position and momentum displacement operators \(\hat p\) and \(\hat q\) are represented  by  \(\ket{p : \hat{p}} \mapsto \tikzfig{figures/SympGauss/momentum_eigenstate}
\) and \(\ket{q: \hat{q}} \mapsto \tikzfig{figures/SympGauss/position_eigenstate}
\). The vacuum
state is represented by \(\ket{\phi} \mapsto \tikzfig{figures/SympGauss/vacuum_state}\).
We can also translate a universal gate set for Gaussian unitaries into our
language; consisting of the \textbf{displaced shear} of position, the \textbf{Fourier
transform} and \textbf{weighted CNOT} \cite{gu_quantum_2009} with \(a,b\in\R\):
\begin{equation}
  \exp(i(a\hat{q} + b\hat{q}^2)) \mapsto \tikzfig{figures/SympGauss/shear}
  \qquad
  \exp(i\frac{\pi}{2}(\hat{q}^2 + \hat{p}^2)) \mapsto \tikzfig{figures/SympGauss/fourier}
  \qquad
  \exp(ia(\hat{q} \otimes \hat{p})) \mapsto \tikzfig{figures/SympGauss/cnot}
\end{equation}

We can now give a straightforward graphical proof of Braunstein and Kimble's continuous variable quantum teleportation protocol \cite{braunstein_teleportation_1998} (to be compared with the graphical proof in finite dimensions eg. \cite[\S5.4]{wetering} or \cite[\S9.2.7]{pqp}), where Alice and Bob share a Gaussian Bell state with covariance of position $0<\varepsilon \in \R$.  Alice records the homodyne measurement outcome \((a,b)\in\R^2\) in the Bell basis, and sends it to Bob, who performs the phase correction \({\hat p}^{- b} {\hat q}^{- a} \).  This is stated graphically and simplified  as follows:
\[
\tikzfig{figures/SympGauss/teleportation_rotated}
\]
The result is a quantum channel with an error; however, in the infinitely-squeezed limit of \(\varepsilon=0\)  Alice teleports a perfect channel to Bob: $\tikzfig{figures/SympGauss/teleportation_squeezed}$. In the literature, the infinitely-squeezed Bell state is often haphazardly represented by the non-convergent integral \(\int_{\R}  | p : \hat p \rangle \otimes | p : \hat p \rangle \, dp\). However, we reiterate that this expression \emph{can} be represented in our calculus by an unlabeled 2-legged grey node; and it is interpreted \emph{soundly} in our semantics by replacing  the integral over position eigenstates \(\ket{q : \hat q}\) with the  \emph{relational composition} in the phase-space.

Note that in this derivation we could have used the vacuum generator instead
of the \((0,i)\) phase, just like in equation~\eqref{eq:Menicucci_example}.

\subsection{Picturing quantum optics}
\label{ssec:lov}
\(\GQGA\) allows us to reason diagrammatically about Gaussian quantum
mechanics. Perhaps the most important application of Gaussian quantum
mechanics is for quantum optics.  The \(\LOv\)-calculus is a complete
diagrammatic language for \emph{passive linear} quantum optics
\cite{clement_lov-calculus_2022}. It is a prop defined on the following set
of generators, for all \(\theta, \varphi\in\R\);

  the \textit{\textbf{phase shifter}},  \textit{\textbf{wave plate}}, \textit{\textbf{beamsplitter}}, \textit{\textbf{polarising beamsplitter}}, \textit{\textbf{vacuum state}} and \textit{\textbf{effect}}:
  \begin{equation*}
    \tikzfig{figures/SympGauss/lov/convtp-phase-shift}:1 \to 1
    \quad \tikzfig{figures/SympGauss/lov/pol-rot}:1 \to 1
    \quad \tikzfig{figures/SympGauss/lov/beamsplitter}:2 \to 2
    \quad \tikzfig{figures/SympGauss/lov/bs}:2 \to 2
    \quad\tikzfig{figures/SympGauss/lov/gene-0}:0 \to 1 \qquad \tikzfig{figures/SympGauss/lov/detector-0}:1 \to 0
  \end{equation*}

There is a symmetric monoidal embedding \(\LOv \to \GQGA\), given by \(\interp{n}_{\GQGA}^{\LOv} = 2n\) and:

\noindent
\(
\begin{array}{rlcrlcrl}
  \label{eq:optics}
  \interp{\tikzfig{figures/SympGauss/lov/pol-rot}}_{\GQGA}^{\LOv}\!\!
    =&\tikzfig{figures/SympGauss/lov/rotation_ZX} &&
  \interp{\tikzfig{figures/SympGauss/lov/convtp-phase-shift}}_{\GQGA}^{\LOv}\!\!
    =&\tikzfig{figures/SympGauss/lov/phase_ZX} &&
  \interp{\tikzfig{figures/SympGauss/lov/detector-0}}^\LOv_{\GQGA}\!\!
    =&\tikzfig{figures/SympGauss/lov/vacuum_effect}\\ \vspace*{-.2cm}\\
  \interp{\tikzfig{figures/SympGauss/lov/bs}}_{\GQGA}^{\LOv}\!\!
    =&\tikzfig{figures/SympGauss/lov/polarising_bs}&&
  \interp{\tikzfig{figures/SympGauss/lov/beamsplitter}}_{\GQGA}^{\LOv}\!\!
    =&\tikzfig{figures/SympGauss/lov/beamsplitter_ZX}&&
    \interp{\tikzfig{figures/SympGauss/lov/gene-0}}^\LOv_{\GQGA}\!\!
    =&\tikzfig{figures/SympGauss/lov/vacuum_state} 
\end{array}
\)

The doubling of the objects of \(\LOv\) accounts for the two polarisation
modes of light (typically the \emph{vertical} and \emph{horizontal}
modes).  By working in \(\GQGA\), we already have the ability to model
additional quantum-optical components and Gaussian unitaries that are
not present in \(\mathsf{LOv}\), given by the real affine symplectic group
\cite{weedbrook_gaussian_2012}. For instance, active squeezing of the vertical
polarisation mode, and shearing between the position and momentum operators,
are respectively given for \(a \in \R\) by:
\begin{equation}
  \tikzfig{figures/SympGauss/lov/squeezing}
  \quad\quad\quad
  \tikzfig{figures/SympGauss/lov/shear}
\end{equation}
Furthermore \(\GQGA\) is compact closed, as opposed to the usual infinite
dimensional Hilbert space semantics,  so by contrast we have much more freedom
for rewriting in this setting.

\bibliography{styles/bibliography}

\appendix

\section{Axiom tables}

\subsection{Graphical affine algebra}
\label{app:gaa}

The following presentation of \(\AR\) is adapted from, Booth et al. \cite{gsa}, which is itself a modified version of the presentation given by Bonchi et al. \cite{gaa}.
\begin{definition}
\label{def:gaa}
  Let \(\GAA\) be the compact prop where the arrows are generated by  {\em grey and white spiders}, as well as {\em scalar multiplication} for every \(a \in \K\):

  \[
    \tikzfig{figures/AffRel/b-spider}\quad
    \tikzfig{figures/AffRel/w-spider}\quad
    \tikzfig{figures/AffRel/scalar}\quad
  \]
  Modulo the equations making the grey and white spiders flexsymmetric as well as the equations, for all \(a,b\in \K\) and \(c \in \K^*\):
  \[\tikzfig{figures/AffRel/axioms}\]

  With the derived generators, for all \(a\in\K\):
  \[
    \tikzfig{figures/AffRel/scalar-daggered}
    \coloneqq 
    \tikzfig{figures/AffRel/scalar-twisted}
  \]

\end{definition}

\section{Representing symplectomorphisms and rotations}

In this appendix, we review generators for the groups of symplectomorphisms, rotations and symplectic rotations over \(\R\).

\begin{definition}
Denote the group of symplectomorphisms on  \(\left(\K^{2n}, \omega_n \right)\) by  \(\Sp[n][\K]\). 
\end{definition}

\begin{lemma}
Elements of this group have the following matrix representation 
\[
  \Sp[n][\K]=
  \left\{
    \begin{bmatrix}A & B \\ C & D\end{bmatrix}\in \Matrices[2n][2n][\K]
    \ \middle|\
    \begin{matrix}
      A^\trans C = C^\trans A \\
      B^\trans C = C^\trans B \\
      I=A^\trans D - C^\trans B
    \end{matrix} 
  \right\}
\]
\end{lemma}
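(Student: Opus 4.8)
The plan is to reduce the intrinsic condition ``$S$ preserves $\omega_n$'' to a single matrix identity and then read off the block relations. Writing vectors in the $\vec z$-then-$\vec x$ ordering of the example, the symplectic form is $\omega_n(\vec v, \vec u) = \vec v^\trans \Omega_n \vec u$ with $\Omega_n = \begin{bmatrix} 0 & I_n \\ -I_n & 0 \end{bmatrix}$, which is exactly the skew-symmetric matrix named in the text. A symplectic linear self-map $S : \K^{2n} \to \K^{2n}$ satisfies $\omega_n(S\vec v, S\vec u) = \omega_n(\vec v, \vec u)$ for all $\vec v, \vec u$, that is, $\vec v^\trans S^\trans \Omega_n S\, \vec u = \vec v^\trans \Omega_n \vec u$ for all $\vec v, \vec u$. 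Since a bilinear form on $\K^{2n}$ is determined by its Gram matrix, this holds for all $\vec v, \vec u$ if and only if $S^\trans \Omega_n S = \Omega_n$; so my first step is to record this equivalence carefully.

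Next I would dispatch the invertibility built into the word ``symplectomorphism.'' A priori $\Sp[n][\K]$ consists of symplectic \emph{isomorphisms}, but the matrix identity already forces this: taking determinants gives $\det(S)^2 \det(\Omega_n) = \det(\Omega_n)$, and as $\Omega_n$ is invertible we get $\det(S)^2 = 1$, so $S$ is automatically invertible. One can alternatively argue via nondegeneracy: if $S\vec v = 0$ then $\omega_n(\vec v, \vec u) = \omega_n(S\vec v, S\vec u) = 0$ for all $\vec u$, hence $\vec v = 0$, so $S$ is injective and therefore bijective. Consequently the set of $2n \times 2n$ matrices satisfying $S^\trans \Omega_n S = \Omega_n$ is exactly $\Sp[n][\K]$, with no separate invertibility hypothesis needed.

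Finally I would expand in blocks. Writing $S = \begin{bmatrix} A & B \\ C & D \end{bmatrix}$ with $A, B, C, D \in \Matrices[n][n][\K]$ and computing $S^\trans \Omega_n S$ block-by-block, I equate the resulting four $n\times n$ blocks to those of $\Omega_n$: the diagonal blocks give the two symmetry relations, and the off-diagonal blocks give the mixed relation $A^\trans D - C^\trans B = I_n$. Because the lower-left block equation is precisely the transpose of the upper-right one, only the three stated independent relations survive, which establishes the claimed matrix description.

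The block computation itself is entirely routine, so the ``hard part'' is only bookkeeping. I must fix the sign and ordering convention for $\Omega_n$ so that it genuinely reproduces $\omega_n$ as defined in the example, since a sign slip there would silently flip the mixed relation; and I must notice the redundancy between the two off-diagonal blocks so as not to list a spurious fourth defining equation. With those two conventions pinned down, the equivalence and the block expansion combine immediately into the stated characterisation.
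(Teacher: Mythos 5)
The paper states this lemma without proof, so there is no argument of its own to compare against; your route---reducing preservation of $\omega_n$ to the single matrix identity $S^\trans \Omega_n S = \Omega_n$, observing that invertibility is automatic, and then expanding in blocks---is the standard one and is sound. Your handling of invertibility (either via determinants or via nondegeneracy) and your observation that the two off-diagonal blocks are transposes of one another are both correct.

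There is, however, one point you pass over too quickly. If you actually carry out the block computation with $\Omega_n = \begin{bmatrix} 0 & I \\ -I & 0\end{bmatrix}$ and $S = \begin{bmatrix} A & B \\ C & D\end{bmatrix}$, you get
\[
S^\trans \Omega_n S \;=\; \begin{bmatrix} A^\trans C - C^\trans A & A^\trans D - C^\trans B \\ B^\trans C - D^\trans A & B^\trans D - D^\trans B \end{bmatrix},
\]
so the $(2,2)$ block yields $B^\trans D = D^\trans B$, \emph{not} the relation $B^\trans C = C^\trans B$ printed in the statement. Your claim that ``the diagonal blocks give the two symmetry relations'' as stated is therefore not quite what the computation produces. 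The printed condition is in fact strictly weaker: the matrix $\begin{bmatrix} I & B \\ 0 & I \end{bmatrix}$ with $B$ non-symmetric satisfies all three printed relations (the middle one reads $0 = 0$) but is not symplectic, since $S^\trans\Omega_n S$ then has lower-right block $B^\trans - B \neq 0$. This is consistent with the paper's own generator list, which requires $B \in \Sym$ for the upper-triangular generators. So the statement contains a typographical error ($B^\trans C$ should be $B^\trans D$, and $C^\trans B$ on that line should be $D^\trans B$), and an honest execution of your plan proves the corrected statement rather than the printed one. You should say so explicitly rather than asserting that the blocks reproduce the listed relations.
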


\begin{lemma}
The prop \(\bigcup_{n\in\N}\Sp[n][\K]\) is generated by the following matrices (with the corresponding matrices in \(\GSA\) drawn beside them) under the direct sum and composition, for all  \(A\in \Gl[n][\K]\) and \(B\in \Sym[n][\K]\):
\[
  \begin{bmatrix}
    A & 0 \\ 0 & A^{-\trans}
  \end{bmatrix} \leftrightarrow \tikzfig{figures/AffLagRel/diagsymp},\
  \begin{bmatrix}
    I & B \\ 0 & I
  \end{bmatrix}\leftrightarrow \tikzfig{figures/AffLagRel/xsymp},\
  \begin{bmatrix}
    I & 0 \\ B & I
  \end{bmatrix} \leftrightarrow \tikzfig{figures/AffLagRel/ysymp},\
  \begin{bmatrix}
    0 & I \\ -I & 0
  \end{bmatrix} \leftrightarrow \tikzfig{figures/AffLagRel/hsymp}
\]
\end{lemma}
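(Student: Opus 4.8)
The plan is to prove the two inclusions separately. That each of the four displayed matrices lies in $\Sp[n][\K]$ is a direct check against the matrix characterisation of the preceding lemma, so the subgroup they generate is contained in $\bigcup_{n}\Sp[n][\K]$; the content is the reverse inclusion, that every symplectomorphism is a product of these generators. I would prove this by induction on $n$, abbreviating the four families as $L_A := \begin{bmatrix} A & 0 \\ 0 & A^{-\trans}\end{bmatrix}$, $U_B := \begin{bmatrix} I & B \\ 0 & I\end{bmatrix}$, $V_B := \begin{bmatrix} I & 0 \\ B & I\end{bmatrix}$ and $J := \begin{bmatrix} 0 & I \\ -I & 0\end{bmatrix}$, and recording at the outset the conjugation relation $V_B = J\,U_{-B}\,J^{-1}$ together with the fact that the single-mode Fourier transforms $J_i$ (acting only on mode $i$) are themselves products of generators, since they lie in the copy of $\mathrm{SL}_2$ on mode $i$ which is generated there by $L$-, $U$- and $V$-type matrices supported on that mode.

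For the base case $n=1$ the claim is exactly the classical generation of $\mathrm{SL}_2(\K)=\Sp[1][\K]$ by the diagonal, upper-triangular and lower-triangular elementary matrices, which are the scalar specialisations of $L_A,U_B,V_B$. For the inductive step the key elementary lemma is that the generators act transitively on $\K^{2n}\setminus\{0\}$: given $M\in\Sp[n][\K]$ with nonzero first column $v=[\,p\,;\,q\,]$, I first apply $J$ when $p=0$ to arrange $p\neq 0$, then apply $L_A$ with $A\in\Gl[n][\K]$ sending $p\mapsto e_1$, and finally apply $V_S$ with $S\in\Sym[n][\K]$ chosen so that $Se_1$ cancels the lower block (a symmetric matrix with prescribed first column always exists). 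The resulting product $gM$ then has first column $e_1^z:=[\,e_1\,;\,0\,]$.

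Since $gM$ fixes $e_1^z$ and preserves $\omega_n$, it also preserves the functional $\omega_n(e_1^z,-)$ and hence the hyperplane $\{x_1=0\}=\ker\omega_n(e_1^z,-)$; because $e_1^z$ lies in the radical of $\omega_n$ restricted to this hyperplane, $gM$ descends to a symplectomorphism $M'$ of the quotient $\{x_1=0\}/\langle e_1^z\rangle\cong(\K^{2(n-1)},\omega_{n-1})$. I would then clear the remaining first-row entries of $gM$ by one further shear-or-permutation generator so as to exhibit $gM$ as a generator times the block sum of the mode-$1$ identity with a lift of $M'$, whose own generators $L_{1\oplus A'}$, $U_{0\oplus B'}$, $V_{0\oplus B'}$ and the mode-$(2,\dots,n)$ Fourier transform all lie in our generating set; the induction hypothesis then finishes the argument.

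The main obstacle is precisely this peeling step: one must verify, using the symplectic relations $A^\trans C=C^\trans A$ and $B^\trans D=D^\trans B$, that once the first column is standardised the first row can be cleared consistently and the residual $2(n-1)\times 2(n-1)$ block is genuinely symplectic, so that the reduction to $\Sp[n-1][\K]$ is clean rather than merely block-triangular. An alternative route, which trades this bookkeeping for a single explicit factorisation, is to reduce instead to the case where the top-left block $A$ is invertible — attainable by a product of partial Fourier transforms $J_T$, since every Lagrangian subspace of $(\K^{2n},\omega_n)$ is transverse to at least one coordinate Lagrangian — and then apply the block-$LU$ identity $M=V_{CA^{-1}}\,L_A\,U_{A^{-1}B}$, which is valid because $CA^{-1}$ and $A^{-1}B$ are both symmetric whenever $A$ is invertible and $M$ is symplectic.
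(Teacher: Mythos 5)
The paper states this lemma without proof, treating it as a classical fact about generators of the symplectic group, so there is no in-paper argument to compare against; judged on its own terms, your proposal is correct in outline. Of the two routes you offer, the second — conjugate by partial Fourier transforms $J_T$ until the top-left block $A$ is invertible, then factor $M = V_{CA^{-1}}\,L_A\,U_{A^{-1}B}$ — is the standard proof and is essentially complete: the symmetry of $CA^{-1}$ and $A^{-1}B$ follows from the relations $A^\trans C = C^\trans A$ and $B^\trans D = D^\trans B$ exactly as you say, and the bottom-right block of the product comes out as $A^{-\trans} + CA^{-1}B = D$ by the third relation. The one ingredient you invoke without proof — that every Lagrangian subspace of $(\K^{2n},\omega_n)$ is transverse to some coordinate Lagrangian, so that some $J_T M$ has invertible top-left block — is true over any field but does need its own argument; it is the only real gap in that route. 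Your primary inductive route is also sound: the transitivity of the generators on $\K^{2n}\setminus\{0\}$ and the symplectic reduction of $\{x_1=0\}$ by $\langle e_1^z\rangle$ are correct, but, as you yourself acknowledge, the ``peeling'' bookkeeping (clearing the image of $e_1^x$ and verifying that the residual block is genuinely symplectic) is precisely what the block-$LU$ identity packages for free, so I would promote the second route to be the proof and drop the first. Two small remarks: the single-mode Fourier transform is indeed $U_1 V_{-1} U_1$ inside that mode's copy of $\mathrm{SL}_2(\K)$, so the partial Fourier transforms are available through the prop's direct sum as you claim; and the paper's displayed characterisation of $\Sp[n][\K]$ contains a typo (the condition $B^\trans C = C^\trans B$ should read $B^\trans D = D^\trans B$), which is the relation your factorisation actually uses.
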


\begin{definition}
A  matrix \(Q \in \Gl[n][\K]\) is \textbf{orthogonal} when \(Q^\trans Q = QQ^\trans = 1\). The orthogonal group \(\Orth[n][\K]\) consists of the subgroup of \(\Gl[n][\K]\) given by orthogonal matrices.
A \textbf{rotation} is an orthogonal matrix \(R\in \Orth[n][\K]\) with \(\det(R)=1\).
The \textbf{special orthogonal group} \(\SpOrth[n][\K]\) is the subgroup of \(\Orth[n][\K]\) given by rotations.
\end{definition}

\begin{lemma}
The prop of rotations,  \(\bigcup_{n\in\N} \SpOrth[n][\R]\), is generated rotation  matrices \({\mathcal R}(\theta)\) and the identity matrices on \(0\) and \(1\) under direct sum and composition, for all \(\theta\in [0,2\pi)\):
\[
{\mathcal R}(\theta)
\coloneqq
\begin{bmatrix}
\cos(\theta) & -\sin(\theta) \\
\sin(\theta)  & \cos(\theta) 
\end{bmatrix}
\quad\quad
\text{drawn in \(\GAA\) as}
\quad\quad
\tikzfig{figures/SympGauss/rotation_S}
\]
\end{lemma}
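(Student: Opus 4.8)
The plan is to reduce the statement to the classical fact that every rotation factors as a product of plane (Givens) rotations in \emph{adjacent} coordinate planes, since such adjacent rotations are precisely the morphisms $I_k \oplus {\mathcal R}(\theta) \oplus I_{n-k-2}$ obtainable from ${\mathcal R}(\theta)$ and the identities on $0$ and $1$ by direct sum and composition. Since the identity on $n$ is the $n$-fold direct sum of $I_1$, it suffices to show: for every $n$, the matrices $G_k(\theta) \coloneqq I_{k} \oplus {\mathcal R}(\theta) \oplus I_{n-k-2}$ (for $0 \leq k \leq n-2$ and $\theta \in [0,2\pi)$) generate $\SpOrth[n][\R]$ under composition. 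Crucially, only adjacent planes are needed, so no symmetry (swap) maps — which are orientation-reversing and hence not rotations — are required; this is what makes the restriction to the listed generators sound in the first place.

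I would prove this by induction on $n$, the cases $n \leq 1$ being trivial. Given $R \in \SpOrth[n][\R]$, consider its last column $\vec v = R\vec e_n$, a unit vector. Rotating in the plane $(1,2)$ I can annihilate the first entry of $\vec v$ into the second; rotating in the plane $(2,3)$ I then annihilate the resulting second entry into the third; continuing through the plane $(n-1,n)$ leaves a vector supported on the last coordinate, i.e. $\pm \vec e_n$. At each step the Givens angle that kills a given entry is determined only up to addition of $\pi$, so a sign choice remains available; I use the choice at the final step to force the outcome $+\vec e_n$ rather than $-\vec e_n$.

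Writing $G$ for the product of these adjacent Givens rotations, the matrix $GR$ has last column $\vec e_n$; orthogonality then forces its last row to be $\vec e_n^\trans$, so $GR = R' \oplus I_1$ with $R' \in \Orth[n-1][\R]$. Since $\det(G) = \det(R) = 1$ we obtain $\det(R') = 1$, hence $R' \in \SpOrth[n-1][\R]$. By the induction hypothesis $R'$ is a product of adjacent Givens rotations on the first $n-1$ coordinates, which are again among the generators in dimension $n$; and since each $G_k(\theta)^{-1} = G_k(-\theta)$ is itself a generator, $R = G^{-1}(R' \oplus I_1)$ is a product of generators, as required.

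The main obstacle is the orientation bookkeeping: I must ensure the reduction lands in $\SpOrth[n-1][\R]$ rather than merely $\Orth[n-1][\R]$, which is secured exactly by the sign freedom in the final Givens rotation together with the determinant computation. An alternative route via the Lie algebra — observing that the elementary skew-symmetric generators $E_{k,k+1}-E_{k+1,k}$ Lie-generate $\mathfrak{so}(n)$ — would only yield a dense subgroup directly and would then require a separate connectedness-and-compactness argument to upgrade to surjectivity, so I prefer the constructive Givens decomposition, which produces an explicit factorization of each rotation into the stated generators.
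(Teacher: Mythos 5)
The paper states this lemma without proof, treating it as a standard fact about $\operatorname{SO}(n)$, so there is no in-paper argument to compare against. Your Givens-rotation decomposition is correct and complete: the reduction of the last column of $R$ to $+\vec e_n$ by a chain of \emph{adjacent} plane rotations, the use of the residual sign freedom ($\theta \mapsto \theta+\pi$) at the final step to land on $+\vec e_n$ rather than $-\vec e_n$, the observation that orthogonality of $GR$ then forces $GR = R'\oplus I_1$, and the determinant bookkeeping placing $R'$ in $\SpOrth[n-1][\R]$ are all sound; the induction closes because the inverses $G_k(-\theta)$ of the generators are again generators. Your remark that only adjacent-plane rotations are needed is exactly the right thing to flag: the swap $\sigma_{1,1}$ has determinant $-1$ and hence is not a rotation, so a decomposition into arbitrary-plane Givens rotations conjugated by permutations would not immediately stay within the stated generating set. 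The only cosmetic point is that, for the same reason, $\bigcup_{n}\SpOrth[n][\R]$ is not literally closed under the symmetries of the ambient prop of real matrices; that is a looseness in the paper's phrasing of the statement, not a gap in your proof.
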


\begin{definition}
\textbf{The group of symplectic rotations} is the intersection \(\Sp[n][\R] \cap  \Orth[2n][\R]\). In other words, a symplectic rotation is a symplectomorphism of the form:

\[R\coloneqq \begin{bmatrix} C & -S \\ S & C \end{bmatrix}\]

The condition that \(\R\) is a symplectomorphism is equivalent to $A+i B$ being unitary; which induces a group isomorphism \(\Sp[n][\R] \cap \Orth[2n][\R] \cong \Unit[n][\C]\).
\end{definition}

\begin{lemma}
  The prop of symplectic rotations, \(\bigcup_{n\in\N} \Sp[n][\R] \cap
  \Orth[2n][\R]\), is generated by the following matrices under direct sum
  and composition, for all \(\theta \in [0,2\pi)\):
  \[
    {\mathcal R}(\theta)
    \coloneqq
    \begin{bmatrix}
    \cos(\theta) & -\sin(\theta) \\
    \sin(\theta)  & \cos(\theta) 
    \end{bmatrix}
    \qand
    {\mathcal S}(\theta)
    \coloneqq
    \begin{bmatrix}
    \cos(\theta) & -\sin(\theta) & 0 & 0 \\
    \sin(\theta) & \cos(\theta) & 0 & 0 \\
    0 & 0 & \cos(\theta) & -\sin(\theta) \\ 
    0 & 0 & \sin(\theta) & \cos(\theta) \\ 
    \end{bmatrix}
  \]
  The symplectic rotations \({\mathcal R}(\pi)\), \({\mathcal R}(\theta)\), and \({\mathcal S}(\vartheta)\) for \(\theta \in (-\pi,\pi)\) and \(\vartheta \in [0,2\pi)\) are respectively given in  \(\GSA[\R]\) by:
  \[
      \tikzfig{figures/SympGauss/rotation_antipode}
      \ ,\quad
    \tikzfig{figures/SympGauss/rotation_R}
    \qand
    \tikzfig{figures/SympGauss/rotation_Snew}
  \]
  Where  \({\mathcal R}(\pi)={\mathcal R}(\pi/2)^2\) so the antipode, which accounts for the asymptote of the tangent function is not needed as a generator.
\end{lemma}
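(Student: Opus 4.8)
The plan is to transport the statement across the group isomorphism \(\Sp[n][\R]\cap\Orth[2n][\R]\cong\Unit[n]\) established just above, which sends a symplectic rotation \(\begin{bmatrix} C & -S \\ S & C\end{bmatrix}\) to the unitary \(C+iS\), and then to invoke the standard decomposition of unitaries into phase shifters and beam splitters. First I would verify that this bijection is compatible with the prop structure. Composition is sent to matrix multiplication, since \((C_1+iS_1)(C_2+iS_2)=(C_1C_2-S_1S_2)+i(C_1S_2+S_1C_2)\) matches the block product. The monoidal product is sent to the block-diagonal embedding \(\Unit[m]\times\Unit[k]\hookrightarrow\Unit[m+k]\): the direct sum in \(\ALR\) already reorders the coordinates to \((\vec z_S,\vec z_T,\vec x_S,\vec x_T)\), so that \(R_1\oplus R_2\) has blocks \(C=C_1\oplus C_2\), \(S=S_1\oplus S_2\) and hence corresponds to \((C_1+iS_1)\oplus(C_2+iS_2)\). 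Consequently \(\bigcup_n\Sp[n][\R]\cap\Orth[2n][\R]\) is isomorphic, as a prop, to \(\bigcup_n\Unit[n]\) (objects \(\N\), the only arrows being the groups \(\Unit[n]\) on each object), and it suffices to show that \(\{\mathcal R,\mathcal S\}\) generates the latter.

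Next I would identify the images of the two generators. Writing the mode coordinates as \(w_j=z_j+ix_j\), a direct computation gives that \(\mathcal R(\theta)\) corresponds to the phase \(e^{i\theta}\in\Unit[1]\), while \(\mathcal S(\theta)\) (whose off-diagonal blocks vanish and whose diagonal blocks are a rotation) corresponds to the real rotation \(\begin{bmatrix}\cos\theta & -\sin\theta\\ \sin\theta & \cos\theta\end{bmatrix}\in\mathrm{SO}(2)\subset\Unit[2]\) acting on an adjacent pair of modes. Embedded via \(\oplus\) with identity wires, these are exactly the single-mode phase shifter and the adjacent beam splitter of linear optics.

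The core step is the classical fact that phase shifters and beam splitters generate \(\Unit[n]\), which I would prove in two stages. For \(n=2\), every element of \(\Unit[2]\) admits an Euler-type factorisation \(\diag(e^{i\alpha},e^{i\beta})\,\mathcal S(\theta)\,\diag(e^{i\gamma},e^{i\delta})\); each diagonal phase matrix is a direct sum \(\mathcal R(\cdot)\oplus\mathcal R(\cdot)\), so \(\{\mathcal R,\mathcal S\}\) generates \(\Unit[2]\), and in particular the mode-swap, being in \(\Unit[2]\), is derivable. For general \(n\), I would run the Givens/two-level elimination: using two-level unitaries to clear the subdiagonal of the first column reduces \(U\in\Unit[n]\) to a block \(1\oplus U'\) with \(U'\in\Unit[n-1]\), and one inducts. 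Each two-level factor is a block embedding \(1_{j-1}\oplus(\,\cdot\,)\oplus 1_{n-j-1}\) of a \(\Unit[2]\) element, hence a composite of embedded \(\mathcal R\)'s and \(\mathcal S\)'s by the \(n=2\) case; a two-level unitary on a non-adjacent pair is conjugated into an adjacent one by permutations, which are products of the derivable mode-swap. (Equivalently, one may cite the Reck or Clements decomposition, which produce such factorisations using only adjacent beam splitters and phases directly.) Transporting back along the isomorphism yields the generation claim. The main obstacle here is not mathematical depth — the underlying statement is the standard optical decomposition — but the bookkeeping of the first paragraph, namely keeping the coordinate reordering straight so that \(\oplus\) in the symplectic picture genuinely matches block-diagonal direct sum of unitaries, together with making the adjacency reduction precise.

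Finally, for the displayed graphical identifications I would compute the semantics in \(\ALR[\R]\) of each \(\GSA[\R]\) diagram using the spider and box interpretations of Theorem~\ref{thm:GSA}, and check equality with the matrices \(\mathcal R(\pi/2)\), \(\mathcal R(3\pi/2)\), \(\mathcal R(\theta)\) and \(\mathcal S(\vartheta)\); this is a routine matrix verification. The relations \(\mathcal R(\pi/4)^2=\mathcal R(\pi/2)\) and \(\mathcal R(\pi/2)^3=\mathcal R(3\pi/2)\) are immediate from \(\mathcal R(\theta)\mathcal R(\theta')=\mathcal R(\theta+\theta')\), so the symplectic Fourier transform and its inverse, which merely account for the poles of \(\tan\), need not be included as separate generators.
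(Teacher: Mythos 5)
The paper does not actually supply a proof of this lemma: it is stated in the appendix without argument, implicitly leaning on the standard linear-optical decomposition of unitaries. Your proposal is correct and supplies exactly the argument one would expect. The reduction across the isomorphism \(\Sp[n][\R]\cap\Orth[2n][\R]\cong\Unit[n]\) is sound, and your checks that it respects the prop structure are the right ones to make — in particular you are right that the coordinate reordering \((\vec z_S,\vec z_T,\vec x_S,\vec x_T)\) built into \(\oplus\) in \(\ALR\) is what makes the direct sum of symplectic rotations land on the block-diagonal sum of unitaries, and that \(\mathcal R(\theta)\mapsto e^{i\theta}\), \(\mathcal S(\theta)\mapsto\) a real \(\mathrm{SO}(2)\) rotation on two adjacent modes. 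From there the Euler factorisation of \(\Unit[2]\) (which also yields the mode swap) plus Givens elimination gives generation of \(\Unit[n]\); citing Reck or Clements is equally acceptable. The one part you cannot fully discharge blind is the last one: checking that the four displayed \(\GSA[\R]\) diagrams denote \(\mathcal R(\pi/2)\), \(\mathcal R(3\pi/2)\), \(\mathcal R(\theta)\) and \(\mathcal S(\vartheta)\) requires the actual figures (the generic-\(\theta\) diagram is a shear decomposition involving \(\tan\theta\), which is precisely why \(\pi/2\) and \(3\pi/2\) are split off and handled by the Fourier-transform boxes), but your stated method — compute the semantics via Theorem~\ref{thm:GSA} and compare matrices — is the correct and only way to do it.
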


\section{Proofs}
\label{sec:proofs}
\subsection{Proofs of section~\ref{sec:gauss}}

\begin{proof}[Proof of proposition~\ref{prop:positivechar}] 
First, remark that the definition of a Lagrangian subspace being positive is equivalent to the Hermitian form $H\coloneqq\frac{\text{-}i}{2}\Omega$ being positive semidefinite, where we recall that \(\Omega\) is the antisymmetric matrix induced by the symplectic form \(\omega(-,=)\).

We proceed by showing first $(1)\Rightarrow (2)$ then $(2)\Rightarrow (1)$ and finally $(1)\Leftrightarrow (3)$.
	In each case, we have have a Lagrangian subspace $\mathcal{L}\subseteq (\C^{2n}, \omega_n)$ with AP invariant \((E,\phi)\), with $r\coloneqq\rank(\Im(\phi))$.  For convenience, take \(J_r\coloneqq \diag(J_r, 0) \in \Matrices[n][n][\R]\).
	
	\begin{center}
		\fbox{$(1)\Rightarrow (2)$} 
	\end{center}
	
	 Suppose that  \(E\) is real and \(\Im(\varphi) \succeq 0\).  Then,
	$\mathcal{L}$ is real symplectomorphic to:
	
	\[\ker\begin{bmatrix}[cc|cc]
		I & L & \phi & 0 \\ 0 & 0 & -L^\trans & I
	\end{bmatrix} \]
	
	Where $L$ has real coefficients and $\Im(\phi)\succeq 0$. Then:
	
	\[\begin{bmatrix}
		I & -L \\ 0 & I
	\end{bmatrix}\begin{bmatrix}[cc|cc]
		I & L & \phi & 0 \\ 0 & 0 & -L^\trans & I
	\end{bmatrix}\begin{bmatrix}I & 0 & 0 & 0 \\ 0 & I & 0 & 0 \\ 0 & 0 & I & 0 \\ 0 & I & 0 & I
	\end{bmatrix}\begin{bmatrix}I & 0 & -\Re(\phi)-LL^\trans & L \\ 0 & I & L^\trans & -I \\ 0 & 0 & I & 0 \\ 0 & 0 & 0 & I
\end{bmatrix}= \begin{bmatrix}[cc|cc]
		I & 0 & i\Im(\phi) & 0 \\ 0 & I & 0 & 0
	\end{bmatrix}\]
	
	We get that $\mathcal{L}$ is real symplectomorphic to:
	
	\[\ker\begin{bmatrix}[cc|cc]
		I & 0 & i\Im(\phi) & 0 \\ 0 & I & 0 & 0
	\end{bmatrix}\]

	Since $\Im(\phi)\succeq 0$ there is an orthogonal matrix $O$ such that $O^\trans \Im(\phi)O=\Lambda$ where $\Lambda$ is a diagonal matrix with non-negative coefficients. We denote $\sqrt{\Lambda}$ the invertible diagonal matrix whose coefficients are the square root of those of $\Lambda$  if they are non-zero and else $1$. We have $\sqrt{\Lambda}^{-1} \Lambda \sqrt{\Lambda}^{-1} = J_r $. Then:
	
	\[\begin{bmatrix}
		O^\trans & 0 \\ 0 & I
	\end{bmatrix}\begin{bmatrix}[cc|cc]
		I & 0 & i\Im(\phi) & 0 \\ 0 & I & 0 & 0
	\end{bmatrix}\begin{bmatrix}O & 0 & 0 & 0\\ 0 & I & 0 & 0 \\ 0 & 0 & O & 0 \\ 0 & 0 & 0 & I 
	\end{bmatrix}= \begin{bmatrix}[cc|cc]
		I & 0 & i\Lambda & 0 \\ 0 & I & 0 & 0
	\end{bmatrix}\]
	
	and:
	
	\[\begin{bmatrix}
		\sqrt{\Lambda}^{-1} & 0 \\ 0 & I
	\end{bmatrix} \begin{bmatrix}[cc|cc]
		I & 0 & i\Lambda & 0 \\ 0 & I & 0 & 0
	\end{bmatrix}\begin{bmatrix} \sqrt{\Lambda} & 0 & 0 & 0 \\ 0 & I & 0 & 0 \\0 & 0 & \sqrt{\Lambda}^{-1} & 0 \\ 0 & 0 & 0 & 0 
	\end{bmatrix}=\begin{bmatrix}[cc|cc]
		I & 0 & iJ_r & 0 \\ 0 & I & 0 & 0
	\end{bmatrix}\]
	
	Finally, $\mathcal{L}$ is real symplectomorphic to $\ker\begin{bmatrix}[c|c]
		I & iJ_r
	\end{bmatrix}$.\\
	
	\begin{center}
		\fbox{$(2)\Rightarrow (1)$} 
	\end{center}

	Suppose that $\mathcal{L}$ is real symplectomorphic to \(\ker \begin{bmatrix}[c|c] I & iJ_r \end{bmatrix}\). By reducing to the AP-form we already know that $\mathcal{L}$ is real symplectomorphic to 
	
	\[\ker\begin{bmatrix}[cc|cc]
		I & L & \phi & 0 \\ 0 & 0 & -L^\trans & I
	\end{bmatrix}\]
	
	Furthermore we have:
	
	\[\begin{bmatrix}
		I & -L \\ 0 & I
	\end{bmatrix}\begin{bmatrix}[cc|cc]
		I & L & \phi & 0 \\ 0 & 0 & -L^\trans & I
	\end{bmatrix}\begin{bmatrix}I & 0 & 0 & 0 \\ 0 & I & 0 & 0 \\ 0 & 0 & I & 0 \\ 0 & I & 0 & I
	\end{bmatrix}\begin{bmatrix}I & 0 & \Re(-\phi-LL^\trans ) & \Re(L) \\ 0 & I & \Re (L^\trans )& -I \\ 0 & 0 & I & 0 \\ 0 & 0 & 0 & I
	\end{bmatrix}= \begin{bmatrix}[cc|cc]
	I & 0 & i\Im(\phi+LL^\trans) & -i\Im(L) \\ 0 & I & -i\Im(L^\trans) & 0
\end{bmatrix}\]
	
	Thus $\mathcal{L}$ is real symplectomorphic to:
	
	\[\ker\begin{bmatrix}[cc|cc]
		I & 0 & i\Im(\phi+LL^\trans) & -i\Im(L) \\ 0 & I & -i\Im(L^\trans) & 0
	\end{bmatrix}\]
	
	Now setting $\Phi \coloneqq \begin{bmatrix}
		\Im(\phi+LL^\trans ) & -\Im(L^\trans) \\ -\Im(L) & 0
	\end{bmatrix}$, we look at the conditions that $L$ and $\phi$ must satisfy for $\mathcal{L}=\ker\begin{bmatrix}[c|c] I & i\Phi \end{bmatrix}$ to be real symplectomorphic to  $\ker\begin{bmatrix}[c|c]  I & iJ_r
	\end{bmatrix}$. In other words, such that there exists real matrices $A,B,C,D,X$ and $Y$, with $X+iY$ invertible and $\begin{bmatrix}
		A & B \\ C & D
	\end{bmatrix}$ a symplectomorphism, such that:
	
	\[\begin{bmatrix}[c|c] I & iJ_r \end{bmatrix}\begin{bmatrix}
		A & B \\ C & D
	\end{bmatrix}=\left(X+iY\right) \begin{bmatrix}[c|c] I & i\Phi \end{bmatrix}\]
	
	This gives: 
	
	\[\begin{bmatrix}[c|c]
		A+ iJ_r C &   B+ iJ_r D
	\end{bmatrix} = \begin{bmatrix}[c|c]
		X +iY & -Y\Phi + i X\Phi \end{bmatrix}  \]
	
	and then by identifying the real and imaginary parts:
	
	\[\begin{cases}
		A = X \\
		J_r C = Y \\
		B = -Y\Phi \\
		J_r D = X\Phi
	\end{cases}\]
	
	we can elmininate $X$ and $Y$:
	
	\[\begin{cases}
		B = -J_r C \Phi \\
		J_r D  = A\Phi
	\end{cases}\]
	
	To this we can furthermore add the symplectic condition giving:
	
	\[\begin{cases}
		B = -J_r C \Phi \\
		J_r D  = A\Phi\\
		B^\trans D= D^\trans B \\
		A^\trans C = C^\trans A \\
		I = A^\trans D - C^\trans B
	\end{cases}\]
	
	Notice that the third equation is redundant as it follows from the others:
	
	\[B^\trans D = - \Phi C^\trans J_r D = - \Phi C^\trans A \Phi = -\Phi A^\trans C  \Phi= - D^\trans J_r C \Phi = - D^\trans B. \]
	
	We can now eliminate $B$ and get:
	
	\[\begin{cases}
		J_r D  = A\Phi\\
		A^\trans C = C^\trans A \\
		I = A^\trans D + C^\trans J_r C \Phi
	\end{cases}\]
	
	Multiplying the last equation by $\Phi $ on the left we get:
	
	\[ \Phi =  \Phi A^\trans D + \Phi C^\trans J_r C \Phi\]
	
	Using $ J_r D  = A\Phi $ gives: 
	
	\[ \Phi =   D^\trans J_r D + \Phi C^\trans J_r C \Phi \]
	
	We see that $\Phi$ is a sum of positive matrices, this implies that $\Phi \succeq 0 $. But as $\Phi= \begin{bmatrix}
		\Im(\phi+LL^\trans ) & -\Im(L^\trans) \\ -\Im(L) & 0
	\end{bmatrix}$, this implies that $\Im(L)=0$ and $\Im(\phi)\succeq 0 $. In other words $E$ has real coefficients and $\Im(L)=0$.\\
	
	\begin{center}
		\fbox{$(3)\Leftrightarrow (1)$} 
	\end{center}
	
	As $\mathcal{L}$ is Lagrangian, using $\mathcal{L}^{\omega}=\mathcal{L}$ and $\ker(B)^{\omega}=\Im(\Omega B^\trans )$ we get:
	
	\[\mathcal{L}=\ker \begin{bmatrix}[c|c]
		Z & X
	\end{bmatrix} = \Im \left(\Omega_n \begin{bmatrix}
		Z^\trans \\ X^\trans
	\end{bmatrix}\right)= \Im\begin{bmatrix}
		X^\trans \\ -Z^\trans
	\end{bmatrix} \]
	
	Thus the matrix representing the restriction $H_{|\mathcal{L}}$ of the Hermitian form $H$ to $\mathcal{L}$ is:
	
	\[\begin{bmatrix}
		\overline{X} & -\overline{Z}
	\end{bmatrix} \frac{\text{-}i}{2}\Omega_n \begin{bmatrix}
		X^\trans \\ -Z^\trans
	\end{bmatrix}=\frac{\text{-}i}{2}\left(\overline{Z}X^\trans - \overline{X}Z^\trans \right) \]
	
	We now take $\begin{bmatrix}[c|c]
		Z & X
	\end{bmatrix}=\begin{bmatrix}[cc|cc]
		I & L & \phi & 0 \\ 0 & 0 & -L^\trans & I
	\end{bmatrix}\begin{bmatrix}
		P & 0 \\ 0 & P
	\end{bmatrix}$. We notice that for any matrix, setting $B=A\begin{bmatrix}
		P & 0 \\ 0 & P
	\end{bmatrix}$, we have:
	
	\[ (\Omega_n B^\trans )^{\dagger} \frac{\text{-}i}{2}\Omega_n (\Omega_n B^\trans ) =\frac{\text{-}i}{2}\overline{B}\Omega_n B^\trans = \frac{\text{-}i}{2}\overline{A}\begin{bmatrix}
		P & 0 \\ 0 & P
	\end{bmatrix}\Omega_n \begin{bmatrix}
	P^\trans & 0 \\ 0 & P^\trans
\end{bmatrix} A^\trans = \frac{\text{-}i}{2}\overline{A}\Omega_n  A^\trans \]
	
	This gives the following matrix for the restricted Hermittian form:
	
	\begin{align*}
		H_{|\mathcal{L}}&= \frac{\text{-}i}{2} \left(\begin{bmatrix}
			I & \overline{L} \\ 0 & 0
		\end{bmatrix}\begin{bmatrix}
			\phi & -L \\ 0 & I
		\end{bmatrix}-\begin{bmatrix}
			\overline{\phi} & 0 \\ -\overline{L}^\trans & I
		\end{bmatrix}\begin{bmatrix}
			I & 0 \\ L^\trans & 0
		\end{bmatrix}\right) \\
		&=\begin{bmatrix}
			\Im(\phi) & -\Im(L)\\ -\Im(L^\trans) & 0
		\end{bmatrix}
	\end{align*}
	
	This matrix is positive if and only if $\Im(\phi) \succeq 0$ and $\Im(L)=0$ in other words if and only if $\Im(\phi) \succeq 0$ and $E$ has real coefficients.
\end{proof}

\begin{proof}[Proof of proposition~\ref{prop:quasirealchar}] 
We proceed in very similar fashion to the proof of propsition~\ref{prop:positivechar}; indeed we reuse many facts established therein.
As a matter of notation, let \(K\) be the symmetric matrix corresponding to the Hermitian form \(\chi(-,=)\); ie:

$$K\coloneqq \begin{bmatrix}
		0 & I \\ I & 0
	\end{bmatrix}$$
	
Moreover, take  \(J_r\coloneqq \diag(J_r, \vec 0)\) to be the diagonal matrix with \(r\) \(1\)'s followed by the appropriate number of \(0\)'s.

 Again we first show $(1)\Rightarrow (2)$ then $(2)\Rightarrow (1)$ and finally $(1)\Leftrightarrow (3)$.
 In each case, we have have a positive Lagrangian subspace $\mathcal{L}\subseteq (\C^{2n}, \omega_n)$ with AP invariant \((E,\phi)\), with $r\coloneqq\rank(\Im(\phi))$.  For convenience, take \(J_r\coloneqq \diag(J_r, 0) \in \Matrices[n][n][\R]\).
	
	\begin{center}
		\fbox{$(1)\Rightarrow (2)$} 
	\end{center}
	
	$\mathcal{L}$ is real diagonal symplectomorphic to:
	
	\[\ker\begin{bmatrix}[cc|cc]
		I & L & \phi & 0 \\ 0 & 0 & -L^\trans & I
	\end{bmatrix} \]
	
	Where $L$ has real coefficients, $\Im(\phi)\succeq 0$ and $\Re(\phi)=0$. Then:
	
	\[\begin{bmatrix}[cc|cc]
		I & L & \phi & 0 \\ 0 & 0 & -L^\trans & I
	\end{bmatrix}\begin{bmatrix}I & -L & 0 & 0\\ 0 & I & 0 & 0 \\ 0 & 0 & I & 0\\ 0 & 0 & L^\trans & I
	\end{bmatrix}= \begin{bmatrix}[cc|cc]
		I & 0 & i\Im(\phi) & 0 \\ 0 & 0 & 0 & I
	\end{bmatrix}\]	
	
	We reuse the same trick as for the positive case. Since $\Im(\phi)\succeq 0$ there is an orthogonal matrix $O$ such that $O^\trans \Im(\phi)O=\Lambda$ where $\Lambda$ is a diagonal matrix with non-negative coefficients. Then:
	
	\[\begin{bmatrix}
		O^\trans & 0 \\ 0 & I
	\end{bmatrix}\begin{bmatrix}[cc|cc]
		I & 0 & i\Im(\phi) & 0 \\ 0 & 0 & 0 & I
	\end{bmatrix}\begin{bmatrix}O & 0 & 0 & 0\\ 0 & I & 0 & 0 \\ 0 & 0 & O & 0 \\ 0 & 0 & 0 & I 
	\end{bmatrix}= \begin{bmatrix}[cc|cc]
		I & 0 & i\Lambda & 0 \\ 0 & 0 & 0 & I
	\end{bmatrix}\]
	
	We denote $\sqrt{\Lambda}$ the invertible diagonal matrix whose coefficients are the square root of those of $\Lambda$  if they are non-zero and else $1$. We have $\sqrt{\Lambda}^{-1} \Lambda \sqrt{\Lambda}^{-1} = J_r $. So:
	
	\[\begin{bmatrix}
		\sqrt{\Lambda}^{-1} & 0 \\ 0 & I
	\end{bmatrix} \begin{bmatrix}[cc|cc]
		I & 0 & i\Lambda & 0 \\ 0 & 0 & 0 & I
	\end{bmatrix}\begin{bmatrix} \sqrt{\Lambda} & 0 & 0 & 0\\ 0 & I & 0 & 0 \\ 0 & 0 & \sqrt{\Lambda}^{-1} & 0 \\ 0 & 0 & 0 & I
	\end{bmatrix}=\begin{bmatrix}[cc|cc]
		I & 0 & iJ_r & 0 \\ 0 & 0 & 0 & I
	\end{bmatrix}\]
	
	Finally, $\mathcal{L}$ is real diagonal symplectomorphic to $\ker\begin{bmatrix}[cc|cc]
		I & 0 & iJ_r & 0 \\ 0 & 0 & 0 & I
	\end{bmatrix}$.\\
	
	\begin{center}
		\fbox{$(2)\Rightarrow (1)$} 
	\end{center}
	
	First, $\mathcal{L}$ is real diagonal symplectomorphic to 
	
	\[\ker\begin{bmatrix}[cc|cc]
		I & L & \phi & 0 \\ 0 & 0 & -L^\trans & I
	\end{bmatrix}\]
	
	Using the same notation as before, denoting $O$ the orthogonal matrix diagonalizing $\Im(\phi)$ into a diagonal matrix $\Lambda$, we have:
	
	\[\begin{bmatrix}
		O^\trans & 0 \\ 0 & I
	\end{bmatrix}\begin{bmatrix}[cc|cc]
		I & L & \phi & 0 \\ 0 & 0 & -L^\trans & I
	\end{bmatrix}\begin{bmatrix}
		O & -L & 0 & 0 \\ 0 & I & 0 & 0 \\ 0 & 0 & O & 0 \\ 0 & 0 & L^{\trans}O & I
	\end{bmatrix}=\begin{bmatrix}[cc|cc]
		I & 0 & O^\trans \Re(\phi)O +i\Lambda & 0 \\ 0 & 0 & 0 & I
	\end{bmatrix}\]
	
	Denoting again $\sqrt{\Lambda}$ the invertible diagonal matrix whose coefficients are the square root of those of $\Lambda$  if they are non-zero and else $1$. We have $\sqrt{\Lambda}^{-1} \Lambda \sqrt{\Lambda}^{-1} = J_r $ and then:
	
	\[\begin{bmatrix}
		\sqrt{\Lambda}^{-1} & 0 \\ 0 & I
	\end{bmatrix}\begin{bmatrix}[cc|cc]
		I & 0 & O^\trans \Re(\phi)O +i\Lambda & 0 \\ 0 & 0 & 0 & I
	\end{bmatrix}\mathcal{D}\begin{bmatrix}
		\sqrt{\Lambda} & 0 & 0 & 0\\ 0 & I & 0 & 0 \\ 0 & 0 & \sqrt{\Lambda}^{-1} & 0 \\ 0 & 0 & 0 & I
	\end{bmatrix}=\begin{bmatrix}[cc|cc]
		I & 0 & \alpha +iJ_r & 0 \\ 0 & 0 & 0 & I
	\end{bmatrix}\]
	
	with $\alpha\coloneqq \sqrt{\Lambda}^{-1}O^\trans \Re(\phi)O\sqrt{\Lambda}^{-1}$. We can now look for the necessary conditions on $\Re(\phi)$ for $\ker \begin{bmatrix}[cc|cc]
		I & 0 & \alpha + iJ_r & 0 \\ 0 & 0 & 0 & I
	\end{bmatrix}$ to be real diagonal symplectomorphic to $\ker \begin{bmatrix}[cc|cc]
		I & 0 & iJ_r & 0 \\ 0 & 0 & 0 & I
	\end{bmatrix}$. In other words, such that there exists real matrices $A_1$, $A_2$, $A_3$, $A_4$, $D_1$, $D_2$, $D_3$, $D_4$ and complex matrices $W_1$, $W_2$, $W_3$, $W_4$ with $\begin{bmatrix}
		W_1 & W_2 \\ W_3 & W_4 
	\end{bmatrix}$ invertible and $\begin{bmatrix}
		D_1 & D_2 \\ D_3 & D_4 
	\end{bmatrix}$ being the transpose of the inverse of $\begin{bmatrix}
		A_1 & A_2 \\ A_3 & A_4 
	\end{bmatrix}$, satisfying:
	
	\[\begin{bmatrix}[cc|cc]
		I & 0 & \alpha + iJ_r & 0 \\ 0 & 0 & 0 & I
	\end{bmatrix}\begin{bmatrix}
		A_1 & A_2 & 0 & 0 \\ A_3 & A_4 & 0 & 0 \\ 0 & 0 & D_1 & D_2 \\ 0 & 0 & D_3 & D_4 
	\end{bmatrix}=\begin{bmatrix}
		W_1 & W_2 \\ W_3 & W_4 
	\end{bmatrix} \begin{bmatrix}[cc|cc] I & 0 & iJ_r & 0 \\ 0 & 0 & 0 & I \end{bmatrix}\]
	
	This gives: 
	
	\[\begin{bmatrix}[cc|cc]
		A_1 & A_2 & \alpha D_1 + iJ_r D_1 & \alpha D_2 + iJ_r D_2 \\ 0 & 0 & D_3 & D_4
	\end{bmatrix} = \begin{bmatrix}[cc|cc]
		W_1 & 0 & iW_1 J_r & W_2 \\ W_3 & 0 & iW_3 J_r & W_4
	\end{bmatrix}\]
	
	We directly get $W_1 = A_1 $, $W_2 = \alpha D_2 + iJ_r D_2 $, $W_3 = 0$ and $W_4 = D_4 $, and then $A_2 = 0$ and $D_3 = 0$. From the constraints it follows that $D_1 = A_{1}^{-\trans} $, $D_2 = -A_{1}^{-\trans}A_{3}^\trans A_{4}^{-\trans}$ and $D_4 = A_{4}^{-\trans}$.
	
	Then by identifying the real and imaginary parts gives:
	
	\[\begin{cases}
		\alpha A_{1}^{-\trans} = 0 \\
		J_r A_{1}^{-\trans} = A_{1} J_r
	\end{cases}\]
	
	It implies that $\alpha= 0$ and then $\Re(\phi)=0 $, as expected.
	
	\begin{center}
		\fbox{$(3)\Leftrightarrow (1)$} 
	\end{center}
	
	As $\mathcal{L}$ is Lagrangian, using $\mathcal{L}^{\omega}=\mathcal{L}$ and $\ker(B)^{\omega}=\Im(\Omega B^\trans )$ we get:
	
	\[\mathcal{L}=\ker \begin{bmatrix}[c|c]
		Z & X
	\end{bmatrix} = \Im \left(\Omega_n \begin{bmatrix}
		Z^\trans \\ X^\trans
	\end{bmatrix}\right)= \Im\begin{bmatrix}
		X^\trans \\ -Z^\trans
	\end{bmatrix} \]
	
	Thus the matrix representing the restriction $K_{|\mathcal{L}}$ of the Hermitian form $K$ to $\mathcal{L}$ is:
	
	\[\begin{bmatrix}
		\overline{X} & -\overline{Z}
	\end{bmatrix} K \begin{bmatrix}
		X^\trans \\ -Z^\trans
	\end{bmatrix}=-\overline{Z}X^\trans - \overline{X}Z^\trans \]
	
	We now take $\begin{bmatrix}
		Z & X
	\end{bmatrix}=\begin{bmatrix}[cc|cc]
		I & L & \phi & 0 \\ 0 & 0 & -L^\trans & I
	\end{bmatrix}\begin{bmatrix}
		P & 0 \\ 0 & P
	\end{bmatrix}$. We notice that for any matrix, setting $B=A\begin{bmatrix}
	P & 0 \\ 0 & P
\end{bmatrix}$, and using $K \Omega_n = -\Omega_n K =\begin{bmatrix}
		-I & 0 \\ 0 & I \end{bmatrix} $ we have:
	
	\[(\Omega_n B^\trans )^{\dagger} K (\Omega_n B^\trans )=\overline{A}\begin{bmatrix}
		P & 0 \\ 0 & P
	\end{bmatrix}\Omega_n K \Omega_n \begin{bmatrix}
	P^\trans & 0 \\ 0 & P^\trans
\end{bmatrix} A^\trans = \overline{A}K A^\trans \]
	
	This gives the following matrix for the restricted Hermittian form:
	
	\begin{align*}
		K_{|\mathcal{L}}&=- \left(\begin{bmatrix}
			I & \overline{L} \\ 0 & 0
		\end{bmatrix}\begin{bmatrix}
			\phi & -L \\ 0 & I
		\end{bmatrix}+\begin{bmatrix}
			\overline{\phi} & 0 \\ -\overline{L}^\trans & I
		\end{bmatrix}\begin{bmatrix}
			I & 0 \\ L^\trans & 0
		\end{bmatrix}\right) \\
		&=\begin{bmatrix}
			2\Re(\phi) & +2i\Im(L)\\ -2i\Im(L^\trans) & 0
		\end{bmatrix}
	\end{align*}
	
	As we already know that $\Im(L)=0$, this matrix is $0$ if and only if $\Re(\phi)= 0$.
\end{proof}

\begin{proof}[Proof of lemma~\ref{lem:closure}]
  First, we prove that positive affine Lagrangian relations are closed under composition.  Take two such relations \((S+\vec s):n\to m\) and \((R+\vec r):m\to k\). If the composite is empty, then the claim follows vacuously. Otherwise, take some \([\vec v_I^\trans\ \vec v_O^\trans]^\trans \in (R+\vec r)\circ (S+\vec s)\).
  Then there exists some \(\vec w \in \C^{2m}\) such that \( [\vec v_I^\trans\ \vec w^\trans]^\trans \in (S+\vec s)\) and \([\vec w^\trans\ \vec v_O^\trans]\in (R+\vec r)\).  Therefore:
  \begin{align}
    i\omega_{n,k}\left(\overline{\begin{bmatrix} \vec v_I \\ \vec v_O \end{bmatrix}},  \begin{bmatrix} \vec v_I \\ \vec v_O \end{bmatrix}\right)
      &=  i\omega_{n,k}\left(\begin{bmatrix} \overline{\vec v_I} \\ \overline{\vec v_O} \end{bmatrix},  \begin{bmatrix} \vec v_I \\ \vec v_O \end{bmatrix}\right)\\
      &=  i\omega_{k}\left( \overline{\vec v_O},\vec v_O \right)-   i\omega_{n}\left( \overline{\vec v_I},\vec v_I \right)\\
      &=  i\omega_{k}\left( \overline{\vec v_O},\vec v_O \right)-   i\omega_{k}\left( \overline{\vec w},\vec w \right) +  i\omega_{k}\left( \overline{\vec w},w \right)- i\omega_{n}\left( \overline{\vec v_I},\vec v_I \right)\\
      &=  i\omega_{n,m}\left(\begin{bmatrix} \overline{\vec v_I} \\ \overline{\vec w} \end{bmatrix},  \begin{bmatrix} \vec v_I \\ \vec w \end{bmatrix}\right)
         + i\omega_{m,k}\left(\begin{bmatrix} \overline{\vec w} \\ \overline{\vec v_O} \end{bmatrix},  \begin{bmatrix} \vec w \\ \vec v_O \end{bmatrix}\right) \\
      &=  i\omega_{n,m}\left(\overline{\begin{bmatrix} \vec v_I \\ \vec w \end{bmatrix}},  \begin{bmatrix} \vec v_I \\ \vec w \end{bmatrix}\right)
         + i\omega_{m,k}\left(\overline{\begin{bmatrix} \vec w \\ \vec v_O \end{bmatrix}},  \begin{bmatrix} \vec w \\ \vec v_O \end{bmatrix}\right) \geq 0
  \end{align}

  Suppose moreover, that \((S+\vec s):n\to m\) and \((R+\vec r):m\to k\) are quasi-real.  Again if the composite is empty, it is vacuously quasi-real.  Otherwise, take some
  \[[\vec u_I^\trans\ \vec u_O^\trans]^\trans,
    [\vec v_I^\trans\ \vec v_O^\trans]^\trans \in (R+\vec r)\circ (S+\vec s)
  \]
  Then there exist 
  \(\vec a, \vec b\in \C^{2m}\)
  such that  
  \( [\vec v_I^\trans\ \vec a^\trans]^\trans,  [\vec u_I^\trans\ \vec b^\trans]^\trans \in (S+\vec s),
  [\vec a^\trans \ \vec v_O^\trans]^\trans, [\vec b^\trans \ \vec u_O^\trans]^\trans  \in (R+\vec r)\).
%
%  \[ [\vec v_{I,z}^\trans\ \vec v_{I,x}^\trans\ \vec a_z^\trans\ \vec a_x^\trans]^\trans,  [\vec u_{I,z}^\trans\ \vec u_{I,x}^\trans\ \vec b_z^\trans\ \vec b_x^\trans]^\trans \in (S+\vec s)\]
%  \[ [\vec a_z^\trans\ \vec a_x^\trans \ \vec v_{O,z}^\trans\ \vec v_{O,x}^\trans]^\trans, [\vec b_z^\trans\ \vec b_x^\trans \ \vec u_{O,z}^\trans\ \vec u_{O,x}^\trans]^\trans  \in (R+\vec r)\]
  Therefore, 
%  \begin{align}
%    &\overline{[\vec u_{I,z}^\trans\  \vec u_{O,z}]}
%    \begin{bmatrix}
%      \vec v_{I,x} \\ \vec v_{O,x}
%    \end{bmatrix}
%    +
%    \overline{[\vec u_{I,x}^\trans\  \vec u_{O,x}]}
%    \begin{bmatrix}
%      \vec v_{I,z} \\ \vec v_{O,z}
%    \end{bmatrix}\\
%    &=
%      \overline{\vec u_{I,z}}^\trans\vec v_{I,x}
%        +\overline{\vec u_{O,z}}^\trans\vec v_{O,x}
%        +\overline{\vec u_{I,x}}^\trans\vec v_{I,z}
%        +\overline{\vec u_{O,x}}^\trans\vec v_{O,z}\\
%  \end{align}
%  \begin{align}
%    \chi_{k}&([\vec u_{O,z}^\trans\ \vec u_{O,x}^\trans], [\vec v_{O,z}^\trans\ \vec v_{O,x}^\trans]^\trans)-\chi_{n}([\vec u_{I,z}^\trans\ \vec u_{I,x}^\trans], [\vec v_{I,z}^\trans\ \vec v_{I,x}^\trans]^\trans)\\
%    =&\overline{\vec u_{O,z}}^\trans \vec v_{O,x}+ \overline{\vec u_{O,x}}^\trans \vec v_{O,z}
%       -\overline{\vec u_{I,z}}^\trans \vec v_{I,x}- \overline{\vec u_{I,x}}^\trans \vec v_{I,z}
%  \end{align}
  \begin{align}
    \chi_{k}&(\vec u_O, \vec v_O)-\chi_{n}(\vec u_I, \vec v_I)\\
    =&\chi_{k}(\vec u_O, \vec v_O)-\chi_{n}(\vec u_I, \vec v_I)+0+0\\
    =&\chi_{k}(\vec u_O, \vec v_O)-\chi_{n}(\vec u_I, \vec v_I)
    -\chi_{m}(\vec b, \vec a)+\chi_{n}(\vec u_I, \vec v_I)
    -\chi_{k}(\vec u_O, \vec v_O)+\chi_{m}(\vec b, \vec a)\\
    =&\chi_{k}(\vec u_O, \vec v_O)-\chi_{k}(\vec u_O, \vec v_O)+\chi_{n}(\vec u_I, \vec v_I)-\chi_{n}(\vec u_I, \vec v_I)+\chi_{m}(\vec b, \vec a)-\chi_{m}(\vec b, \vec a)\\
    =&0
%    =&(\chi_{k}(\vec u_O, \vec v_O)-\overline{\chi_{k}(\vec v_O, \vec u_O)})
%       +(-\chi_{n}(\vec u_I^\trans, \vec v_I^\trans)+\overline{\chi_{n}(\vec v_I^\trans, \vec u_I^\trans)})\\
%    =&(\chi_{k}(\vec u_O^\trans, \vec v_O^\trans)-\chi_{k}(\vec u_O^\trans,\vec v_O^\trans))
%       +(-\chi_{n}(\vec u_I^\trans, \vec v_I^\trans)+\chi_{n}(\vec u_I^\trans, \vec v_I^\trans))\\
%     =&0
  \end{align}

  Now we prove that the cup quasi-real. Take any \((\bullet,[z\ z\ x\ -x])\in \eta_1\).  Then
  \begin{align}
    \omega_{0,1}((\bullet,[z\ z\ x\ -x]^\trans), (\bullet,[z\ z\ x\ -x]^\trans))
    &=i\omega_1(\overline{[z\ z\ x\ -x]}^\trans, [z\ z\ x\ -x]^\trans)\\
    &=i\omega_1([\overline z\ \overline z\ \overline x\ -\overline x]^\trans, [z\ z\ x\ -x]^\trans)\\
    &=i( \overline z (x) +\overline z (-x) + \overline x (z) - \overline x (z))\\
    &=0 \\
  \end{align}
  So it is positive.  Take another \((\bullet,[z'\ z'\ x'\ -x']\in \eta_1)\), then
  \begin{align}
    \chi_1([z\ z\ x\ -x]^\trans,[z'\ z'\ x'\ -x']^\trans)
    =\overline z(x') +\overline z(-x')  +\overline x(z')-\overline x(z')
    =0
  \end{align}
  so it is quasi real.  The cap and identity are also quasi-real by the same argument.  The  symmetry are also clearly quasi-real, and the tensor product clearly preserves both properties.
\end{proof}

\begin{lemma}\label{lm:lindec}
  Any diagram in \(\GAA[\R]\) can be rewritten as: \[\tikzfig{Gauss/lindec}\]
  
  where \(S\) and \(T\) are invertible.  
\end{lemma}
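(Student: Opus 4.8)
The plan is to reduce the statement to a normal-form result about the semantic category and then transport it back along the presentation. Since $\GAA[\R]$ is a complete presentation of the prop $\AR[\R]$ of real affine relations (Bonchi et al.\ \cite{gaa}, cf.\ remark~\ref{rem:affrel}), two diagrams are equal in $\GAA[\R]$ exactly when they denote the same affine relation. Hence it suffices to show that every affine relation $R:n\to m$ factors, at the level of $\AR[\R]$, as a composite of an invertible matrix on the inputs, a single fixed canonical relation $C$ (the middle of the claimed form), and an invertible matrix on the outputs --- and that each factor is itself denotable by a $\GAA[\R]$-diagram. Completeness then lifts the semantic factorization to the required rewriting.

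First I would write $R = R_0 + \vec a$ as an affine subspace of $\R^n\oplus\R^m$, with $R_0$ its linear part, treating the empty relation as a separate degenerate case. The heart of the argument is a Kronecker-style normal form for the linear relation $R_0$ under the change-of-basis action of $\Gl[n]\times\Gl[m]$: choosing invertible $S\in\Gl[n]$ acting on the input coordinates and $T\in\Gl[m]$ on the output coordinates, one brings $R_0$ into a direct sum of the five elementary relations (a wire identifying an input with an output, a discarded input, a fixed input, a free output, and a fixed output), with multiplicities determined solely by $\dim R_0$ together with the dimensions of its projections onto and intersections with the two factors. This direct sum is precisely the canonical piece $C$. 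The affine shift $\vec a$ is then absorbed into the scalar labels decorating the spiders of $C$; here real-ness of the coordinates is exactly what is needed for $S$, $T$ and the scalars to live over $\R$.

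Finally I would translate the three factors back into syntax: an invertible matrix over $\R$ is denotable in $\GAA[\R]$ as a diagram built from scalar-multiplication boxes, the phase-free copy/add spiders and wire permutations, and it is invertible as a morphism because the matrix is; the canonical relation $C$ is a phase-free spider diagram. Composing these yields the right-hand side of the statement with $S$ and $T$ invertible. The step I expect to be the main obstacle is making both change-of-basis maps genuinely square and invertible simultaneously: the naive reduction produces full-rank but possibly rectangular matrices, so one must pad $C$ with the correct number of free and constrained wires on each side so that $S:n\to n$ and $T:m\to m$ come out invertible, and checking that this padding is consistent across the input and output boundaries (and degenerates correctly for the empty relation) is the delicate bookkeeping.
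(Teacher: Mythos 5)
Your proposal is correct and takes essentially the same route as the paper's proof: both pass to the semantics $\AR[\R]$ and reduce the subspace to a canonical direct sum of elementary relations by acting with invertible matrices on the input and output coordinates, the paper carrying out this Kronecker-type classification by hand from an image presentation of the subspace where you cite it directly. The delicate bookkeeping you flag---arranging the reduction so that both boundary matrices come out square and invertible---is precisely the step the paper handles by decomposing the common domain along $\ker(X)\cap\ker(Y)$ and its complements.
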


\begin{proof}[Proof of lemma~\ref{lm:lindec}]
  We will work with bipartite states. First any affine subspace can be written as the image of a matrix \(\begin{bmatrix}
    A\\ B
  \end{bmatrix}\) translated by a vector \(\begin{bmatrix}
  x \\ y
\end{bmatrix}\) giving a diagram:
  
  \[\tikzfig{Gauss/lindec1}\]
  
  Then we know that  we can act on \(A\) with an invertible matrix \(P\) and on \(B\) with an invertible matrix \(Q\) such that
  
   \[
     A
     =
     P
     \begin{bmatrix}
      0 \\ X
    \end{bmatrix}
    \quad\text{and}
    \quad B= Q\begin{bmatrix}
      Y \\ 0
    \end{bmatrix}
  \]
  
  Giving a diagram:
  
  \[\tikzfig{Gauss/lindec2}\]
  
  Now we decompose the domain  as \(\ker(X)\cap \ker(Y) \oplus E\)  and then as \(\ker(X)\cap \ker(Y) \oplus E\cap \ker(Y) \oplus S \oplus E\cap \ker(Y) \).  Denoting \(K\) an invertible matrix doing a change of basis adapted to this decomposition we get:
  
  \[
    \begin{bmatrix}[c|c]
      X \\ Y  
    \end{bmatrix}
    K
    =
    \begin{bmatrix}[cc|cc]
      0 & X_0 & X_1 & 0 \\ 0 & 0 & Y_0 & Y_1
    \end{bmatrix}
  \]
  
  Diagramatically:
  
  \[\tikzfig{Gauss/lindec3}\]
  
  Then one can notice that both  \(X' = \begin{bmatrix} X_0 & X_1 \end{bmatrix}\) and \(Y' =\begin{bmatrix} Y_0 & Y_1 \end{bmatrix}\) are invertible.
\end{proof}

\begin{lemma}\label{lm:gauss_to_NF}
	Any diagram in \(\GGA\) reduces to the pseudo normal form:
	\tikzfig{Gauss/NForm}
\end{lemma}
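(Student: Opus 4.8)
The plan is to pass to states via the compact structure, peel off the vacuum generators, apply lemma~\ref{lm:lindec} to the remaining affine part, and then use the defining codiscarding equations of \(\GGA\) to collapse the action of the invertible maps on the vacua into a single Gaussian-blur datum. Since \(\GGA\) is a compact prop, bending every input wire into an output reduces the problem to normalising states \(0\to n\); the normal form of a general morphism is then recovered by bending the wires back. Moreover the \emph{only} generator of \(\GGA\) absent from \(\GAA[\R]\) is the vacuum \(0\to 1\), and it is a state, so every \(\GGA\)-state factors as a \(\GAA[\R]\)-morphism \(G\colon k\to n\) precomposed with \(k\) copies of the vacuum. Up to the symmetry maps we may take these \(k\) vacua to feed into the first \(k\) legs of \(G\).

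First I would normalise the affine core. Applying lemma~\ref{lm:lindec} to \(G\) rewrites it, as a bipartite state, into the form of that lemma with invertible matrices \(S\) and \(T\) flanking a phase-free spider skeleton together with a real translation. After this rewrite the \(k\) vacua sit on input legs acted on by one of these invertible factors, and the remaining task is to show that their joint contribution reduces to a single positive-semidefinite covariance.

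The heart of the argument is to understand how a real invertible matrix acts on a tuple of vacua modulo the codiscarding equations. The defining equations of \(\GGA\) make the scalable vacuum invariant under symplectic rotations and absorb affine effects (the diagrammatic avatar of the rotation-invariance of the isotropic Gaussian \(\mathcal{N}(I,0)\)). I would therefore decompose the relevant invertible factor by a singular-value/polar decomposition \(M = U D V^{\trans}\) with \(U,V\) orthogonal and \(D\) diagonal nonnegative: the factor \(V^{\trans}\) adjacent to the vacua is erased by rotation-codiscarding, the diagonal \(D\) acts as squeezings (scalar multiplications in \(\GAA[\R]\)) sending \(\mathcal{N}(I,0)\) to \(\mathcal{N}(D^2,0)\) and hence realising an arbitrary positive-semidefinite covariance \(\Sigma\), and the remaining orthogonal factor \(U\) is absorbed into the surrounding wiring. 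The effect-codiscarding equation then deletes the diagram components disconnected from the outputs (the kernel directions produced by lemma~\ref{lm:lindec}). Collecting the surviving data — the invertible wiring, the translation \(\vec\mu\), and the squeezed vacua supplying \(\Sigma\) — gives the pseudo normal form \tikzfig{Gauss/NForm}, matching the reduced AP-form \((L,i\Sigma,0,\vec\mu,\varsigma)\) of proposition~\ref{prop:exgaussapform}.

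I expect the main obstacle to be precisely this third step: checking that the codiscarding axioms, together with the \(\GAA[\R]\) relations, are strong enough to carry out the polar decomposition \emph{diagrammatically} and in the correct orientation — that the rotation sits on the vacuum side so that it can be codiscarded, and that no spurious linear dependencies among the vacuum legs are introduced when \(D\) is singular. This is essentially careful bookkeeping of how the singular-value decomposition threads through the spider skeleton of lemma~\ref{lm:lindec}, rather than a single hard identity, but it is where the real work of the proof lies.
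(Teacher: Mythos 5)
Your proposal matches the paper's proof in both structure and substance: the paper likewise moves all vacuums to one side to isolate a \(\GAA[\R]\)-morphism, applies lemma~\ref{lm:lindec} to factor it through invertible matrices, and then uses singular value decomposition together with the rotation- and effect-codiscarding axioms of the vacuum to absorb the orthogonal factors and collapse the squeezed vacua into the covariance datum of the pseudo normal form. The "main obstacle" you identify — threading the SVD through the skeleton so the rotation lands on the vacuum side — is exactly the content of the paper's two successive SVD steps, so your plan is correct and essentially identical to the published argument.
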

\begin{proof}%[Proof of lemma~\ref{lm:gauss_to_NF}]
  
  We start by moving all vacuums  to the left giving a diagram in the form:
  
  \[\tikzfig{Gauss/NForm1}\]
  
  where \(L\) is a diagram in \(\GAA[\R]\). Applying lemma~\ref{lm:lindec} we get:
  
  \[\tikzfig{Gauss/NForm2}\]
  
  Thus without loss of generality we restrict to reducing in pseudo normal form diagrams of the form:
  
  \[\tikzfig{Gauss/NForm3}\]
  
  We first apply singular value decomposition to \(X\) giving:
  
  \[\tikzfig{Gauss/NForm4}\]
  
  Then applying singular value decomposition to \(Z\) gives:
  
  \[\tikzfig{Gauss/NForm5}\]
\end{proof}

\begin{lemma}\label{lm:gauss_uniq}
	\[\interp{\tikzfig{Gauss/uniqueness01}}=\interp{\tikzfig{Gauss/uniqueness02}} \quad \Rightarrow \quad \tikzfig{Gauss/uniqueness}\]
	Where $S_{11}$ is orthogonal, and $S_{14}$ and $S_4 $ are invertible, furthermore the number of vacuum state generators must be the same on both sides.
\end{lemma}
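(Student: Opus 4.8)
The plan is to reduce everything to the \emph{uniqueness} of the reduced-AP form. By Proposition~\ref{rem:scalable_reduced_AP-form} the reduced-AP form of a nonempty state is unique, and by Proposition~\ref{prop:exgaussapform} a nonempty quasi-real affine Lagrangian relation is pinned down by data $(L, i\Sigma, \vec 0, \vec\mu, \varsigma)$ with $L,\vec\mu$ real and $0\preceq\Sigma$. Since the hypothesis is exactly that the two pseudo normal forms of Lemma~\ref{lm:gauss_to_NF} have equal interpretation in $\GaussRel$, they must carry identical invariant data; so the whole task is to (i) read this data off the pseudo normal form and (ii) translate the resulting equalities back into the asserted constraints on the transition matrices.

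First I would compute $\interp{-}$ of the pseudo normal form blockwise, using that the vacuum interprets as $\mathcal{N}(1,0)$ on $\R$: a parallel bundle of $r$ vacuums interprets as $\mathcal{N}(I_r,0)$, and post-composing with a real matrix $M$ pushes this forward to covariance $M M^\trans$ (the pushforward convention $\Sigma\mapsto S\Sigma S^\trans$ of Lemma~\ref{lem:weil}). This gives a dictionary: the block of the transition matrix acting on the vacuum wires realises the covariance $\Sigma$ as $S_\bullet S_\bullet^\trans$, while the remaining blocks determine the infinitely-squeezed subspace $D=\ker([I,L]\varsigma)$ and the mean $\vec\mu$. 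The number of vacuum generators is then forced to be $r=\operatorname{rank}(\Sigma)$, the rank of the positive-definite part of the covariance; as $\Sigma$ is uniquely determined by the semantics, the vacuum count must agree on both sides.

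Next I would extract the matrix constraints. For orthogonality, writing the two vacuum-acting blocks as $A$ and $A'$, equality of semantics yields $A A^\trans=\Sigma=A' A'^\trans$ with both of full column rank $r$; the standard fact that positive factorisations of a fixed Gram matrix differ by an orthogonal transformation gives $A'=AO$ with $O$ orthogonal, and this $O$ is exactly $S_{11}$. The invertibility of $S_{14}$ and $S_4$ follows from matching the subspace $D$ and the complementary sharp directions: both normal forms present $D$ with the same dimension (again fixed by $L$ and $\varsigma$), so the blocks effecting the change of basis between the two presentations of $D$ and of the output wires must be bijective. Finally, completeness of $\GAA[\R]\cong\AR[\R]$ for the purely affine–linear data, together with equality of $\vec\mu$, lets me promote these semantic equalities to the claimed diagrammatic equation in $\GGA$.

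The hard part will be the bookkeeping in step~(i): correctly decomposing the semantics of the pseudo normal form into the four blocks and checking that the covariance is genuinely captured by $S_\bullet S_\bullet^\trans$ with no cross-terms leaking in from the squeezed directions or the translation. Once that decomposition is secured, the orthogonality conclusion is the Gram-factorisation argument above and the invertibility conclusions are dimension counts, so the remaining work is routine.
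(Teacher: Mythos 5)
Your overall strategy --- pinning down the semantic invariants (deterministic subspace, covariance on the quotient, mean) via uniqueness of the reduced-AP form and then solving for the transition matrix --- is a legitimate alternative to what the paper does, and two of your ingredients are sound: the vacuum count is indeed forced to equal the rank of the covariance, and the Gram-factorisation argument ($AA^\trans = A'A'^\trans$ with full column rank implies $A' = AO$ with $O$ orthogonal) does deliver the orthogonality of $S_{11}$ and, applied to the column block $\begin{bmatrix}S_{11}\\ S_{13}\end{bmatrix}$ against $\begin{bmatrix}I_r\\ 0\end{bmatrix}$, also the vanishing of $S_{13}$. The paper instead works entirely with generator matrices: equality of the two kernels is rewritten as $G_1\,\diag(S,S^{-\trans}) = W\,G_2$ for some invertible complex $W$, and every constraint on $S$ is read off by solving the resulting block equations (after first disposing of the affine shift, which your sketch does not address).

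The gap is that your proposal never produces the remaining \emph{zero} blocks of $S$ --- in the paper's notation $S_2 = 0$ and $(S^{-\trans})_3 = 0$, giving the lower-triangular shape $S = \begin{bmatrix}S_1 & 0\\ S_3 & S_4\end{bmatrix}$ with $S_1 = \begin{bmatrix}S_{11} & S_{12}\\ 0 & S_{14}\end{bmatrix}$ --- and these are the substantive content of the lemma: the claimed diagrammatic equation is precisely a factorisation of $S$ that exists only because of this triangularity, and your ``dimension count'' for the invertibility of $S_{14}$ and $S_4$ silently presupposes it (an invertible matrix has invertible diagonal blocks only once the relevant off-diagonal blocks are known to vanish). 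The zero blocks encode the fact that $S$ must preserve the flag of deterministic directions inside the support, which is exactly the bookkeeping you defer in step (i); until it is carried out, the comparison $AA^\trans = A'A'^\trans$ is not even well posed, since a priori the two covariances live on different quotients. So the plan is completable, but the part you label routine is where the proof actually lives, and you would end up reproducing, in your own coordinates, essentially the same block-equation computation the paper performs directly.
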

\begin{proof}%[Proof of Lemma \ref{lm:gauss_uniq}]
  In the semantics, the equality:
  
  \[\interp{\tikzfig{Gauss/uniqueness01}}=\interp{\tikzfig{Gauss/uniqueness02}}\]
  
   translates to:
  
  \[\begin{bmatrix}
    x-a \\ -b
  \end{bmatrix} \in \ker \begin{bmatrix}[cc|cc]
    I & 0 & iJ_r & 0 \\ 0 & 0 & 0 & I
  \end{bmatrix}\begin{bmatrix}
    S & 0 \\ 0 & S^{-\trans}
  \end{bmatrix} \Leftrightarrow \begin{bmatrix}
    a \\ b
  \end{bmatrix} \in\ker \begin{bmatrix}[cc|cc]
    I & 0 & iI_k & 0 \\ 0 & 0 & 0 & I
  \end{bmatrix}\]
  
  This implies that that affine subspaces on the left contains $0$, thus the affine shift on the left must be in the linear space on the right: 
  \[
    \begin{bmatrix}
      x \\ 0
    \end{bmatrix}
    \in \ker
    \begin{bmatrix}[cc|cc]
      I & 0 & iJ_k & 0 \\ 0 & 0 & 0 & I
    \end{bmatrix}
    \quad\text{so that}\quad
    x=
    \begin{bmatrix}
      0 \\ 0 \\  x_3
    \end{bmatrix}
  \]
  
  Now that the affine shift has been taken care of we have:
  
  \[\ker \begin{bmatrix}[cc|cc]
    I & 0 & iJ_r & 0 \\ 0 & 0 & 0 & I
  \end{bmatrix}\begin{bmatrix}
    S & 0 \\ 0 & S^{-\trans}
  \end{bmatrix} = \ker \begin{bmatrix}[cc|cc]
    I & 0 & iI_k & 0 \\ 0 & 0 & 0 & I
  \end{bmatrix}\]
  
  In other words, there is an invertible complex matrix $W$ such that:
  
  \[\begin{bmatrix}[cc|cc]
    I & 0 & iJ_r & 0 \\ 0 & 0 & 0 & I
  \end{bmatrix}\begin{bmatrix}
    S & 0 \\ 0 & S^{-\trans}
  \end{bmatrix} = W \begin{bmatrix}[cc|cc]
    I & 0 & iJ_k & 0 \\ 0 & 0 & 0 & I
  \end{bmatrix}\] 
  
  Setting:
  \[W=\begin{bmatrix}
    W_1 & W_2 \\ W_3 & W_4 
  \end{bmatrix}$, $S=\begin{bmatrix}
    S_1 & S_2 \\ S_3 & S_4 
  \end{bmatrix}$ and $S^{-\trans}=\begin{bmatrix}
    (S^{-\trans})_1 & (S^{-\trans})_2 \\ (S^{-\trans})_3 & (S^{-\trans})_4 
  \end{bmatrix}\]
  
  we get:
  
  \[\begin{bmatrix}[cc|cc]
    I & 0 & iJ_r & 0 \\ 0 & 0 & 0 & I
  \end{bmatrix}\begin{bmatrix}
    S_1 & S_2 & 0 & 0 \\ S_3 & S_4 & 0 & 0 \\ 0 & 0 & (S^{-\trans})_1 & (S^{-\trans})_2 \\ 0 & 0 & (S^{-\trans})_3 & (S^{-\trans})_4 
  \end{bmatrix}=\begin{bmatrix}
    W_1 & W_2 \\ W_3 & W_4 
  \end{bmatrix} \begin{bmatrix}[cc|cc] I & 0 & iJ_k & 0 \\ 0 & 0 & 0 & I \end{bmatrix}\]
  
  This gives: 
  
  \[\begin{bmatrix}[cc|cc]
    S_1 & S_2 & iJ_r (S^{-\trans})_1 & iJ_r (S^{-\trans})_2 \\ 0 & 0 & (S^{-\trans})_3 & (S^{-\trans})_4
  \end{bmatrix} = \begin{bmatrix}[cc|cc]
    W_1 & 0 & iW_1 J_k & W_2 \\ W_3 & 0 & iW_3 J_k & W_4
  \end{bmatrix}\]
  
  We directly get \(W_1 = S_1\), \(W_2 = iJ_k (S^{-\trans})_2\), \(W_3 = 0\) and \(W_4 = (S^{-\trans})_4\). We also have \(S_2 = 0\), \(J_r (S^{-\trans})_1 = S_1 J_k\) and \((S^{-\trans})_3 = 0\). From the constraints it follows that \((S^{-\trans})_1 = (S_1 )^{-\trans}\), \((S^{-\trans})_4 = (S_4 )^{-\trans}\) and \((S^{-\trans})_2 = -S_{1}^{-\trans}S_{3}^\trans S_{4}^{-\trans}\). Therefore, \(r=k\) and:
  
  \[S=\begin{bmatrix}
    S_1 & 0 \\ S_3 & S_4
  \end{bmatrix}\]
   with only constraint that \(J_r = S_1 J_r (S_1 )^\trans\).
  
  We write: 
  \[S_1 = \begin{bmatrix}
    S_{11} & S_{12} \\ S_{13} & S_{14}
  \end{bmatrix}\]
  The constraint then becomes:
  
  \[\begin{bmatrix}
    I & 0 \\ 0 & 0
  \end{bmatrix}= \begin{bmatrix}
  S_{11} & S_{12} \\ S_{13} & S_{14}
  \end{bmatrix}\begin{bmatrix}
  I & 0 \\ 0 & 0
  \end{bmatrix}\begin{bmatrix}
  (S_{11})^\trans & (S_{13})^\trans \\ (S_{12})^\trans & (S_{14})^\trans
  \end{bmatrix}=\begin{bmatrix}
  S_{11}(S_{11})^\trans & S_{11}(S_{13})^\trans \\ S_{13}(S_{11})^\trans & S_{13} (S_{13})^\trans
  \end{bmatrix}\]
  
  So that \(S_{13} = 0\).  It follows that:
  
  \[S=\begin{bmatrix}
    S_{11} & S_{12} & 0 \\ 0 & S_{14} & 0 \\ S_{31} & S_{32} & S_4
  \end{bmatrix}\]
  
  with \(S_{11}\) orthogonal, and \(S_{14}\) and \(S_4\) invertible.
  
  Then in \(\GAA[\R]\) we have:
  
  \tikzfig{Gauss/uniqueness}

\end{proof}

\begin{proof}[Proof of theorem~\ref{thm:gauss_completeness}]
  
  We define an interpretation prop morphism $\interp{\_}:\GGA \to \GaussRel$ by sending all generator from $\GGA$ to the corresponding Lagrangian subspaces and vacuum generators to the subspace corresponding to a white node with phase $(0,i)$.
  
  By lemma~\ref{lm:gauss_to_NF} we know that any diagram in $\GGA$ can be put in pseudo normal form, but by proposition~\ref{prop:quasirealchar}, we see that replacing the vacuum generator by a white node with phase $(0,i)$, we can represent all quasi-real subspaces. In other words, the interpretation of $\GGA $ into quasi-real spaces is surjective. It only remains to prove that this interpretation is also injective.

  Let \(X\) and \(Y\) be two diagrams in $\Gauss $ such that $\interp{X }=\interp{Y }$. Using lemma~\ref{lm:gauss_to_NF} we can assume that both diagrams are in pseudo normal form. We have:
  
  \[\interp{\tikzfig{Gauss/comp1}}=\interp{\tikzfig{Gauss/comp2}}\]
  
  Composing by the same the two diagrams on both sides we get:
  
  \[\interp{\tikzfig{Gauss/comp3}}=\interp{\tikzfig{Gauss/comp4}}\]
  
  Then by lemma~\ref{lm:gauss_uniq}: 
  
  \[\tikzfig{Gauss/uniqueness04}\]

  with $S_{11}$ orthogonal, and $S_{14}$ and $S_4 $ invertible.
  Therefore:
  
  \[\tikzfig{Gauss/comp5}\]
\end{proof}

\subsection{Proofs of section~\ref{sec:sguass}}

\begin{proof}[Proof of proposition~\ref{prop:gauss_wigner}]
  
  We start  by computing the Wigner function from the wave-function:
  
  \begin{align}
    W(q,p)=& \frac{1}{\pi^n }\int_{\mathbb{R}^n } \overline{\phi}(q+\xi)\phi(q -\xi )\exp(2i p^\trans \xi)\, d\xi \\
    =&\frac{1}{\pi^n }\sqrt{\frac{\det(\Im(\Phi))}{\pi^{n} }}\int_{\R^n }\\& \exp(-\frac{i}{2} (q+\xi)^\trans \overline{\Phi} (q+\xi))\exp(\frac{i}{2} (q-\xi)^\trans \Phi (q-\xi))\exp(2i p^\trans \xi)\, d\xi 
  \end{align}
  
  We first compute the exponent: 
  
  \begin{align}
    &-\frac{i}{2} (q+\xi)^\trans \overline{\Phi} (q+\xi) 
    + \frac{i}{2} (q-\xi)^\trans \Phi (q-\xi)+ 2i\pi p^\trans \xi \\
    &= \frac{i}{2} \left( -q^\trans \overline{\phi} q - 2q^\trans \overline{\Phi} \xi - \xi^\trans \overline{\Phi} \xi +q^\trans \Phi q - 2q^\trans \Phi \xi + \xi^\trans \Phi \xi + 4 p^\trans \xi \right)\\
    &= \frac{i}{2} \left( 2iq^\trans \Im(\Phi) q - 4q^\trans \Re(\Phi) \xi +2i\xi^\trans \Im(\Phi) \xi + 4 p^\trans \xi \right)\\
    &= - q^\trans \Im (\Phi) q -2 i q^\trans \Re(\Phi) \xi - \xi^\trans \Im (\Phi) \xi + 2i p^\trans \xi \\
    &= - q^\trans \Im (\Phi) q - \xi^\trans \Im (\Phi) \xi +2i\left(p^\trans-q^\trans \Re(\Phi)\right) \xi
  \end{align}
  
  Back to the integral:
  
  \begin{align}
    W(q,p)=& \frac{1}{\pi^n }\sqrt{\frac{\det(\Im(\Phi))}{\pi^{n} }}\exp(- q^\trans \Im (\Phi) q)\\
    & \int_{\mathbb{R}^n } \exp(- \xi^\trans \Im (\Phi) \xi +2i\left(p^\trans-q^\trans \Re(\Phi)\right) \xi)\, d\xi \\
    =& \frac{1}{\pi^n } \exp(- q^\trans \Im (\Phi) q- \left(p^\trans-q^\trans \Re(\Phi)\right) \Im(\Phi)^{-1} \left(p- \Re(\Phi)q \right))
  \end{align}
  
  Here we used the identity:

  \[\int_{\mathbb{R}^n } \exp(- \xi^\trans A \xi +iJ^\trans \xi)\, d\xi = \sqrt{\frac{\pi^n }{\det(A)}} \exp(-\frac{1}{4} J^\trans A^{-1} J)\]
  We can rewrite the exponent  as: 
  
  \begin{align}
    &=  - q^\trans \Im (\Phi) q - \left(p^\trans - q^\trans \Re(\Phi) \right) \Im(\Phi)^{-1} \left(p-\Re(\Phi)q\right)\\
    &= - q^\trans \left(\Im (\Phi)+\Re(\Phi)\Im(\Phi)^{-1} \Re(\Phi)\right)q + 2 q^\trans \Re(\Phi) \Im(\Phi)^{-1} p  -  p^\trans \Im(\Phi)^{-1} p \\
    &= -\begin{bmatrix}
      q^\trans & p^\trans 
    \end{bmatrix}  \begin{bmatrix}
      \Im (\Phi) + \Re(\Phi)\Im(\Phi)^{-1} \Re(\Phi)  & -\Re(\Phi)\Im(\Phi)^{-1} \\ -\Im(\Phi)^{-1} \Re(\Phi) & \Im(\Phi)^{-1}
    \end{bmatrix}\begin{bmatrix}
      q \\ p 
    \end{bmatrix} 
  \end{align}
  
  Setting
  \[\Sigma\coloneqq \begin{bmatrix}
    \Im (\Phi) + \Re(\Phi)\Im(\Phi)^{-1} \Re(\Phi)  & -\Re(\Phi)\Im(\Phi)^{-1} \\ -\Im(\Phi)^{-1} \Re(\Phi) & \Im(\Phi)^{-1}
  \end{bmatrix}\]
  we get:
  
  \[W(q,p)= \frac{1}{\pi^n } \exp\left(-\begin{bmatrix}
      q^\trans & p^\trans
    \end{bmatrix}  \Sigma \begin{bmatrix}
    q \\ p
  \end{bmatrix}\right)\]

We still have to prove that $\Sigma$ is a covariance matrix. Notice that by definition $\Sigma^\trans = \Sigma $ and that the Shur complement formula directly gives $\det(\Sigma)=1$. For positivity, setting:

\[S= \begin{bmatrix}
  \Im(\phi)^{\frac{1}{2}} & -\Re(\phi)\Im(\phi)^{\frac{-1}{2}} \\ 0 & \Im(\phi)^{\frac{-1}{2}} 
\end{bmatrix}\]

We get: $S S^\trans = \Sigma \succeq 0 $. Finally, setting

\[T= \begin{bmatrix}
  0 & \Im(\Phi)^{\frac{1}{2}} + i\Re(\Phi)\Im(\Phi)^{\frac{-1}{2}} \\ 0 & -i\Im(\Phi)^{\frac{-1}{2}}
\end{bmatrix}\]

We get: $TT^\dagger = \Sigma + i \Omega \succeq 0 $.

Conversely, given a covariance matrix $\Sigma$ such that $\Sigma^\trans = \Sigma $, $\det(\Sigma)=1$, $\Sigma \geq 0 $ and $\Sigma + i \Omega \geq 0$, Williamson theorem gives us a symplectic matrix $S$  and a positive diagonal $\Lambda$ such that:

\[S \begin{bmatrix}
  \Lambda & 0 \\ 0 & \Lambda
\end{bmatrix}S^\trans = \Sigma\]
So:

\[\Sigma + i \Omega = S \begin{bmatrix}
  \Lambda & 0 \\ 0 & \Lambda
\end{bmatrix}S^\trans + iS \Omega S^\trans = S \begin{bmatrix}
  \Lambda & iI \\ -iI & \Lambda
\end{bmatrix}S^\trans \]

This matrix being positive semi-definite  implies that $\lambda \geq 1 $   (take any principal minor of size two) for all symplectic eigenvalues   $\lambda$. But $\det(\Sigma)=1$ implies that $\lambda = 1 $ such that $\Lambda= I $. Thus $\Sigma= SS^\trans $. Any symplectic matrix $S$  admits a unique decomposition of the following form (see \cite{de2006symplectic}):

\[S= \begin{bmatrix}
  I & B \\ 0 & I
\end{bmatrix}\begin{bmatrix}
  A^{\frac{1}{2}} & 0 \\ 0 & A^{\frac{-1}{2}}
\end{bmatrix}O\]

where $B$ is symmetric and $A$ is symmetric positive and $O$ is symplectic orthogonal. Thus $\Sigma$ is of the form:

\[\Sigma = \begin{bmatrix}
  A + BA^{-1} B & BA^{-1} \\ A^{-1}B & A^{-1}
\end{bmatrix}\]

Setting $\Phi= -B + iA $we then see that there is a bijection between covariance matrices $\Sigma$  and matrices $\Phi$ such that $\Im(\Phi)\succ 0$. The bijection is given by:

\[\Sigma= \begin{bmatrix}
  \Im (\Phi) + \Re(\Phi)\Im(\Phi)^{-1} \Re(\Phi)  & -\Re(\Phi)\Im(\Phi)^{-1} \\ -\Im(\Phi)^{-1} \Re(\Phi) & \Im(\Phi)^{-1}
\end{bmatrix}\]

And if 

 \[\Sigma= \begin{bmatrix}
  A & B \\ B^\trans & C
\end{bmatrix}\]

then \(\Phi = -BC^{-1} + i C^{-1}\).

\end{proof}
%
%\begin{proof}[Proof of proposition~\ref{prop:AP_positive}]
%
%\end{proof}
%

%\begin{proof}[Proof of proposition~\ref{prop:scalablepostiveafflagrel}]
%\cole{...}
%\end{proof}

%The proof of lemma~\ref{lm:sympgauss_to_NF} will require the following lemma:

\begin{lemma}\label{lm:color_void}
  Given any symmetric matrix $X$ we have: \[\tikzfig{SympGauss/colorvoid}\]
\end{lemma}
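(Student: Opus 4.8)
The plan is to establish the identity purely syntactically, by rewriting with the axioms of \(\GSA[\R]\) from theorem~\ref{thm:GSA} together with the codiscarding equations defining \(\GQGA\); a semantic verification is available in principle but would be circular here, since this lemma feeds the completeness proof (theorem~\ref{thm:sympgauss_completeness}). The structural engine is the spectral theorem: as \(X\in\Sym[k][\R]\) is real symmetric, I would write \(X = O^\trans \Lambda O\) with \(O\) orthogonal and \(\Lambda = \diag(\lambda_1,\dots,\lambda_k)\), absorbing a reflection into \(\Lambda\) if necessary so that \(O\in\SpOrth[k][\R]\). Regarded in the scalable calculus, the orthogonal factors \(O,O^\trans\) act on the vacuum bundle as the diagonal symplectic rotation \(\left[\begin{smallmatrix} O & 0 \\ 0 & O \end{smallmatrix}\right]\), which is of \({\mathcal S}\)-type and is therefore codiscarded by the vacuum by the defining equations (the \(\pi/2\) and \(3\pi/2\) cases, should they arise, follow by composing the \(\pi/4\) codiscard with itself). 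Conjugating through, it suffices to prove the statement when \(X=\Lambda\) is diagonal.

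Next I would diagonalise the problem one wire at a time. Using the inductive definitions of the scalable phased spiders and block matrices (definition~\ref{def:scalable_spider}), a diagonal \(\Lambda\) splits as a direct sum across the bundle, so the \(k\)-wire equation decomposes — via \TextFusionB and \TextColour — into \(k\) independent single-wire equations, one for each eigenvalue \(\lambda_j\). This reduces everything to a single scalar identity on one qumode relating the vacuum to a single real datum \(\lambda\).

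The single-wire base case is where the genuine content sits, and it splits according to whether \(\lambda=0\). For \(\lambda\neq 0\) I would use a squeezing map (together, if needed, with a codiscarded symplectic rotation to fix the sign of \(\lambda\)) to transfer \(\lambda\) onto the vacuum, rescaling its imaginary symplectic phase; this is exactly the edge-weight \(1/\sqrt{\cdot}\) mechanism already exploited in the Menicucci translation of equation~\eqref{eq:Menicucci_example}, after which the residual real shear is absorbed as a codiscarded effect and the colour is corrected by \TextColour. For \(\lambda=0\) the squeezing map degenerates, since squeezing is only defined for nonzero scalars, so this case must be argued separately using the codiscarding of the trivial phase and of effects directly. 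I expect this singular case, together with the bookkeeping between the vacuum's imaginary phase and the real matrix \(X\) (ensuring the signs and the direction of the colour change agree on both sides), to be the main obstacle; once the scalar identity is settled in both subcases, reassembling through the direct-sum decomposition and the codiscarded orthogonal conjugation completes the proof.
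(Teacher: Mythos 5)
You have correctly identified the only syntactic handle on the vacuum in \(\GQGA\) --- the codiscard equations for symplectic rotations --- but the step where all the content lives, your single-wire base case, is not actually carried out, and the mechanism you propose for it is not available. In \(\GQGA\) the vacuum is a primitive generator; its ``imaginary symplectic phase'' is a feature of its \emph{interpretation} in \(\GSA[\C]\), not of the syntax, so ``transferring \(\lambda\) onto the vacuum and rescaling its imaginary phase'' (and likewise the \(1/\sqrt{\cdot}\) edge-weight mechanism of the Menicucci translation, which converts an honest \((0,ai)\)-phase of \(\GSA[\C]\) into a vacuum) is not a rewrite you can perform inside \(\GQGA\). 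What the base case actually requires is the observation that
\[
\begin{bmatrix} 1 & 0 \\ -\lambda & 1 \end{bmatrix}
\begin{bmatrix} 1 & \lambda(1+\lambda^2)^{-1} \\ 0 & 1 \end{bmatrix}
\begin{bmatrix} (1+\lambda^2)^{-1/2} & 0 \\ 0 & (1+\lambda^2)^{1/2} \end{bmatrix}
=
\frac{1}{\sqrt{1+\lambda^2}}
\begin{bmatrix} 1 & \lambda \\ -\lambda & 1 \end{bmatrix}
=\mathcal{R}(-\arctan\lambda),
\]
i.e.\ that the composite of the phase you want to move, a shear of the opposite colour, and a squeeze is a rotation, which the vacuum then codiscards by the defining equations. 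Until this is written down the base case is unproved; and once it is, your worry about \(\lambda=0\) evaporates, since \((1+\lambda^2)^{-1/2}\) never degenerates (unlike the \(1/\sqrt{\lambda}\) you were reaching for) and \(\lambda=0\) just gives the identity rotation.

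The paper's proof is exactly this identity, stated once and uniformly for all symmetric \(X\): writing \(C\coloneqq(I+X^2)^{-1/2}\), which commutes with \(X\),
\[
\begin{bmatrix} I & 0 \\ -X & I \end{bmatrix}
\begin{bmatrix} I & X(I+X^2)^{-1} \\ 0 & I \end{bmatrix}
\begin{bmatrix} (I+X^2)^{-1/2} & 0 \\ 0 & (I+X^2)^{1/2} \end{bmatrix}
=
\begin{bmatrix} C & XC \\ -XC & C \end{bmatrix},
\]
and the right-hand side is a symplectic rotation because \((I-iX)C\) is unitary precisely when \(X\) is real symmetric; hence the vacuum codiscards it, and rearranging gives the lemma. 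Your spectral-theorem scaffolding is sound as far as it goes --- \(\diag(O,O)\) is indeed a codiscarded symplectic rotation for orthogonal \(O\), and the scalable calculus does let you split a diagonal \(\Lambda\) across the bundle --- but it buys nothing: it only reduces the problem to a one-dimensional instance of the very decomposition you would then have to discover anyway, while adding bookkeeping. Drop the diagonalisation and prove the matrix identity directly.
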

\begin{proof}%[Proof of lemma~\ref{lm:color_void}]
  First remark that the composition of those three symplectomorphisms is a symplectic rotation:
  
  \[\begin{bmatrix}
    I & 0 \\ -X & I 
  \end{bmatrix}\begin{bmatrix}
I & X (I+X^2)^{-1} \\ 0 & I 
\end{bmatrix}\begin{bmatrix}
(I+X^2)^{-\frac{1}{2}} & 0 \\ 0 & (I+X^2)^{\frac{1}{2}}
\end{bmatrix}=\begin{bmatrix}
(I+X^2)^{\frac{1}{2}} & X(I+X^2)^{\frac{1}{2}} \\-X(I+X^2)^{\frac{1}{2}} & (I+X^2)^{\frac{1}{2}}
\end{bmatrix}\]
  
Thus we have: \[\tikzfig{SympGauss/colorvoid1}\]
  
And then: \[\tikzfig{SympGauss/colorvoid}\]
  
\end{proof}

\begin{lemma}\label{lm:sympgauss_to_NF}
	Any diagram in  \(\GQGA\)  can be reduced to the pseudo normal form:
	
	\tikzfig{SympGauss/Nform}
	
	where \(S\) is a real affine symplectomorphism.
	
\end{lemma}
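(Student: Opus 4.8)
The plan is to mirror the reduction carried out for \(\GGA\) in lemma~\ref{lm:gauss_to_NF}, but now over the larger fragment \(\GSA[\R]\) in place of \(\GAA[\R]\). Since \(\GQGA\) is a compact prop, I would first bend all inputs into outputs and prove the claim for a state \(0\to n\); the general case then follows by bending the wires back. As the only generator of \(\GQGA\) not already present in \(\GSA[\R]\) is the vacuum, every such state is a tensor of vacuum generators composed with a diagram \(D\) drawn entirely in \(\GSA[\R]\). The first step is therefore to slide all vacuums to the left, leaving a state of the shape ``\(k\) vacuums fed into a \(\GSA[\R]\)-diagram \(D\).''

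Next I would use the structure theory of \(\GSA[\R]\cong\ALR[\R]\). Being the name of a real affine Lagrangian relation, \(D\) reduces to reduced-AP form (proposition~\ref{rem:scalable_reduced_AP-form}), and using the generators of \(\bigcup_{n}\Sp[n][\R]\) recalled in the appendix I can factor \(D\) as a real affine symplectomorphism postcomposed with a state built only from phase-free spiders. The objective is to migrate all of the real symplectic content to the right into a single real affine symplectomorphism \(S\), so that only the ``generating'' states survive on the left: the vacuums, carrying phase \((0,i)\), together with phase-free spiders carrying phase \(0\). This is the diagrammatic incarnation of proposition~\ref{prop:positivechar}, which exhibits every positive subspace as the image of the canonical state under a real symplectomorphism, and of the AP-form description in proposition~\ref{prop:AP_positive}.

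The crux, and the main obstacle, is that commuting symplectic transformations past a vacuum does not leave it invariant: a real symmetric shear applied to a vacuum yields a vacuum decorated by a symmetric phase matrix. Here lemma~\ref{lm:color_void} is decisive --- it shows that for any symmetric \(X\) such a decorated vacuum coincides with a vacuum precomposed with a \emph{symplectic rotation}, whereupon the codiscarding equations defining \(\GQGA\) absorb that rotation into the vacuum outright. Iterating this strips every symmetric-phase decoration from the vacuums, at the cost of adjusting \(S\), which remains a real affine symplectomorphism. Finally, a singular-value decomposition of the interface between the vacuum/eigenstate block and \(S\), exactly as in the concluding steps of lemma~\ref{lm:gauss_to_NF}, cleans up the remaining linear data, and any residual symplectic rotations it produces are again codiscarded into the vacuums. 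What is left is precisely the asserted pseudo normal form: a fixed block of vacuums and phase-free states composed with one real affine symplectomorphism \(S\).
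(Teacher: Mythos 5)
Your proposal follows essentially the same route as the paper's proof: slide the vacuums to the left, put the residual \(\GSA[\R]\) part into its graph-like/AP normal form, invoke lemma~\ref{lm:color_void} to convert the symmetric phases decorating the vacuums into symplectic rotations that the defining equations of \(\GQGA\) absorb, and finish with singular value decompositions to tidy the interface into a single real affine symplectomorphism. The argument is correct and matches the paper's in all essentials.
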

\begin{proof}%[Proof of lemma~\ref{lm:sympgauss_to_NF}]

  We start by moving all vacuums to the left which yields a diagram in the form:
  
  \[\tikzfig{SympGauss/NForm1}\]
  
  where $L$ is a diagram in $\GSA[\R]$, putting $L$ in graph-like form as in \cite{gsa}, and then applying Lemma \ref{lm:color_void}, we get:
  
  \[\tikzfig{SympGauss/NForm2}\]
  
  Where $Y$ is invertible. Thus without loss of generality we can restrict  to diagrams of the following form after applying singular value decomposition on $C'$:
  
  \[\tikzfig{SympGauss/NForm3}\]
  
  Where $O$ is a rotation. Without loss of generality we now have to reduce a diagram of the form:
  
  \[\tikzfig{SympGauss/NForm4}\]
  
  Finally, applying singular value decomposition on $T$ gives:
  
  \[\tikzfig{SympGauss/NForm6}\]
  
  Which is either zero or in the desired form.
\end{proof}

\begin{lemma}\label{lm:sympgauss_uniq}
	\[\interp{\tikzfig{SympGauss/uniqueness01}}=\interp{\tikzfig{SympGauss/uniqueness02}} \quad \Rightarrow \quad \tikzfig{SympGauss/uniqueness}\]
	Where \(\beta\) is symmetric, \(U\) is a symplectic rotation, and \(A_{4}\) is invertible, furthermore the number of vacuum state generators must be the same on both sides.
\end{lemma}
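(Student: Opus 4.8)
The plan is to mirror the proof of Lemma~\ref{lm:gauss_uniq}, but now carrying the \emph{full} symplectic structure of the relating change of coordinates rather than only the diagonal change of basis $\diag(S,S^{-\trans})$ that sufficed in the classical case. By Lemma~\ref{lm:sympgauss_to_NF} both diagrams are already in pseudo normal form, so applying $\interp{-}$ turns the hypothesis into an equality of two nonempty positive affine Lagrangian subspaces of $(\C^{2n},\omega_n)$. By Proposition~\ref{prop:positivechar} each is obtained by composing the standard positive subspace $\ker\begin{bmatrix}[c|c] I & iJ_r\end{bmatrix}$ (resp.\ with $J_k$ on the other side) with a real affine symplectomorphism and an affine shift. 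First I would peel off the affine data exactly as in Lemma~\ref{lm:gauss_uniq}: since both linear parts contain $\vec 0$, each displacement must lie in the other side's linear subspace, which pins down the translation and reduces the whole problem to the two linear Lagrangian subspaces.

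For the linear part, equality of kernels means the two generator matrices differ by left multiplication by an invertible complex $W$. Absorbing one symplectomorphism into the other, this reads
\[
  \begin{bmatrix}[c|c] I & iJ_r\end{bmatrix} M = W\begin{bmatrix}[c|c] I & iJ_k\end{bmatrix},
\]
where $M=\begin{bmatrix} A & B \\ C & D\end{bmatrix}$ is a \emph{real} symplectomorphism, namely the composite of one normal form's symplectomorphism with the inverse of the other's (symplectomorphisms form a group, so this is again symplectic). Writing $W=\begin{bmatrix} W_1 & W_2 \\ W_3 & W_4\end{bmatrix}$ in blocks adapted to $J_r$ and $J_k$ and comparing real and imaginary parts --- precisely as in the $(2)\Rightarrow(1)$ step of Proposition~\ref{prop:positivechar} and in Lemma~\ref{lm:gauss_uniq} --- forces a block-triangular shape on $M$, identifies $W$ in terms of the blocks of $M$, and yields $r=k$; the last equality is what guarantees that the number of vacuum generators agrees on both sides.

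The main obstacle, and where this argument genuinely departs from the classical one, is extracting the precise gauge freedom of the statement from the symplectic relations $A^\trans C = C^\trans A$, $B^\trans D = D^\trans B$ and $A^\trans D - C^\trans B = I$ that define $\Sp[n][\R]$. In Lemma~\ref{lm:gauss_uniq} the relating map was diagonal and the single constraint $J_r = S_1 J_r S_1^\trans$ sufficed to force the orthogonal block $S_{11}$; here $M$ is a general symplectomorphism, so one must show that after matching imaginary parts the symplectic conditions force the block acting on the $J_r$-supported coordinates to be simultaneously orthogonal and symplectic, i.e.\ a symplectic rotation $U\in\Sp[n][\R]\cap\Orth[2n][\R]$. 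I would organise this by first using the imaginary-part equations to isolate that block, then feeding the symplectic relations into it to obtain orthogonality, and reading off the surviving off-diagonal data as a residual symmetric shear $\beta$ together with an invertible block $A_4$ acting on the discarded (non-squeezed) coordinates.

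Finally I would translate the resulting matrix identity back into $\GSA[\R]$ via the dictionary between symplectomorphisms and diagrams recorded in Appendix~\ref{sec:proofs} (diagonal maps, symmetric shears, and symplectic rotations), thereby recovering the claimed diagrammatic equation $\tikzfig{SympGauss/uniqueness}$ with $\beta$ symmetric, $U$ a symplectic rotation, and $A_4$ invertible.
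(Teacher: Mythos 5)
Your proposal follows essentially the same route as the paper's proof: reduce to the linear part by observing the affine shift must lie in the linear subspace, set up the multiplier equation $\begin{bmatrix}[c|c] I & iJ_r\end{bmatrix} M = W\begin{bmatrix}[c|c] I & iJ_k\end{bmatrix}$ with $M$ real symplectic and $W$ invertible complex, compare real and imaginary parts to eliminate $W$, feed in the symplectic relations $A^\trans C = C^\trans A$, $B^\trans D = D^\trans B$, $A^\trans D - C^\trans B = I$ blockwise to get $r=k$ (via injectivity of $D_4$ from $A_4^\trans D_4 = I$ and symmetry of roles) and to force the block $\begin{bmatrix} A_1 & -C_1 \\ C_1 & A_1\end{bmatrix}$ to be a symplectic rotation, then read off the residual $\beta$ and $A_4$ and translate back to $\GSA[\R]$. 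The plan is correct and matches the paper's argument in both structure and the key computational steps.
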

\begin{proof}%[Proof of lemma~\ref{lm:sympgauss_uniq}]
  In the semantics, the equality:
  
  \[\interp{\tikzfig{SympGauss/uniqueness01}}=\interp{\tikzfig{SympGauss/uniqueness02}}\]

  implies that that affine subspaces on the left contains $0$, thus the affine shift on the left must be in the linear space on the right:
  \[\begin{bmatrix}
    x
  \end{bmatrix} \in \ker \begin{bmatrix}[c|c]
    I & iJ_k 
  \end{bmatrix}\]
  so that
  \[x= \begin{bmatrix}
    0 \\ 0 \\ 0 \\ x_4
  \end{bmatrix}\]
  
  Now that the affine shift has been taken care of  we end up with the equality:
  
  \[\ker \begin{bmatrix}[cc|cc]
    I & 0 & iJ_r & 0 \\ 0 & 0 & 0 & I
  \end{bmatrix}S = \ker \begin{bmatrix}[cc|cc]
    I & 0 & iI_k & 0 \\ 0 & 0 & 0 & I
  \end{bmatrix}\]
  
  For a real linear symplectomorphism $S$. In other words, such that there exists real matrices $A,B,C,D,X$ and $Y$, with $X+iY$ invertible and 
  \[\begin{bmatrix}
    A & B \\ C & D
  \end{bmatrix}\]
  a symplectomorphism, such that:
  
  \[\begin{bmatrix}[c|c] I & iJ_r \end{bmatrix}\begin{bmatrix}
    A & B \\ C & D
  \end{bmatrix}=\left(X+iY\right) \begin{bmatrix}[c|c] I & iJ_k \end{bmatrix}\]
  
  This gives: 
  
  \[\begin{bmatrix}[c|c]
    A+ iJ_r C &   B+ iJ_r D
  \end{bmatrix} = \begin{bmatrix}[c|c]
    X +iY & -Y\Phi + i X\Phi \end{bmatrix}  \]
  
  and then by identifying the real and imaginary parts:

  \[\begin{cases}
    A = X \\
    J_r C = Y \\
    B = -YI_k \\
    J_r D = XI_k
  \end{cases}\]
  
  we can eliminate $X$ and $Y$:
  
  \[\begin{cases}
    B = -J_r C I_k \\
    J_r D  = AI_k
  \end{cases}\]
  
  To this we can furthermore add the symplectic condition giving:
  
  \[\begin{cases}
    B = -J_r C J_k \\
    J_r D  = AJ_k\\
    B^\trans D= D^\trans B \\
    A^\trans C = C^\trans A \\
    I = A^\trans D - C^\trans B
  \end{cases}\]
  
  Notice that the third equation follows from the others:
  
  \[B^\trans D = - J_k C^\trans J_r D = - J_k C^\trans A J_k = -J_k A^\trans C  J_k = - D^\trans J_r C J_k = - D^\trans B. \]
  
  We can now eliminate $B$ and get:
  
  \[\begin{cases}
    J_r D  = AJ_k\\
    A^\trans C = C^\trans A \\
    I = A^\trans D + C^\trans J_r C J_k
  \end{cases}\]
  
  We can now divide our matrices in blocks:
  
  \[A\coloneqq \begin{bmatrix}
    A_1 & A_2 \\ A_3 & A_4
  \end{bmatrix} \qquad C\coloneqq \begin{bmatrix}
    C_1 & C_2 \\ C_3 & C_4
  \end{bmatrix}\qquad D\coloneqq \begin{bmatrix}
    D_1 & D_2 \\ D_3 & D_4
  \end{bmatrix}\]
  
  The first  constraint then gives:
  
  \[\begin{bmatrix}
    I & 0 \\ 0 & 0
  \end{bmatrix}\begin{bmatrix}
    D_1 & D_2 \\ D_3 & D_4
  \end{bmatrix}=\begin{bmatrix}
    A_1 & A_2 \\ A_3 & A_4
  \end{bmatrix}\begin{bmatrix}
    I & 0 \\ 0 & 0
  \end{bmatrix}\]
  
  Implying that $D_2 = 0 $, $A_3 = 0$, and $A_1 = D_1 $. The third  constraint gives:
  
  \[\begin{bmatrix}
    I & 0 \\ 0 & I
  \end{bmatrix}=\begin{bmatrix}
    A_{1}^\trans & 0 \\ A_{2}^\trans & A_{4}^\trans
  \end{bmatrix}\begin{bmatrix}
    A_1 & 0 \\ D_3 & D_4
  \end{bmatrix}+\begin{bmatrix}
    C_{1}^\trans & C_{3}^\trans \\ C_{2}^\trans & C_{4}^\trans
  \end{bmatrix}\begin{bmatrix}
    I & 0 \\ 0 & 0
  \end{bmatrix}\begin{bmatrix}
    C_1 & C_2 \\ C_3 & C_4
  \end{bmatrix}\begin{bmatrix}
    I & 0 \\ 0 & 0
  \end{bmatrix}\]
  
  \[\begin{bmatrix}
    I & 0 \\ 0 & I
  \end{bmatrix}=\begin{bmatrix}
    A_{1}^\trans A_1 & 0 \\ A_{2}^\trans A_1 + A_{4}^\trans D_3 & A_{4}^\trans D_{4}
  \end{bmatrix}+\begin{bmatrix}
    C_{1}^\trans C_1 & 0 \\ C_{2}^\trans C_1 & 0
  \end{bmatrix}\]
  
  This implies that $A_{4}^\trans D_4 = I$, so the  $(n-r) \times (n-k)$ matrix $D_4 $  is injective. Therefore, $(n-k) \leq (n-r)$, and thus $r\leq k$. Since the role of $k$ and $r$ is symmetric up to replacing $S$ by its inverse, we finally get that $r=k$, and that $A_{4}^\trans $ is invertible. We can now eliminate $D_4 = A_{4}^{-\trans}$  and reformulate the third  constraint as:
  
  \[\begin{cases}
    I=A_{1}^\trans A_{1}+ C_{1}^\trans C_{1} \\
    D_3 = -A_{4}^{-\trans}A_{2}^\trans A_1 - A_{4}^{-\trans}C_{2}^\trans C_1 \\
    
  \end{cases}\]
  
  Allowing to eliminate $D_3$. The second  constraint gives us:
  
  \[\begin{bmatrix}
    A_{1}^{\trans} & 0 \\ A_{2}^{\trans} & A_{4}^{\trans}
  \end{bmatrix}\begin{bmatrix}
    C_1 & C_2 \\ C_3 & C_4
  \end{bmatrix}=\begin{bmatrix}
    C_{1}^\trans & C_{3}^\trans \\ C_{2}^\trans & C_{4}^\trans
  \end{bmatrix}\begin{bmatrix}
    A_{1} & A_{2} \\ 0 & A_{4}
  \end{bmatrix}\]
  
  In other words: 
  
  \[\begin{cases}
    A_{1}^\trans C_{1}= C_{1}^\trans A_{1}   \\
    A_{2}^\trans C_{1} + A_{4}^\trans C_{3} =  C_{2}^\trans A_{1}\\ 
    A_{2}^\trans C_{2} + A_{4}^\trans C_{4} =  C_{2}^\trans A_{2} + C_{4}^\trans A_{4}\\
  \end{cases}\]
  
  We can thus eliminate $C_{3} = A_{4}^{-\trans} C_{2}^\trans A_{1} - A_{4}^{-\trans} A_{2}^\trans C_{1}$.
  
  If we gather the constraints we have so far, we have only $A_1$, $A_2$, $A_4$, $C_1$, $C_2$ and $C_4$ as degree of freedom satisfying:
  
  \[\begin{cases}
    I=A_{1}^\trans A_{1}+ C_{1}^\trans C_{1} \\
    A_{1}^\trans C_{1}= C_{1}^\trans A_{1}   \\
    A_{2}^\trans C_{2} + A_{4}^\trans C_{4} =  C_{2}^\trans A_{2} + C_{4}^\trans A_{4}
  \end{cases}\]
  
  We get the following symplectomorphism:
  
  \[\begin{bmatrix}
    A_{1} & A_{2} & -C_1 & 0\\
    0 & A_{4} & 0 & 0\\
    C_1 & C_2 & A_{1} & 0\\
    A_{4}^{-\trans} C_{2}^\trans A_{1} - A_{4}^{-\trans} A_{2}^\trans C_{1} & C_4 & -A_{4}^{-\trans}A_{2}^\trans A_1 - A_{4}^{-\trans}C_{2}^\trans C_1 & A_{4}^{-\trans}
  \end{bmatrix}\]
  
  We can now decompose it. First $A_4 \in \Gl[n-r]$ only acts on the lower part:
  
  \[\begin{bmatrix}
    I & 0 & 0 & 0\\
    0 & A_{4} & 0 & 0\\
    0&0& I & 0\\
    0& 0 & 0 & A_{4}^{-\trans}
  \end{bmatrix}\begin{bmatrix}
    A_{1} & A_{2} & -C_1 & 0\\
    0 & I & 0 & 0\\
    C_1 & C_2 & A_{1} & 0\\
    C_{2}^\trans A_{1} -  A_{2}^\trans C_{1} & A_{4}^{\trans} C_4 & -A_{2}^\trans A_1 - C_{2}^\trans C_1 & I
  \end{bmatrix}\]
  
  Then, we can see the symplectic rotation
  \[\begin{bmatrix}
    A_{1} & -C_1 \\ C_1 & A_{1}
  \end{bmatrix}\]
   acting on the upper part:
  
  \[\begin{bmatrix}
    I & 0 & 0 & 0\\
    0 & A_{4} & 0 & 0\\
    0&0& I & 0\\
    0& 0 & 0 & A_{4}^{-\trans}
  \end{bmatrix}\begin{bmatrix}
    I & A_{2} & 0 & 0\\
    0 & I & 0 & 0\\
    0 & C_2 & I & 0\\
    C_{2}^\trans  & A_{4}^{\trans} C_4 & -A_{2}^\trans  & I
  \end{bmatrix}\begin{bmatrix}
    A_{1} & 0 & -C_1  & 0\\
    0 & I & 0 & 0\\
    C_1 &0& A_{1} & 0\\
    0& 0 & 0 & I
  \end{bmatrix}\]
  
  Finally, denoting $\beta\coloneqq A_{4}^{\trans} C_4 + A_{2}^{\trans} C_2  $, we know that $\beta$ is symmetric, thus we end up with a symplectomorphism of the form:
  
  \[\begin{bmatrix}
    I & 0 & 0 & 0\\
    0 & A_{4} & 0 & 0\\
    0&0& I & 0\\
    0& 0 & 0 & A_{4}^{-\trans}
  \end{bmatrix}\begin{bmatrix}
    I & A_{2} & 0 & 0\\
    0 & I & 0 & 0\\
    0 & 0 & I & 0\\
    0  & 0 & -A_{2}^\trans  & I
  \end{bmatrix}\begin{bmatrix}
    I & 0  & 0 & 0 \\
    0 & I & 0 & 0\\
    0&C_2 &I & 0\\
    C_{2}^\trans &\beta & 0 & I
  \end{bmatrix}\begin{bmatrix}
    A_{1} & 0 & -C_1  & 0\\
    0 & I & 0 & 0\\
    C_1 &0& A_{1} & 0\\
    0& 0 & 0 & I
  \end{bmatrix}\]
  
  Then in $\GSA[\R]$: 
  
  \tikzfig{SympGauss/uniqueness}
  
  Where $U$ is a symplectic rotation.
\end{proof}

\begin{proof}[Proof of theorem~\ref{thm:sympgauss_completeness}]
  
  We define an interpretation prop morphism $\interp{\_}:\GSA[\R]^+ \to \ALR[\C]^+$ by sending all generator from $\GSA[\R]$ to the corresponding Lagrangian subspaces and vacuum generators to the subspace corresponding to a white node with phase $(0,i)$.
  
  By lemma~\ref{lm:sympgauss_to_NF} we know that any diagram in $\GSA[\R]^+$ can be put in pseudo normal form, but by proposition~\ref{prop:positivechar}, we see that replacing the vacuum generator by a white node with phase $(0,i)$, we can represent all positive subspaces. In other words the interpretation of $\GSA[\R]^+ $ into positive subspaces is surjective. It only remains to prove that this interpretation is also injective.
  
  Let $X $ and $Y $ be two diagrams in $\GSA[\R]^+ $ such that $\interp{X }=\interp{Y }$. Using lemma~\ref{lm:sympgauss_to_NF} we can assume that both diagrams are in pseudo normal form. We have:
  
  \[\interp{\tikzfig{SympGauss/comp1}}=\interp{\tikzfig{SympGauss/comp2}}\]
  
  And then, by lemma~\ref{lm:sympgauss_uniq} it follows that $S= S_x (S_y )^{-1}$ is of the form:
  
  \[\tikzfig{SympGauss/uniqueness}\]

  with $\beta$ symmetric, $U$ symplectic rotation and $A_4 $ invertible. Therefore:
  
  \[\tikzfig{SympGauss/comp5}\] 
\end{proof}

\end{document}